\documentclass[a4paper,12pt, reqno]{amsart}
\usepackage{mathrsfs}
\usepackage{amsmath}
\usepackage[height=22cm, width=16.5cm, hmarginratio={1:1}]{geometry}
\usepackage{diagbox}
\usepackage{tikz}
\usetikzlibrary{positioning, shapes.geometric}
\usepackage{newclude}
\usepackage{appendix}
\usepackage{extarrows}

\usepackage[hidelinks,hyperindex]{hyperref}
\usepackage{array}
\usepackage{dsfont}
\newtheorem{manualtheoreminner}{Theorem}
\newenvironment{manualtheorem}[1]{%
	\renewcommand\themanualtheoreminner{#1}%
	\manualtheoreminner
}{\endmanualtheoreminner}

\newcommand{\<}{\langle}
\renewcommand{\>}{\rangle}

\def\FV{\mathcal{V}}
\def\rR{\mathscr R}
\def\rV{\mathscr V}

\def\Gr{\mathrm{Gr}}
\def\KW{\mathrm{KW}}

\newcolumntype{C}[1]{>{\centering\arraybackslash$}m{#1}<{$}}
\newlength{\mycolwd}                                         
\newlength{\mycolwdm}                                         
\newlength{\mycolwda}                                         
\newlength{\mycolwdb}                                         
\newlength{\mycolwdc}                                         
\newlength{\mycolwdd}                                         
\newlength{\mycolwddd}                                         
\newlength{\mycolwddc}                                         
\newcommand{\givz}{{u}}

\newcommand{\Tr}{\mathrm{Tr}}

\newcommand{\Wit}{\mathscr{W}}

\usepackage {color,graphicx,psfrag, verbatim,amssymb,amscd,enumerate,subfigure}

\def\beq{\begin{equation}}
\def\eeq{\end{equation}}
\newcommand{\be}{\begin{equation*}}
\newcommand{\ee}{\end{equation*}}

\DeclareMathOperator{\Res}{Res}

\DeclareMathOperator{\diag}{diag}

\newtheorem{dummy}{}[section]
\newtheorem{lemma}[dummy]{Lemma}
\newtheorem{proposition}[dummy]{Proposition}
\newtheorem{theorem}[dummy]{Theorem}
\newtheorem{corollary}[dummy]{Corollary}

\theoremstyle{definition}

\newtheorem{definition}[dummy]{Definition}
\newtheorem{example}[dummy]{Example}

\newtheorem{remark}[dummy]{Remark}

\usepackage{graphicx}

\newcommand{\pd}{\partial}

\newcommand{\cD}{\mathcal{D}}

  \newcommand{\cC}{\mathcal{C}}

\newcommand{\Mbar}{\overline{\mathcal M}}

\author{Shuai Guo}
\address{Shuai Guo, School of Mathematics Science, Peking University,
	No 5. Yiheyuan Road,  Beijing 100871, China}
\email{guoshuai@math.pku.edu.cn}

\author{Ce Ji}
\address{Ce Ji, Department of Mathematical Sciences, Tsinghua University,
	No 1. Tsinghuayuan, Beijing 100084, China}
\email{cji@tsinghua.edu.cn}

\author{Chenglang Yang}
\address{Chenglang Yang, Institute for Math and AI, Wuhan University}
\email{yangcl@whu.edu.cn}

\author{Qingsheng Zhang}
\address{Qingsheng Zhang, School of Sciences,  Great Bay University,
	Great Bay Institute for \newline Advanced Study,
	Dongguan, 523000, China}
\email{zhangqingsheng@gbu.edu.cn}

\begin{document}

\title{Generalized Kontsevich model, topological recursion, and $r$-spin theory}

\begin{abstract}
	By employing polynomial-reduced KP integrability, combined with the string equation, this work establishes explicit relationships between the generalized Kontsevich model, the topological recursion of the spectral curve, and the geometry of moduli spaces of $r$-spin curves. 
	For the generalized Kontsevich model with a polynomial potential, we derive an explicit formulation and provide a proof of these widely expected correspondences.
	Furthermore, the method is extended to the cases with admissible deformed potentials, where the corresponding geometric theory is a deformed version of $r$-spin theory.
\end{abstract}

\maketitle

\setcounter{section}{-1}
\setcounter{tocdepth}{1}

\section{Introduction}
The integrability is one of the most important structures in modern mathematical physics and has led to great progress in many related areas, particularly in the geometry of moduli spaces.
The famous Witten Conjecture~\cite{Wit91} states that,
the generating series of certain intersection numbers over the moduli spaces of stable curves satisfies the KdV hierarchy.
This conjecture was first proved by Kontsevich~\cite{Kont92}.
Later, the explicit relationships between enumerative geometries and integrable hierarchy were intensively studied in the literature.
For example, the generalized Witten Conjecture for $r$-spin theory \cite{FJR13,FSZ06,Wit93}, the Toda hierarchy conjecture of the Gromov--Witten theory of $\mathbb {CP}^1$~\cite{DZ04,EHY95,Get04,OP06}, 
and the integrable structure of the open Gromov--Witten theory of toric Calabi--Yau 3-folds~\cite{ADKMV,DZ,WYZ}.
Recently,
a generalization of the Witten Conjecture through spectral curves has been proposed in \cite{GJZ23}, 
which establishes the polynomial-reduced KP integrability of certain enumerative geometry via Eynard--Orantin's topological recursion~\cite{EO07}.

In the proof of the Witten Conjecture~\cite{Kont92},
Kontsevich introduced a new matrix model, now known as the Kontsevich model.
This model was then generalized to the Generalized Kontsevich Model (GKM)~\cite{KMMMZ,Kont92} (see also~\cite{ACM12,Ale21a}), 
which is defined as a Hermitian matrix model with an external field and a potential $V(z)$.
When the potential is specified as $V(z)=\frac{z^{r+1}}{r(r+1)}$, 
the GKM coincides with the generating series of Witten's $r$-spin theory, 
a result that follows directly from the generalized Witten Conjecture. 
It further follows from~\cite{BE13,BE17} that, with this potential, the GKM coincides with the Bouchard--Eynard topological recursion of the $r$-spin curve. 
In particular, for the Kontsevich model case where $V(z)=\frac{z^3}{6}$, its correspondence with the topological recursion of the Airy curve is well-known ~\cite{EO07} (see also \cite{Zhou13}).

While the integrability of the GKM is preserved under deformations of the potential $V(z)$, the corresponding enumerative geometry and topological recursion are generally not fully understood, even in the case of polynomial potentials.
In this work, we exploit the polynomial-reduced KP integrability, together with the string equation, to establish an explicit relationship between the GKM, the $r$-spin theory, and the topological recursion.
We will show that this method is applicable not only to the polynomial potential cases, but also to the deformed cases introduced by Alexandrov~\cite{Ale21a}.

\subsection{Main results}
Given a potential function $V(z)$, the associated GKM is defined by
$$
Z_{V,N}(\Lambda)=\frac{1}{\mathcal{C}_V}
\int [d\Phi] \exp\Big(-\frac{1}{\hbar}
\Tr\big(V(\Phi)-\Phi V'(\Lambda)\big)\Big),
$$
where the integral is over the space of $N\times N$ Hermitian matrices, $[d\Phi]$ is the Lebesgue measure,
and $\Lambda=\diag\{\lambda_1,\cdots,\lambda_N\}$.
See \S \ref{sec:GKMM} for more details of the definition of this matrix model.
Introduce the Miwa variables $T_k=\frac{1}{k}\Tr(\Lambda^{-k})$, $k\geq1$.
By letting $N\rightarrow+\infty$, $T_k$ can be viewed as independent variables.
The partition function of the GKM is defined by
$$
Z_{V}({\bf T};\hbar):= Z_{V,N}(\Lambda)|_{N\to +\infty}.
$$ 
Furthermore, given the partition function, the associated connected genus $g$ $n$-differential $\omega^{V}_{g,n}(z_1,\cdots,z_n)$ for the GKM is defined by:
$$
\omega^{V}_{g,n}(z_1,...,z_n):=
[\hbar^{2g-2+n}]\sum_{k_1,...,k_n=1}^{\infty}
\left.\frac{\partial^n \log Z_V(\mathbf{T};\hbar)}
{\partial T_{k_1} \dots \partial T_{k_n}}
\right|_{\mathbf{p}=0} \frac{dz_1^{-k_1}}{k_1} \cdots \frac{dz_n^{-k_n}}{k_n}.
$$
Here $[h^k]f(\hbar)$ stands for the coefficient of $\hbar^k$ in the series $f(\hbar)$.

The first main result of the present paper is
\begin{manualtheorem}{1}[=Theorem~\ref{thm:GKM-TR}+Theorem~\ref{thm:GKM-TR-deformed}]
\label{thm:main1}
	For a generic polynomial potential  $V(z)\in \mathbb C[z]$ of degree $r+1\geq 3$, we denote by
	$\omega^{V}_{g,n}$ the multi-differentials of the GKM with potential $V(z)$ and by $\omega_{g,n}$ the multi-differentials of the topological recursion of the following spectral curve (see~\S \ref{sec:TR-CohFT-KP})
	\begin{align}\label{eqn:spec data}
		\cC=(\mathbb{P}^1,\quad  x(z)=V'(z),\quad y(z)=z).
	\end{align}
	Then for any $2g+n-2>0$ and $z_1,\cdots,z_n$ near $\infty$, we have
	\beq\label{eqn:omegaV=omega}
	\omega^{V}_{g,n}(z_1,...,z_n)=\omega_{g,n}(z_1,...,z_n).
	\eeq
	Furthermore, for an admissible deformed potential $V_r^{\epsilon}(z)$ such that its second derivative is a rational function of the form in \eqref{eqn:def Uepsilon},
	the identity ~\eqref{eqn:omegaV=omega} still holds for $n\geq 1$.
\end{manualtheorem}

In the above theorem,
we provide a direct and explicit formulation of the connection between the GKM with polynomial potential and the topological recursion of the spectral curve \eqref{eqn:spec data}.
Although the connection for the polynomial cases is widely expected, to the best of our knowledge, such a clear result and fully detailed proof does not exist in the literature.
In \cite{EO07},
a relation between matrix models with external field and topological recursion has been studied (see their \S 10.4).
Moreover, a similar spectral curve whose $x(z)$ is also a polynomial but with a distinct function $y(z)$ was studied in \cite{BCEG25} (see their Definition 3.2), and the authors established its relation to the generalized Kontsevich graphs.
It is worth mentioning that our result also holds for the deformed potential case (see \S \ref{sec:deformed-GKM} for more details),
which is proposed by Alexandrov in \cite{Ale21a} when he studied the relation between GKM and Hodge integrals.

As a corollary of Theorem~\ref{thm:main1},
we are able to study Gukov--Su{\l}kowski's proposal about a kind of quantization of the spectral curve \cite{GS12}.
We refer the reader to~\cite{EGMO24} and the references therein for related results and further discussions on this topic.

\begin{corollary}
	For the spectral curve $\cC=(\mathbb{P}^1, x(z), y(z)=z)$,
	where $x(z)$ is an arbitrary polynomial,
	the Gukov--Su{\l}kowski conjecture holds.
\end{corollary}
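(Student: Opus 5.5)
The plan is to derive the corollary from Theorem~\ref{thm:main1} together with the KP integrability of the GKM partition function. I will treat the Gukov--Su{\l}kowski conjecture in its standard form. For the spectral curve $\cC$ with $x(z)=V'(z)$ and $y(z)=z$, build the wave function
\[
\psi(x;\hbar)=\exp\Big(\sum_{g\ge0,n\ge1}\frac{\hbar^{2g-2+n}}{n!}\int_{\infty}^{z}\!\!\cdots\!\int_{\infty}^{z}\big(\omega_{g,n}-\delta_{g,0}\delta_{n,2}\tfrac{dx_1dx_2}{(x_1-x_2)^2}\big)\Big),
\]
with the unstable terms $(0,1)$ and $(0,2)$ regularized in the usual way. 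The conjecture asserts that $\psi$ is annihilated by a quantum curve $\hat A(\hat x,\hat y)$, where $\hat x=x$ and $\hat y=\hbar\,\partial_x$, and that $\hat A$ quantizes the classical relation $A(x,y)=x-V'(y)=0$. Since $\deg V\geq 3$ is handled by Theorem~\ref{thm:main1}, I first record that a generic polynomial $x(z)$ of degree $r\ge2$ is of the form $V'(z)$. Degenerate polynomials, meaning those with non-simple ramification, would be handled by a limiting or continuity argument in the coefficients of $V$. The degree-$1$ case has no ramification and is trivial.

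\textbf{Step 1.} Write the wave function as a Miwa-shifted principal specialization of the tau function. By Theorem~\ref{thm:main1}, each stable $\omega_{g,n}$ equals $\omega^V_{g,n}$. Integrating these from $\infty$ and summing over $g,n$ reproduces the standard formula $\psi(z)=e^{\frac1\hbar\int ydx+\dots}\,Z_V(\mathbf T+[z^{-1}])/Z_V(\mathbf T)\big|_{\mathbf T=0}$, up to the unstable prefactor, with $[z^{-1}]_k=\hbar z^{-k}/k$ after normalization. Concretely, $\psi$ becomes the Baker--Akhiezer function of the KP tau function $Z_V$ evaluated at $\mathbf T=0$, multiplied by the explicit factor $\exp(\frac1\hbar(zV'(z)-V(z)))$ coming from $\omega_{0,1}=y\,dx$. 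The $\omega_{0,2}$ regularization matches the standard Cauchy kernel subtraction.

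\textbf{Step 2.} Obtain the quantum curve from the integrable structure. For $Z_V$ the standard facts are these:
\begin{itemize}
\item[(i)] the tau function is $r$-reduced, which is the polynomial-reduced KP property the paper uses, so $L^r=(V'(L))_+$-type Lax operators are differential in $x=T_1$;
\item[(ii)] it satisfies the string equation.
\end{itemize}
These combine, as in Kac--Schwarz theory, into a Kac--Schwarz operator. The GKM wave function has the explicit matrix-integral form $\psi(\lambda)\propto\sqrt{V''(\lambda)}\,e^{-\frac1\hbar(\lambda V'(\lambda)-V(\lambda))}\int d\phi\, e^{-\frac1\hbar(V(\phi)-\phi V'(\lambda))}$. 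I would use the $1\times1$ integral directly. The function $\Phi(x)=\int d\phi\,e^{-\frac1\hbar(V(\phi)-x\phi)}$ obeys the ODE $\big(V'(\hbar\partial_x)-x\big)\Phi=0$, as one sees by integration by parts. Undoing the prefactor and passing from $\lambda$ to $x=V'(\lambda)$ gives an ordinary differential operator $\hat A=V'(\hbar\partial_x)-x$ plus $\hbar$-corrections, which is exactly a quantization of $A(x,y)=0$.

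\textbf{Main obstacle.} The hardest point is Step 1: checking that the normalizations agree exactly. This means the sign and the variable $z$ versus $x$, the prefactor $\sqrt{V''}$ produced by $\omega_{0,2}$, and the half-integer $\hbar$ shift. All of these must match the conjecture's normalization so that the specialized tau function really is $\Phi$ up to elementary factors. I would first fix all of them by comparing the leading WKB terms in the order $\omega_{0,1}$, $\omega_{0,2}$, $\omega_{1,1}$, $\omega_{0,3}$. After that, Theorem~\ref{thm:main1} ensures the remaining terms agree.
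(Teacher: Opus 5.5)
Your proposal is correct and follows essentially the paper's route: Theorem~\ref{thm:main1} identifies the Baker--Akhiezer function with $e^{\hbar^{-1}S_0+S_1}$ times the GKM one-point function $\Phi^V_1(z)=Z_V(T_k=z^{-k}/k)$. The annihilation then comes from the integration-by-parts identity that underlies equation~\eqref{eqn:qsc}; applying it directly to the $1\times1$ integral, as you do, is the same computation as conjugating $x(Q_V)-x(z)$. In fact $e^{\hbar^{-1}S_0}=e^{(zV'-V)/\hbar}$ and $e^{S_1}=V''(z)^{-1/2}$ cancel the prefactor exactly, so there are no $\hbar$-corrections and no need to calibrate against $\omega_{1,1},\omega_{0,3}$: the quantum curve is precisely $x(\hat y)-\hat x$. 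Like the paper, you really only cover $x$ with simple critical points, since your limiting argument for degenerate ramification is not actually supplied.
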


Now we turn to the geometric side: the Witten's $r$-spin theory, which is the intersection theory on the moduli space of the $r$-spin curves.
This theory was introduced by Witten~\cite{Wit93} and its mathematical definition was established by many mathematicians~\cite{Jar00,PV01,Chi06,FJR13}.
The generalized Witten Conjecture states that via an explicit and simple coordinate transformation ${\bf t=t(T)}$, 
the total descendent potential $\cD^{r}({\bf t};\hbar)$ of the $r$-spin theory gives a tau-function of the $r$KdV hierarchy which is uniquely determined by the string equation. 
This conjecture was proved by Faber--Shadrin--Zvonkine~\cite{FSZ06} (see also~\cite{FJR13}).

In this paper, we consider the $t$-shifted descendent invariants $\<-\>_{g,n}^{r,t}$ of the $r$-spin theory 
(see \S \ref{sec:rspin-geo} for the definition). 
Then we establish the connection between the shifted $r$-spin theory and the GKM with polynomial potential.

The second main result in present paper is
\begin{manualtheorem}{2}\label{thm:main2}
	For a polynomial $V(z)\in \mathbb C[z]$ of degree $r+1\geq 3$, we denote by
	$\omega^{V}_{g,n}$ the multi-differentials of the GKM with potential $V(z)$.
	Consider paths $\tilde\gamma_i=e^{\frac{2i\pi \sqrt{-1}}{r}}[0,\infty)\subset \mathbb C$, and $\gamma_i:=\tilde\gamma_0-\tilde\gamma_i$, we introduce a sequence of classes by
	$$
	\Phi_i(\givz):=\frac{1}{\sqrt{-r}}\sum_{j=1}^{r-1}\big(e^{\frac{2ij\pi\sqrt{-1}}{r}}-1\big)\, \Gamma\Big(\frac{j}{r}\Big)\, \givz^{\frac{j}{r}}\, \phi_{r-1-j},
	$$
	then we have for $2g-2+n>0$,
	$$
	\int_{\gamma_{i_1}}\cdots \int_{\gamma_{i_n}}e^{-\sum_i V'(z_{i})/\givz_i}
	\omega_{g,n}^{V}(z_1,\cdots,z_n)
	=C_r^{2g-2+n}\cdot \left\<\frac{\Phi_{i_1}(\givz)}{\givz_1-\psi_1},\cdots,\frac{\Phi_{i_n}(\givz)}{\givz_n-\psi_n}\right\>_{g,n}^{r,t(a)},
	$$
	where $C_r$ is a constant determined by $V(z)$, $t(a)$ denotes the mirror map identifying the parameters of the $r$-spin theory with those of $V(z)$, and $\<-\>^{r,t}_{g,n}$ stands for the $t$-shifted descendent invariants of the $r$-spin theory.
	See \S \ref{sec:Geo-GKM} for explicit explanation of notations.
\end{manualtheorem}

We also consider the GKM with the deformed potential introduced by Alexandrov~\cite{Ale21a}. For an admissible deformed potential $V^{\epsilon}_{r}(z)$ (see \S \ref{sec:deformed-GKM} for details), one can introduce an $R$-matrix $\mathscr R^{r,\epsilon}$ and a $T$-vector $\mathscr T^{r,\epsilon}$, both of which are determined by $V_r^{\epsilon}$. The action of these two objects on Witten's $r$-spin class gives rise to a new cohomological field theory (CohFT), whose partition function we denote by $\mathscr D^{r,\epsilon}({\bf t};\hbar)$.

The third main result of this paper is
\begin{manualtheorem}{3}\label{thm:main3}
The potential function of the GKM with the admissible deformed potential $V_r^{\epsilon}(z)$ coincides with the one defined by the above quantization action on the $r$-spin theory:
	$$
	Z_{V_r^{\epsilon}}\big({\bf T};\hbar/\sqrt{-r}\, \big)
	=\mathscr D^{r,\epsilon}(\hbar\cdot {\bf t(T)};\hbar)/\mathscr D^{r,\epsilon}({\bf 0};\hbar).
	$$
	See \S \ref{sec:deformed-GKM-geo} for explicit explanation of notations.
\end{manualtheorem}

\subsection{Outline of the paper}
This paper is organized as follows.  
In \S~\ref{sec:KP}, we review the KP hierarchy and Sato Grassmannian.
In \S~\ref{sec:GKMM}, we recall the GKM and its integrability, after which we derive a differential version of the string equation for the model.
Sections~\ref{sec:TR-GKM} and \ref{sec:rspin-GKM} investigate the connections between the GKM with polynomial potential and, respectively, topological recursion and $r$-spin theory.
Finally, in \S~\ref{sec:deformed-GKM}, we generalize these results to the deformed cases.

\textbf{Acknowledgements.}
This work is partially supported by the National Key Research and Development Program of China 2023YFA1009802 
and the National Natural Science Foundation of China 12225101, 12401079.

\section{Review of the KP hierarchy and Sato Grassmannian} 
\label{sec:KP}
In this section, we recall some basic properties of the KP hierarchy,
especially its relations to the Sato Grassmannian and Kac--Schwarz operators.
We recommend the references \cite{DJM00,KS91,Sato81}.

\subsection{Fermionic Fock space and Sato Grassmannian}
\label{subsec:fermionic tau}
In this subsection,
we review the fermionic Fock space and Sato Grassmannian.
The fermionic Fock space is an infinite dimensional linear space.
It is known that, the standard Pl{\"u}cker embedding gives an embedding from the ordinary Grassmannian $G(k,n)$
to the projective space $\mathbb{P}(\wedge^k \mathbb{C}^n)$.
In the Sato's theory \cite{Sato81}, 
Sato Grassmannian comes from an infinite dimensional version of this embedding,
which provides a beautiful description of the tau-functions of the KP hierarchy.

The fermionic Fock space is an infinite dimensional generalization of the linear space $\wedge^k \mathbb{C}^n$ spanned by semi-infinitely many wedges.
This name comes from an action of fermionic operators on this space.
See \S 3 in \cite{DJM00} for more details.
We start from an infinite dimensional vector space
\begin{align*}
	\FV=z^{1/2}\cdot \mathbb{C}[\![z^{-1}]\!][z].
\end{align*}
It has a basis $\{z^{k+1/2}=\underline{k+1/2}\  |{k\in\mathbb{Z}}\}$.
A sequence of decreasing half integers $S=\{s_1>s_2>\cdots\}$ is called admissible if
$$
|S_+| = | \{s_i\in S, s_i>0\}|<\infty,\quad  |S_-|=  | \{s_i\in \mathbb{Z}_{\leq 0}-\tfrac{1}{2}, s_i\notin S\}|<\infty.
$$
Then, the fermionic Fock space $\mathcal{F}$ is the space of the admissible semi-infinite dimensional subspace of $\FV$.
That is to say,
\begin{align*}
	\mathcal{F}:=\text{span}_{\mathbb{C}}
	\{|S\rangle=\underline{s_1}\wedge \underline{s_2}\wedge\underline{s_3}\wedge\cdots, S\ \text{admissible}\},
\end{align*}
where we admit the number of summands to be infinity.
For an element $|S\rangle\in\mathcal{F}$,
its charge is defined by
\begin{align*}
	\text{charge}(|S\rangle):=|S_+|-|S_-|.
\end{align*}
We denote by $\mathcal{F}^{(0)}$ the charge 0 subspace of $\mathcal{F}$.
In this paper, we restrict us to this subspace for convenience.

Following Sato's theory \cite{Sato81},
the set of tau-functions of the KP hierarchy can be described by the Sato Grassmannian,
which is embedded in $\mathbb{P}\mathcal{F}^{(0)}$.
Below,
we use the Sato Grassmannian to describe tau-functions of the KP hierarchy.

An element $|v\rangle\in\mathcal{F}^{(0)}$ is call a fermionic tau-function of the KP hierarchy if and only if $|v\rangle$ is of the following form
\begin{align}\label{eqn:fi wedge}
	f_0(z)\wedge f_1(z) \wedge \cdots,
\end{align}
where each $f_i(z)\in \FV=z^{1/2}\cdot\mathbb{C}[\![z^{-1}]\!][z]$.
Intuitively,
up to a non-zero constant,
the vector $|v\rangle$ is in the image of the infinite dimensional version of Pl{\"u}cker embedding.
That means the vector $|v\rangle\in\mathcal{F}^{(0)}$ (up to a non-zero constant) can be equivalently described as a semi-infinite dimensional subspace of $\FV$.
We call these functions $f_i(z)$ the basis vectors for the fermionic tau-function $|v\rangle$.
It should be noticed that these basis vectors are not unique.

In this paper,
we focus on the fermionic tau-functions in the big cell of the Sato's Grassmannian $\Gr^{(0)}$.
Elements in the big cell means it has an admissible basis description.
That is to say,
the fermionic tau-function $|v\rangle$ is of the form
\begin{align}\label{eqn:v as phi}
	|v\rangle=c\cdot\varphi_0(z)\wedge \varphi_1(z) \wedge \cdots
\end{align}
such that
\begin{equation}\label{adba}
	\varphi_k = z^{k+\frac{1}{2}}+\sum_{i=-k+1}^{\infty}a_{k,i}z^{-i+\frac{1}{2}},
	\quad a_{k,i}\in\mathbb{C},
	\quad k=0,1,2,\cdots.
\end{equation}
In this case,
the set $\{\varphi_k(z)\}_{k=0}^\infty$ is called an admissible basis of $|v\rangle$.

Notice that,
the admissible basis for a fermionic tau-function is still not unique.
Actually,
$\varphi_i$ could be replaced by $\varphi_i$ adding any linear summation of $\varphi_j$ with $0\leq j<i$.
To get a unique choice of the admissible basis,
we introduce the so-called canonical basis $\{\tilde\varphi_k\}_{k=0}^{\infty}$,
which is an admissible basis and additionally satisfies
\begin{equation}\label{canobasis}
	\tilde\varphi_k = z^{k+\frac{1}{2}}+\sum_{i=0}^{\infty}b_{k,i}z^{-i-\frac{1}{2}}, \quad k=0,1,2,\cdots.
\end{equation}
These coefficients $b_{k,i}$ are referred to as the affine coordinates of the fermionic tau-function $|v\rangle$. 
Equation \eqref{eqn:v as phi} implies that the fermionic tau-function $|v\rangle$ is uniquely determined by the coordinates $b_{k,i}$ and the constant $c$. 
Specifically, the basis of the form \eqref{canobasis} induces a canonical isomorphism from the big cell of the Sato Grassmannian $\Gr^{(0)}$ to an affine space, 
under which the numbers $b_{k,i}$ serve as coordinates in the affine space.

\subsection{Boson-fermion correspondence}
The boson-fermion correspondence provides an isomorphism of the (charge zero) fermionic Fock space and the space of formal power series of variables $\mathbf{T}=(T_1,T_2,...)$.
More precisely,
there is an isomorphism
\begin{align*}
	\Phi:\quad \mathcal{F}^{(0)} \quad&\rightarrow \quad\mathbb{C}[\![\mathbf{T}]\!]\\
	|v\rangle \quad&\mapsto \quad
	\Phi\big( |v\rangle\big)
\end{align*}
between these two linear spaces.
We omit the concrete construction of this isomorphism $\Phi$ here and recommend the reader to the \S 5 of \cite{DJM00}.

Under the boson-fermion correspondence,
a vector $|v\rangle\in\mathcal{F}^{(0)}$ is mapped to a formal power series $\tau_{|v\rangle}(\mathbf{T}):=\Phi(|v\>)\in\mathbb{C}[\![\mathbf{T}]\!]$,
called the bosonic tau-function or tau-function for short.
Moreover,
if $|v\rangle$ is a fermionic tau-function of the KP hierarchy,
i.e. it is of the form \eqref{eqn:fi wedge},
the Pl{\"u}cker relations satisfied by $|v\rangle$ are transferred to the following Hirota quadratic equation satisfied by $\tau_{|v\rangle}(\mathbf{T})$,
\begin{align*}
	\mathop{\Res}_{z=\infty}\,  e^{\xi(\mathbf{T}-\mathbf{T}',z)}
	\tau_{|v\rangle}(\mathbf{T}-[z^{-1}])
	\tau_{|v\rangle}(\mathbf{T}'+[z^{-1}]) dz =0,
\end{align*}
where $\xi(\mathbf{T},z)=\sum_{k=1}^\infty T_k z^k$ and
$$\mathbf{T}\pm[z^{-1}]=(T_1\pm z^{-1},T_2\pm z^{-2}/2,\dots).$$

The affine coordinates  $\{b_{k,i}\}_{k,i=0}^{\infty}$ of a fermionic tau-function are related to certain special evaluations of the corresponding bosonic tau-function (cf. \cite{BBT03,DJM00,Zhou15}).
First, let us introduce the fermionic one-point function and the fermionic dual one-point function:
\begin{align*}
	\Psi(z) := 1+\sum_{i=0}^\infty b_{0,i} z^{-i-1},\qquad
	\Psi^*(z) := 1+\sum_{i=0}^\infty b_{i,0} z^{-i-1},
\end{align*}
Then, via the boson-fermion correspondence,
one has (see \S 6.3 in \cite{DJM00}, and also equations (129) and (130) in \cite{Zhou15}),
\begin{equation}\label{defnone}
	\Psi(z)=\frac{\tau(\mathbf{T})|_{T_k=-z^{-k}/k }}{\tau(\mathbf{T})|_{T_k=0}} ,\qquad
	\Psi^*(z)=\frac{\tau(\mathbf{T})|_{T_k=z^{-k}/k }} {\tau(\mathbf{T})|_{T_k=0}}.
\end{equation}
Second, consider the fermionic two-point function, which is the generating function of all the affine coordinates $\{b_{k,i}\}_{k,i=0}^\infty$ and is defined by:
\begin{align}\label{eqn:Puv as b}
	\Psi(w,z)  :=\frac{1}{w-z} + \sum_{i,j\geq 0} b_{i,j} w^{-i-1} z^{-j-1}
	\in\frac{1}{w-z}+w^{-1}z^{-1}\mathbb{C}[\![w^{-1},z^{-1}]\!].
\end{align}
Then, by the standard method using vertex operators,
one obtains (see, for example, equation (201) in \cite{Zhou15})
\begin{align*}
	\Psi(w,z) 
	=\frac{1}{w-z} \frac{\tau(\mathbf{T})|_{T_k=\frac{w^{-k}-z^{-k}}{k} }}{\tau(\mathbf{T})|_{T_k=0}} .
\end{align*}

The above specialization of the tau-function $\tau(\mathbf{T})$ of the KP hierarchy contains all information about the tau-function $\tau(\mathbf{T})$ up to the non-zero scaling constant $\tau(\mathbf{0})$,
since its expansion \eqref{eqn:Puv as b} yields all the affine coordinates of the corresponding fermionic tau-function.
Explicitly, 
\beq\label{eqn:tau-affine}
\tau({\bf T})=\tau(\mathbf{0}) \sum_{\mu} C_{\mu}\cdot S_{\mu}({\bf T}).
\eeq
Here, the summation is over all partitions $\mu$, $S_{\mu}({\bf T})$ is the Schur polynomial, and $C_{\mu}$ is given by
$$
C_{\mu}=(-1)^{d+\sum_{i=1}^d n_i}\det\big((b_{m_i,n_j})_{i,j=1}^{d}\big)
$$ 
where $(m_1,\dots,m_d|n_1,\dots,n_d)$ is the Frobenius notation of the partition $\mu$.

\subsection{$W_{1+\infty}$ algebra}
In this subsection, we review the $w_{1+\infty}$ algebra and $W_{1+\infty}$ algebra.

The ring of differential operators on $S^1$ is
\begin{align*}
	w_{1+\infty}=\text{span}_{\mathbb{C}}\{z^a\partial_z^b
	|a\in\mathbb{Z}, b\in\mathbb{Z}_{\geq0}\}.
\end{align*}
Elements in $w_{1+\infty}$ can naturally act on the fermionic Fock space $\mathcal{F}^{(0)}$ by
\begin{align*}
	l_z \cdot \big(f_0(z)\wedge f_1\wedge\cdots\big)
	:=\big(l_z\cdot f_0(z)\big)\wedge f_1(z)\wedge\cdots
	+f_0(z)\wedge \big(l_z\cdot f_1(z)\big)\wedge\cdots
	+\cdots
\end{align*}
and its linear expansion,
where $l_z\in w_{1+\infty}$,
and $f_0(z)\wedge f_1(z)\wedge\cdots$
is an element in fermionic Fock space.

As reviewed in the last subsection,
the boson-fermion correspondence gives an isomorphism between the fermionic Fock space $\mathcal{F}^{(0)}$ and the ring $\mathbb{C}[\![\mathbf{T}]\!]$ of formal power series of variables $\mathbf{T}=(T_1,T_2,\dots)$.
Under such isomorphism,
any element $l_z\in w_{1+\infty}$ can be realized as a new operator $\widehat{l_z}$ which acts on $\mathbb{C}[\![\mathbf{T}]\!]$.
This is called the $W_{1+\infty}$ realization of elements in $w_{1+\infty}$.
The concrete construction for $\widehat{l_z}$ can be obtained from the boson-fermionic correspondence and vertex operators
(cf. \cite{FKN92,KS91}).
We do not need the general formula,
so we only list some useful examples as the following
\begin{example}\label{exa:w1+infty}
	$${\renewcommand{\arraystretch}{1.5}
		\begin{array}{|c|c|c|c|}\hline
			\hbox{\small Subalgebras}
			&\hbox{\small $w_{1+\infty}$} 
			& \hbox{\small $W_{1+\infty}$ }\\\hline
			\text{Heisenberg\ algebra}
			& j_n:= -z^{n}, \ n\neq 0
			& J_n=\widehat{j_n} \\\hline
			\text{Virasoro\ algebra}
			&  l_n:= -z^{n}\left(z\frac{\partial}{\partial z} +\frac{n+1}{2}\right)
			& L_n=\widehat{l_n} \\\hline
	\end{array}}$$
	Above operators $J_n$ and $L_n$ are given by
	\begin{align*}
		J_n=\begin{cases}
			-nT_{-n},\ \ &n<0,\\
			\frac{\partial}{\partial T_n},\ \ &n>0,
		\end{cases}
	\end{align*}
	and
	\begin{align*}
		L_n=\begin{cases}
			\sum_{k=1}^{\infty}kT_k\frac{\partial}{\partial T_{n+k}}
			+\frac{1}{2}\sum_{a+b=-n,a,b>0}abT_aT_b,\ \ &n\leq0\\
			\sum_{k=1}^{\infty}(k-n)T_{k-n}\frac{\partial}{\partial T_{k}}
			+\frac{1}{2}\sum_{a+b=n,a,b>0} \frac{\partial^2}{\partial T_a \partial T_b},\ \ &n>0.
		\end{cases}
	\end{align*}
	Since $\{J_n\}$, $\{L_n\}$ satisfies the Heisenberg relations and Virasoro relations,
	they are called the Heisenberg operators and Virasoro operators, respectively.
\end{example}

The advantage of using the notation of $W_{1+\infty}$ realization of elements in $w_{1+\infty}$ is that
it provides an effective method to study the constraints satisfied by certain tau-functions of the KP hierarchy.

\subsection{Kac--Schwarz operator}
In this subsection,
we review the Kac--Schwarz operator for tau-functions of the KP hierarchy.

It is known that Virasoro constraints for the partition function of the 2D quantum gravity model is equivalent to its KdV integrability and string equation~\cite{DVV91,FKN92}.
In \cite{KS91},
Kac and Schwarz gave an explanation of these Virasoro constraints in terms of Sato Grassmannian.
More precisely,
consider the point in the Sato Grassmannian which exactly corresponds to the partition function of the 2D quantum gravity model.
This point can be also regarded as a semi-infinite dimensional subspace $\FV_{\KW}$ of $\FV=z^{1/2}\cdot \mathbb{C}[\![z^{-1}]\!][z]$.
Kac and Schwarz showed that there are some differential operators in $w_{1+\infty}$,
which preserve the space $\FV_{\KW}$,
and Virasoro operators in those Virasoro constraints comes from $W_{1+\infty}$ realization of these differential operators in $w_{1+\infty}$.
This motivates the notion of Kac--Schwarz operator.
Below,
we give a detailed description of this notion.

For a fermionic tau-function $|v\rangle$ of the KP hierarchy,
if its corresponding semi-infinite dimensional subspace in Sato Grassmannian has the following admissible basis
\begin{align*}
	\{\varphi_0(z), \varphi_1(z),\dots\},
\end{align*}
then we say $l_z\in w_{1+\infty}$ is a Kac--Schwarz operator of $|v\rangle$ if and only if
\begin{align*}
	(z^{1/2}\cdot l_z\cdot z^{-1/2}) \varphi_i \in
	\text{Span}_{\mathbb{C}}\{\varphi_0(z), \varphi_1(z),\dots\},\ \ 
	\text{for\ all\ }i\geq0.
\end{align*}
Here the conjugation via $z^{1/2}$ comes from notations,
which is not essential.

Kac--Schwarz operator provides a powerful tool to study constraints for tau-functions of the KP hierarchy.
As we have mentioned in the last subsection,
under the boson-fermion correspondence,
elements in $w_{1+\infty}$ become operators acting on the space $\mathbb{C}[\![\mathbf{T}]\!]$.
When $l_z$ is a Kac--Schwarz operator of a fermionic tau-function $|v\rangle$ of the KP hierarchy,
the $W_{1+\infty}$ realization of $l_z$ provides a constraint for the tau-function $\tau_{|v\rangle}(\mathbf{T})$ (see, for example, Lemma 3.2 in \cite{FKN92}).
More precisely,
\begin{proposition}
	For a fermionic tau-function $|v\rangle$ of the KP hierarchy,
	we denote by $\tau_{|v\rangle}(\mathbf{T})$ the corresponding bosonic tau-function.
	Then $l_z$ is a Kac--Schwarz operator for $|v\rangle$ if and only if there exists a constant $c_{l_z}$ such that
	\begin{align*}
		\widehat{l_z} \cdot \tau_{|v\rangle}(\mathbf{T})
		=c_{l_z} \cdot \tau_{|v\rangle}(\mathbf{T}).
	\end{align*}
\end{proposition}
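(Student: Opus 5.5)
The plan is to move everything to the fermionic side through the boson--fermion isomorphism $\Phi$, compute the action of $l_z$ on the wedge, and then transport the result back using that $\Phi$ intertwines the $w_{1+\infty}$-action on $\mathcal{F}^{(0)}$ with the $W_{1+\infty}$ realization. The statement then reduces to the following claim: for a semi-infinite wedge $|v\rangle=\varphi_0\wedge\varphi_1\wedge\cdots$, we have $\tilde l\cdot|v\rangle\in\mathbb{C}|v\rangle$ exactly when $\tilde l:=z^{1/2}l_zz^{-1/2}$ preserves $W:=\mathrm{Span}\{\varphi_i\}$. We must interpret the action with the usual normal ordering, which is what produces the constant $c_{l_z}$. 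The conjugation by $z^{1/2}$ only accounts for the half-integer indexing of the basis $\underline{k+1/2}$.

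For the direction ``Kac--Schwarz $\Rightarrow$ eigenvector'', write $\tilde l\varphi_i=\sum_j A_{ij}\varphi_j$ and apply the Leibniz rule defining the action.
\begin{itemize}
\item The term in which $\tilde l$ hits $\varphi_i$ contributes $A_{ii}\,|v\rangle$, because every other summand $A_{ij}\varphi_j$ with $j\neq i$ repeats a vector already in the wedge and vanishes.
\item Naively this gives $\big(\sum_i A_{ii}\big)|v\rangle$, but the trace generally diverges.
\item The remedy is the normal-ordered realization: regularize by subtracting the corresponding trace for the vacuum $\underline{-1/2}\wedge\underline{-3/2}\wedge\cdots$.
\item Because the $\varphi_i$ are admissible ($\varphi_k=z^{k+1/2}+\text{lower}$), $A_{ii}$ differs from the vacuum diagonal entry only for finitely many $i$, or in a convergent way after this subtraction.
\item The constant $c_{l_z}$ is the regularized trace.
\end{itemize}
This is the step I expect to be the main obstacle. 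It requires checking that the bosonic formulas in Example~\ref{exa:w1+infty}, such as $L_0$ without a constant term, correspond to exactly this normal ordering, so that $c_{l_z}$ is finite. I would handle it by citing the standard central-extension description of $\widehat{gl}_\infty$ from \cite{DJM00,KS91}.

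For the converse, suppose $\tilde l|v\rangle=c|v\rangle$. I would argue by Plücker duality. A wedge $|v\rangle$ determines $W$ as $\{f\in\FV: f\wedge|v\rangle=0\}$, and a derivation-type action $X$ satisfies $X(f\wedge|v\rangle)=(Xf)\wedge|v\rangle+f\wedge X|v\rangle$. For $f\in W$ the left side vanishes and the last term is $c\,f\wedge|v\rangle=0$. Hence $(\tilde lf)\wedge|v\rangle=0$, so $\tilde lf\in W$.

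The central correction only shifts $X|v\rangle$ by a scalar multiple of $|v\rangle$, so this argument is unaffected by it. Applying $\Phi$ then gives $\widehat{l_z}\tau_{|v\rangle}=c_{l_z}\tau_{|v\rangle}$, and conversely, which proves the proposition.
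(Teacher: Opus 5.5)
\paragraph{Where the argument fails.} Your converse direction is sound. The forward direction has a genuine gap: applying the Leibniz rule to an arbitrary admissible basis does not compute $\widehat{l_z}|v\rangle$. The semi-infinite Leibniz sum depends on how the vector is written as a wedge, so your regularized trace is not an invariant of $|v\rangle$.

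Here is a concrete counterexample. Take $W=z^{1/2}\mathbb{C}[z]$, so $\tau$ is constant, and $l_z=z$. Then $\widehat{l_z}=-\partial/\partial T_1$ and the true constant is $0$. The basis $\varphi_0=z^{1/2}$, $\varphi_k=z^{k+1/2}+z^{k-1/2}$ ($k\geq 1$) is admissible and spans $W$. It gives $z\varphi_0=\varphi_1-\varphi_0$ and $z\varphi_k=\varphi_{k+1}$. So $A_{ii}$ differs from the (zero) diagonal entries of $z$ only at $i=0$, exactly the situation your bullet points allow, yet your recipe returns $c=-1$. What is lost is a boundary contribution at level $N$ of the truncation $\varphi_0\wedge\cdots\wedge\varphi_N$, and it does not decay as $N\to\infty$.

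For the canonical basis one can check that $A_{ii}$ differs from $(\tilde l)_{ii}$ only for $i$ below the degree shift of $\tilde l$. Even then, identifying that finite sum with the normal-ordered action is precisely the step you leave to a citation, so the forward direction is not actually proved. Two smaller points. The reference vacuum compatible with admissible bases is $z^{1/2}\wedge z^{3/2}\wedge\cdots$, not $\underline{-1/2}\wedge\underline{-3/2}\wedge\cdots$. Finiteness of $c_{l_z}$ is not a real issue: $\widehat{l_z}$ is a well-defined operator, and a constant shift in its realization only shifts $c_{l_z}$.

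\paragraph{How to repair it.} Run your converse mechanism in the other direction as well. Let $\widehat X$ be the normal-ordered fermionic operator attached to $\tilde l=z^{1/2}l_zz^{-1/2}$. It satisfies exactly $\widehat X(f\wedge u)=(\tilde lf)\wedge u+f\wedge\widehat Xu$. Hence if $\tilde lW\subset W$, then $w\wedge\widehat X|v\rangle=\widehat X(w\wedge|v\rangle)-(\tilde lw)\wedge|v\rangle=0$ for all $w\in W$.

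It remains to know that a charge-zero vector annihilated by wedging with every element of $W$ lies in $\mathbb{C}|v\rangle$. For $W_0=z^{1/2}\mathbb{C}[z]$ this is immediate on basis wedges, since wedging with $z^{k+1/2}$ is injective on basis wedges not containing $z^{k+1/2}$. For a general big-cell $W$, transport this by $g=1+B$. Here $B$ sends $z^{k+1/2}\mapsto\sum_i b_{k,i}z^{-i-1/2}$ and kills $z^{-1/2}\mathbb{C}[\![z^{-1}]\!]$. Then $g$ maps $W_0$ onto $W$, and its induced action on $\mathcal{F}$ is invertible and satisfies $\hat g(w\wedge u)=(gw)\wedge\hat gu$.

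This produces $c_{l_z}$ without ever computing it, and both directions then rest on the single commutator identity. For comparison, the paper gives no argument and simply cites Lemma 3.2 of \cite{FKN92}.
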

From now on,
we use the term that $l_z$ is a Kac--Schwarz operator for the tau-function $\tau(\mathbf{T})$ of the KP hierarchy when the action of $\widehat{l_z}$ on $\tau(\mathbf{T})$ is equal to taking the multiplication with a constant.

\subsection{Polynomial reduction}
With the language of Kac--Schwarz operators,
the $r$-KdV reduction for a tau-function $\tau({\bf T})$ of the KP hierarchy is equivalent to say that $z^{r}$ is a Kac--Schwarz operator for $\tau(\mathbf{T})$.
See \cite{DJKM82,KS91} for more details.
Now we generalize this notion to the polynomial-reduced KP hierarchy.
\begin{definition}\label{def:x-red-KP}
	Let $\tau({\bf T})$ be a tau-function of the KP hierarchy, 
	we call $\tau(\bf T)$ a tau-function of the polynomial-reduced KP hierarchy if
	there exists a polynomial $x(z)\in \mathbb C[z]$ such that one of the following two equivalent conditions holds: 
	\begin{enumerate}
		\item the operator $x(z)\cdot$ is a Kac--Schwarz operator for the tau-function $\tau({\bf T})$;
		\item the action of the operator $\widehat{x(z)}$,
		which is the $W_{1+\infty}$ realization of $x(z)$,
		on the tau-function $\tau({\bf T})$ is equal to taking the multiplication with a constant.
	\end{enumerate}
	When the polynomial $x(z)$ is explicitly known, the corresponding hierarchy here is also referred to as the $x$-reduced KP hierarchy.
\end{definition}
Clearly, when $x(z)=z^r$, the $x$-reduced KP hierarchy recovers the $r$-KdV hierarchy.
\begin{remark}
	In \cite{GJZ23}, the notion of a polynomial-reduced KP hierarchy is formulated via Condition (2) in Definition~\ref{def:x-red-KP}, with the additional requirement that all the resulting constants vanish.
	In our formulation we do not impose this vanishing condition.
	However, the distinction is inessential: if the action of $\widehat{x(z)}$ on a tau-function $\tau({\bf T})$ produces nonzero constants, one can always multiply $\tau({\bf T})$ by an exponential factor $e^{f({\bf T})}$ with $f({\bf T})$ linear in ${\bf T}$, so that the modified tau-function satisfies the normalization used in \cite{GJZ23}.
	Thus the two definitions are equivalent up to such a normalization of the tau-function.
	See also \cite{ABDKS25,CGG25,GJZ23} for several examples of $x$-reduced KP hierarchies arising from topological recursion of spectral curves.
\end{remark}

In general, an $x$-reduced KP hierarchy together with an appropriate string equation determines a tau-function uniquely up to a nonzero multiplicative constant. 
Classical examples include the partition function of 2D topological gravity, which is characterized by the KdV hierarchy and the string equation (see \cite{KS91,Wit91}), 
and the partition function of the $r$-spin theory, which is characterized by the $r$-KdV hierarchy and the corresponding string equation (see \cite{ACM12,AM92,FKN92,Wit93}). 
For further discussion, we refer the reader to the proof and remark in Theorem 2.3 of \cite{ACM12} and to Lemma 3.6 of \cite{Ale21a}.
Consequently, in order to establish the equivalence between two models, it suffices to verify that they satisfy the same $x$-reduced KP hierarchy, the same string equation, and the same initial condition. 
This strategy is employed in the present work to prove the equivalence between the GKM, the topological recursion of the shifted $r$-spin spectral curve and the geometry of $r$-spin theory.

\section{Generalized Kontsevich model and its KP integrability}
\label{sec:GKMM}
In this section,
we review the definition of the GKM with a polynomial potential,
and its relation to the KP hierarchy.
We also derive a differential version of the string equation of the GKM,
which plays an essential role in deriving the corresponding spectral curve.

\subsection{Generalized Kontsevich model}
\label{sec:gKmm def}
The famous Witten Conjecture \cite{Wit91} states that the generating function of intersection numbers over moduli spaces of stable curves is a tau-function of the KdV hierarchy and satisfies the string equation.
In \cite{Kont92},
Kontsevich introduced the Kontsevich matrix model to capture the properties of this tau-function.
The Kontsevich matrix model uses the matrix Airy function and is of the following form
\begin{align}\label{eqn:def KW}
	\tau_{\KW}([\Lambda]^{-1})
	=\frac{1}{\mathcal{C}_{\KW}}
	\int [d\Phi]
	\exp\Big(-\frac{1}{\hbar}
	\text{Tr}\big(\Phi^3/6+\Lambda\Phi^2/2\big)\Big),
\end{align}
where the integral is over the space of $N\times N$ Hermitian matrices,
$\Lambda=\diag\{\lambda_1,\dots,\lambda_N\}$ is a diagonal matrix, $[d\Phi]$ is the Lebesgue measure,
and $\mathcal{C}_{\KW}$ is the normalization constant of this matrix model.
Under taking $N\rightarrow +\infty$ and using the Miwa variables $T_k=\text{Tr}\ \Lambda^{-k}/k$ for $k\in\mathbb{Z}_{\geq1}$,
the above partition function $\tau_{\KW}$ produces the generating function of intersection numbers over moduli spaces of stable curves.

In the Kontsevich matrix model,
a cubic function is used in the integrand as in the right hand side of equation \eqref{eqn:def KW}.
Later, this setup is generalized to general polynomials,
which play an important role in studying the two dimensional quantum gravity and the moduli spaces of $r$-spin curves.
See \cite{AM92,Ale21a,FKN92,GJY25,KMMMZ} and references therein.
We follow the notations in \cite{Ale21a,KMMMZ} to introduce this model.
First,
for a given polynomial $V(z)$ whose degree is greater than 2,
the associated GKM is defined by
\begin{align}\label{eqn:def Z_V}
	Z_{V,N}([\Lambda]^{-1})=\frac{1}{\mathcal{C}_V}
	\int [d\Phi] \exp\Big(-\frac{1}{\hbar}
	\text{Tr}\big(V(\Phi)-\Phi V'(\Lambda)\big)\Big),
\end{align}
where the integral is again over the space of $N\times N$ Hermitian matrices,
$[d\Phi]$ is the standard Lebesgue measure,
and $\mathcal{C}_V$ is the normalization constant given by
\begin{align*}
	\mathcal{C}_V:=\hbar^{\frac{N(N-1)}{2}}
	\frac{\triangle(\lambda)}{\triangle\big(V'(\lambda)\big)}
	\sqrt{\det\big(2\pi\hbar/V''(\Lambda)\big)}
	e^{\text{Tr}(\Lambda V'(\Lambda)-V(\Lambda))/\hbar}.
\end{align*}
Here $\Lambda=\diag\{\lambda_1,\dots,\lambda_N\}$ is a diagonal matrix,
and $\triangle(\lambda)$ means the Vandermonde determinant given by $\prod_{1\leq i<j\leq N} (\lambda_j-\lambda_i)$.
Below,
we give a detailed description of the integral in the right hand side of equation \eqref{eqn:def Z_V}
since it seems to be divergent at the first glance.
For convenience we set $V(z)=\sum_{i}v_iz^i$.
Then after taking the change of variable $\Phi\rightarrow\Phi+\Lambda$,
the function $\text{Tr}\big(V(\Phi)-\Phi V'(\Lambda)\big)$ in the integrand becomes
$$
	V(\Lambda)-\Lambda V'(\Lambda)
	+\sum_{i}\frac{i v_i}{2} \sum_{ a+b=i-2;\,  a, b\geq0}
	\text{Tr}(\Lambda^a \Phi \Lambda^b \Phi)
	+\cdots,
$$
where the omitted terms are at least third-order of $\Phi$.
Thus the leading term of the matrix integral \eqref{eqn:def Z_V} becomes Gaussian,
which is obviously well-defined.
In particular,
when $V(z)=z^3/6$ and after the change of variable $\Phi\rightarrow\Phi+\Lambda$,
equation \eqref{eqn:def Z_V} reduces to equation \eqref{eqn:def KW},
i.e. the GKM generalizes the Kontsevich matrix model.

Similar to the Kontsevich matrix model,
the partition function of the GKM $Z_V([\Lambda]^{-1})$ is a symmetric function in variables $\{\lambda_1^{-1},\dots,\lambda_N^{-1}\}$.
Thus, one can take $N\rightarrow+\infty$ and use the Miwa variables $T_k=\text{Tr}\Lambda^{-k}/k$ for $k\geq1$,
to obtain a formal power series of variables $\mathbf{T}=(T_1,T_2,\dots)$,
which we denote it by $Z_V(\mathbf{T};\hbar)$.
We call $Z_V(\mathbf{T};\hbar)$ the partition function of the GKM associated with the potential $V(z)$.

To calculate the correlation functions of the GKM,
one can use the famous Harish-Chandra--Itzykson--Zuber formula (see \cite{HC57, IZ80}).
The result is that equation \eqref{eqn:def Z_V} can be expressed as
\begin{align}\label{eqn:Z_V as det}
	Z_V([\Lambda]^{-1})=\frac{\det (\Phi^V_i(\lambda_j))_{i,j=1}^N}
	{\triangle(\lambda)},
\end{align}
where the denominator is the Vandermonde determinant,
and the functions appeared in the numerator are given by the integral
\begin{align}\label{eqn:Phi^V_k}
	\Phi^V_k(z)
	=\sqrt\frac{V''(z)}{2\pi\hbar}
	e^{\frac{1}{\hbar}\big(V(z)-zV'(z)\big)}
	\int_{\gamma(V)} d\phi\ \phi^{k-1} 
	e^{-\frac{1}{\hbar}\big(V(\phi)-\phi V'(z)\big)},
\end{align}
computed by steepest descent method at $|z|\rightarrow \infty$. 
The contour $\gamma(V)$ is chosen as the (local) path of steepest descent passing through the saddle point $\phi_0=z$, whose direction is consistent with the branch choice of $\sqrt{V''(z)}$. 
One can also consider an equivalent approach to understand the integral,  similar to the discussion which rigorously defines the GKM.
Under the change of variable $\phi\rightarrow\phi+z$,
the integrand $\phi^{k-1} 
e^{-\frac{1}{\hbar}\big(V(\phi)-\phi V'(z)\big)}$ becomes
\begin{align*}
	(\phi+z)^{k-1} 
	\cdot \exp\Big(-\frac{1}{\hbar}\big(V(z)-zV'(z)+
	\frac{V''(z)}{2}\phi^2
	+\sum_{i\geq3} \frac{V^{(i)}(z)}{i!}\phi^i\big)\Big).
\end{align*}
One can notice that the leading term is quadratic and higher order term can be dealt by perturbation expansion.
Thus,
the integral in the right hand side of equation \eqref{eqn:Phi^V_k} is defined in terms of Gaussian integral,
and as a consequence,
\begin{align*}
	\Phi^V_k(z)
	\in z^{k-1} \cdot \mathbb{C}[\![z^{-1}]\!],
	\,\, k=1,2,\dots.
\end{align*}

This kind of determinantal formula, as equation \eqref{eqn:Z_V as det},
is deeply related to the Sato's theory of KP hierarchy.
See, for example, Lemma 4.2 in \cite{Kont92}.
The result is,
under the Miwa variables $T_k=\text{Tr}\ \Lambda^{-k}/k$ for $k\geq1$,
the partition function $Z_V(\mathbf{T};\hbar)$ is a tau-function of the KP hierarchy.
For the case of $N=1$ in equation \eqref{eqn:Z_V as det},
we have $Z_V(T_k=\lambda_1^{-1}/k)=\Phi^V_1(\lambda_1)$.
Comparing this to the second part of equation \eqref{defnone},
we know that $\Phi^V_1(z)$ gives the fermionic dual one-point function of this model.
The general functions $\Phi^V_k(z), k=1,2,\dots$ are actually dual basis vectors for fermionic tau-functions of the KP hierarchy.
Further,
for this model we set $\Psi^V_k(z)=\Phi^V_k(z)|_{\hbar\rightarrow-\hbar}$,
i.e.,
\begin{align}\label{eqn:Psi^V_k}
	\Psi^V_k(z)
	=\sqrt\frac{V''(z)}{-2\pi\hbar}
	e^{-\frac{1}{\hbar}\big(V(z)-zV'(z)\big)}
	\int_{\gamma(V)} d\phi\ \phi^{k-1} 
	e^{\frac{1}{\hbar}\big(V(\phi)-\phi V'(z)\big)}.
\end{align}
Then it was known (see, for examples, \cite{AM92,Ale21a,KMMMZ}) that
the partition function of the GKM $Z_V(\mathbf{T};\hbar)$ is a tau-function of the KP hierarchy in Miwa variables $\mathbf{T}=(T_1,T_2,\dots)$,
and $\{z^{1/2}\Psi^V_{k+1}(z)\}_{k=0}^\infty$ is an admissible basis for it.
See subsection \ref{subsec:fermionic tau} for a review for the notion.

\subsection{Kac--Schwarz operators for the partition function of generalized Kontsevich model}
As a tau-function of the KP hierarchy,
the partition function of the GKM is determined by its admissible basis $\{z^{1/2} \Psi^V_{k+1}(z)\}_{k=0}^{\infty}$ when considering the corresponding point in Sato Grassmannian.
Let $x(z)=V'(z)=\sum_{i=0}^r a_i z^i$ and
\begin{align}
	\label{eqn:KS Q for GKM}
	Q_V(z)=\text{Ad}_{\sqrt{V''(z)}\, e^{-\frac{1}{\hbar}\left(V(z)-zV'(z)\right)}}
	\Big(-\frac{\hbar}{x'(z)} \partial_z\Big)
	=z-\frac{\hbar}{x'(z)} \partial z +\frac{\hbar}{2}\frac{x''(z)}{x'(z)^2},
\end{align}
then by the integral expression \eqref{eqn:Psi^V_k} of $\Psi^V_k(z)$ and the method of integration by parts,
one has 
\begin{align}
	&\big(z^{1/2}\cdot Q_V(z)\cdot z^{-1/2}\big)
	\cdot z^{1/2} \Psi^V_k(z)
	=z^{1/2} \Psi^V_{k+1}(z), \label{eqn:QV act Psi}\\
	&\big(z^{1/2}\cdot x(z)\cdot z^{-1/2}\big)
	\cdot z^{1/2} \Psi^V_k(z)
	=z^{1/2} \Big(\sum_{i=0}^{r} a_i\Psi^V_{k+i}(z)+\hbar(k-1)\Psi^V_{k-1}(z)\Big).
\end{align}
Thus, directly from the definition of the Kac--Schwarz operator,
the operators $Q_V(z)$ and $x(z)\cdot$ are Kac--Schwarz operators for the partition function of the GKM.
See also \cite{Ale21a} for the derivation of above Kac--Schwarz operators.
We remark that our notation is slightly different from the one in~\cite{Ale21a}, and our basis vectors are his dual basis vectors.

Recall that the fermionic one-point and dual one-point functions of a tau-function of the KP hierarchy are formal power series of the following form
\[\Psi(z)=1+z^{-1}\mathbb{C}[\![z^{-1}]\!],\quad\quad \Psi^{*}(w)\in1+w^{-1}\mathbb{C}[\![w^{-1}]\!].\]
For the partition function of the GKM,
we denote them by $\Psi_V(z)$ and $\Psi^*_V(w)$.
These two functions are equal to $\Psi^V_1(z)$ and $\Psi^V_1(w)|_{\hbar\rightarrow-\hbar}$, respectively.
They can be determined by the following equations
\begin{align}
	&\left(x\big(Q_V(z)\big)-x(z)\right)\, \Psi_V(z)=0, \label{eqn:qsc}\\
	&\left(x\big(Q^*_V(w)\big)-x(w)\right)\, \Psi^{*}_V(w)=0, \label{eqn:qsc dual}
\end{align}
where $Q_V^*(z)=Q_{-V}(z)$ is the dual operator of $Q_V(z)$.
To specify a tau-function of the KP hierarchy,
we need the fermionic two-point function of it,
which is the generating function of affine coordinates as in equation \eqref{eqn:Puv as b}.
In \cite{GJY25},
the first three authors studied the lifting operators of the KP hierarchy.
As an application,
a compact formula for the fermionic two-point function of the GKM is obtained as follows.
\begin{proposition}[Proposition 5.2 in \cite{GJY25}]
	\label{pro:fermionic two point}
	The fermionic two-point function for the GKM has the following explicit formula
	\begin{align}\label{eqn:2-point formula}
		\Psi_V(w,z)=\frac{W\big(Q_V^*(w),Q_V(z)\big)}{x(w)-x(z)}
		\big(\Psi^{*}_V(w)\Psi_V(z)\big)
		\in\frac{1}{w-z}+w^{-1}z^{-1}\mathbb{C}[\![w^{-1},z^{-1}]\!],
	\end{align}
	where $W(w,z)=\frac{x(w)-x(z)}{w-z}\in\mathbb{C}[w,z]$
	is a polynomial with indeterminates $w$ and $z$.
\end{proposition}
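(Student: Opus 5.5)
The proof reduces the fermionic two-point function to data of two quantum spectral curves. The first ingredient is the Christoffel--Darboux-type pairing between the admissible basis $\{z^{1/2}\Psi^V_{k+1}(z)\}$ and its dual basis. In Sato's theory, the two-point function of a point of $\Gr^{(0)}$ with admissible basis $\{\varphi_k\}$ and dual basis $\{\varphi^*_k\}$ is
$$
\Psi(w,z)=\frac{1}{w-z}\Big|_{\text{expansion adapted to the subspace}}.
$$
Concretely, the kernel $K(w,z)=\sum_{k\ge0}\varphi^*_k(w)\varphi_k(z)$ is not convergent, but its image under $(x(w)-x(z))$ is. So the plan is to compute $(x(w)-x(z))\Psi_V(w,z)$ instead of $\Psi_V(w,z)$ itself.

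The first step is to express the whole subspace through $\Psi_V(z)$. By \eqref{eqn:QV act Psi}, $\Psi^V_{k+1}(z)=Q_V(z)^k\Psi_V(z)$, since $\Psi^V_1=\Psi_V$. Hence the point of the Grassmannian is $\text{Span}\{z^{1/2}Q_V(z)^k\Psi_V(z)\}_{k\ge0}$. Likewise, the dual subspace is spanned by $Q^*_V(w)^k\Psi^*_V(w)$, because $Q^*_V=Q_{-V}$ and $\hbar\to-\hbar$ gives the $\Phi^V_k$.

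The second step characterizes $\Psi_V(w,z)$ by two properties:
\begin{enumerate}
\item As a function of $z$ with $w$ fixed, $\Psi_V(w,z)$ lies in the completed span of the basis vectors, and symmetrically in $w$ for the dual subspace.
\item $\Psi_V(w,z)-\frac1{w-z}\in w^{-1}z^{-1}\mathbb{C}[\![w^{-1},z^{-1}]\!]$.
\end{enumerate}
These two properties pin down the $b_{i,j}$ uniquely via \eqref{canobasis}. Writing $W(w,z)=\sum_{i,j}c_{ij}w^iz^j$, the candidate
$$
W\big(Q^*_V(w),Q_V(z)\big)\big(\Psi^*_V(w)\Psi_V(z)\big)=\sum c_{ij}\,\Psi^V_{i+1}(w)|_{\hbar\to-\hbar}\,\Psi^V_{j+1}(z)
$$
is a finite sum of products of basis and dual basis vectors. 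Here one must check that the operators in $w$ and $z$ commute, which they do because they act on different variables, so the ordering in $W$ is harmless.

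The third step is to show that this finite sum is divisible by $x(w)-x(z)$ in the appropriate completion. For this I will use \eqref{eqn:qsc} and \eqref{eqn:qsc dual}, together with the identity that $x$ commutes with the Kac--Schwarz action. Note that $[Q_V,x(z)]=-\hbar$ by \eqref{eqn:KS Q for GKM}, since $[\partial_z,x]=x'$; so $(Q_V,x)$ is a Heisenberg pair. Using this relation, $(x(w)-x(z))^{-1}W(Q^*_V(w),Q_V(z))$ should be expanded as a Christoffel--Darboux sum $\sum_{k}\varphi^*_k(w)\varphi_k(z)$ in the region $|w|>|z|$. Equivalently, I will verify the defining identity
$$
(x(w)-x(z))\,\Psi_V(w,z)=W\big(Q^*_V(w),Q_V(z)\big)\Psi^*_V(w)\Psi_V(z).
$$
This follows from a telescoping argument. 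The known two-point function satisfies $x(z)\varphi_k=\sum_i a_i\varphi_{k+i}+\hbar k\varphi_{k-1}$. Multiplying the Christoffel--Darboux sum by $x(w)-x(z)$, all terms with $k$ large cancel, leaving exactly the finite "boundary" terms $\sum c_{ij}\varphi^*_i\varphi_j$. The $\hbar k$ terms cancel between the $w$ and $z$ sides because the dual recursion carries the opposite sign of $\hbar$.

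The main obstacle is this telescoping bookkeeping: it must produce precisely $W(Q^*_V(w),Q_V(z))$ rather than some other polynomial in the boundary terms. Once it is established, the remaining work is routine. Dividing by $x(w)-x(z)$ in $|w|>|z|$ gives property (1), and the leading behaviour $W(w,z)/(x(w)-x(z))=1/(w-z)$ at $\hbar$-order $0$ gives the normalization in property (2). The higher $\hbar$ corrections then land in $w^{-1}z^{-1}\mathbb{C}[\![w^{-1},z^{-1}]\!]$ by degree counting, since $\Psi_V,\Psi^*_V\in1+O(z^{-1})$.
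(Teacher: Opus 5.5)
The paper gives no proof of this statement. It imports it from \cite{GJY25}, where it is derived via lifting operators of the KP hierarchy.

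Your reduction is sound. Reducing to $(x(w)-x(z))\Psi_V(w,z)=W(Q_V^*(w),Q_V(z))\Psi_V^*(w)\Psi_V(z)$ is fine, and so is your uniqueness characterization. But that identity is the whole content, and your telescoping is asserted rather than carried out, with the wrong dual family.

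In the region $|w|>|z|$ one has $\Psi_V(w,z)=\sum_{n\ge0}\chi_n(w)\Psi^V_{n+1}(z)$. Here $\chi_n\in w^{-n-1}+w^{-n-2}\mathbb{C}[\![w^{-1}]\!]$ is the family biorthogonal to $\{\Psi^V_{n+1}\}$ under $\Res f g\,dw$. The growing vectors $(Q_V^*)^k\Psi_V^*\sim w^k$ cannot play this role: pairing them with $\Psi^V_{k+1}(z)$ gives no well-defined series, and multiplying by $x(w)-x(z)$ does not repair that. Their $\hbar$-terms would also add rather than cancel.

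With the $\chi_n$ the telescoping does work; the $\hbar$-terms cancel after a shift of index. What it leaves is
\[
(x(w)-x(z))\,\Psi_V(w,z)=\sum_{m,n\ge0}a_{m+n+1}\,e_m^*(w)\,e_n(z).
\]
Here $e_n\in z^n+z^{-1}\mathbb{C}[\![z^{-1}]\!]$ and $e_m^*\in w^m+w^{-1}\mathbb{C}[\![w^{-1}]\!]$ are the canonical bases of the point $\mathcal{W}=\mathrm{span}\{\Psi^V_{n+1}\}$ and of its residue annihilator $\mathcal{W}^\perp$. Equivalently, this is the kernel of $[x,P]=-P\,x\,(1-P)$, with $P$ the projection onto $\mathcal{W}$ along $z^{-1}\mathbb{C}[\![z^{-1}]\!]$.

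This identity holds for any big-cell point preserved by $x$ and never uses $Q_V$. Your candidate is the same sum with $e_n,e_m^*$ replaced by $Q_V^n\Psi_V$ and $(Q_V^*)^m\Psi_V^*$. A priori these differ from the canonical vectors by unitriangular changes of basis. That discrepancy is exactly your ``main obstacle'', and the closing degree-counting remark does not resolve it.

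The missing input is quasi-homogeneity. Set $\deg z=\frac1r$, $\deg a_i=1-\frac ir$, $\deg\hbar=1+\frac1r$. Then $Q_V$ is homogeneous of degree $\frac1r$, and $x(Q_V)-x(z)=-\hbar\partial_z+O(\hbar^2)$. So every correction to $\Psi_V$ carries a positive power of $\hbar$, with coefficients polynomial in $a_0,\dots,a_{r-1},a_r^{-1}$, all of non-negative degree. Hence $Q_V^n\Psi_V\in z^n+z^{n-r-1}\mathbb{C}[\![z^{-1}]\!]$, i.e.\ $Q_V^n\Psi_V=e_n$ for $n\le r$.

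The same holds for the dual side: $(Q_V^*)^m\Psi_V^*=e_m^*$ for $m\le r$. These vectors lie in $\mathcal{W}^\perp$ because $\Psi_V^*=e_0^*$ and $Q_V^*=Q_{-V}$ is the residue adjoint of $Q_V$, so it preserves $\mathcal{W}^\perp$. Since $a_{m+n+1}=0$ unless $m+n\le r-1$, the boundary sum is exactly $W(Q_V^*(w),Q_V(z))\Psi_V^*(w)\Psi_V(z)$. The normalization is then automatic, because the left side is the genuine two-point function.

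Alternatively, you can make the Heisenberg relation do the work, as you intended. Let $E=Q_V(z)-Q_V^*(w)$ and $D=x(w)-x(z)$.
\begin{enumerate}
\item The Kac--Schwarz property of $Q_V$ gives $E\Psi_V(w,z)=-\Psi_V^*(w)\Psi_V(z)$.
\item $E$ commutes with $D$, precisely because $[Q_V,x]=-\hbar$ and $[Q_V^*,x]=\hbar$.
\item Hence $E(D\Psi_V)=-D(\Psi_V^*\Psi_V)=E\big(W(Q_V^*,Q_V)\Psi_V^*\Psi_V\big)$, the last step by \eqref{eqn:qsc} and \eqref{eqn:qsc dual}.
\item $E$ is injective on finite combinations of $(Q_V^*)^m\Psi_V^*(w)\,Q_V^n\Psi_V(z)$, since it acts as multiplication by $q-p$ on $\mathbb{C}[p,q]$. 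This space contains both $D\Psi_V$ and $W(Q_V^*,Q_V)\Psi_V^*\Psi_V$, so the two are equal.
\end{enumerate}
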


Together with equations~\eqref{eqn:qsc} and \eqref{eqn:qsc dual}, this Proposition uniquely determines the fermionic two-point function (thus the affine coordinates) of the GKM from its potential $V(z)$ (more precisely, from $x(z)=V'(z)$).
Furthermore, by using equation~\eqref{eqn:tau-affine}, one can compute the partition function $Z_{V}({\bf T};\hbar)$ explicitly.
We show some explicit computations for the partition function of the GKM with potential $V(z)=\frac{z^4}{12}-\frac{\epsilon z^2}{2}$.
\begin{example}\label{exam:partition-function-GKM}
	When $V(z)=\frac{z^4}{12}-\frac{\epsilon z^2}{2}$,
	we have $x(z)=\frac{z^3}{3}-\epsilon z$, and
	\begin{align*}
		Z_{V}({\bf T};\hbar)
		=&\, \textstyle 
		1+ {\hbar}\,{T_{{1}}}^{2}T_{{2}}+{\frac {{\hbar}\,}{3}} T_{{4}}
		+  {\frac {4\,{\hbar}\,\epsilon}{3}}{T_{{2}}}^{3}
		+2\,{\hbar}\,\epsilon T_{{6}}+2\,{\hbar}\,\epsilon {T_{{1}}}^{2}T_{{4}}
		+6\,{\hbar}\,\epsilon T_{{1}}T_{{2}}T_{{3}}  
		+  {\frac {{{\hbar}}^{2}}{2}}{T_{{1}}}^{4}{T_{{2}}}^{2}\\
		&\, \textstyle 
		+{\frac {5\,{{\hbar}}^{2}}{3}}{T_{{1}}}^{3}T_{{5}}
		+{\frac {7\,{{\hbar}}^{2}}{3}}T_{{1}}T_{{7}}
		+6\,{\hbar}\,{\epsilon}^{2}T_{{8}}
		+3\,{\hbar}\,{\epsilon}^{2}{T_{{1}}}^{2}T_{{6}}
		+8\,{\hbar}\,{\epsilon}^{2}{T_{{2}}}^{2}T_{{4}}
		+9\,{\hbar}\,{\epsilon}^{2}T_{{2}}{T_{{3}}}^{2}\\
		&\, \textstyle
		+{\frac {13\,{{\hbar}}^{2}}{3}}{T_{{1}}}^{2}T_{{2}}T_{{4}}
		+10\,{\hbar}\,{\epsilon}^{2}T_{{1}}T_{{2}}T_{{5}}
		+12\,{\hbar}\,{\epsilon}^{2}T_{{1}}T_{{3}}T_{{4}}
		-{\frac {2\,{{\hbar}}^{2}}{3}}{T_{{2}}}^{4}
		+{\frac {13\,{{\hbar}}^{2}}{18}} {T_{{4}}}^{2}
		+\cdots
	\end{align*}
\end{example}

\subsection{Connected $n$-point functions and differential $n$-forms}
\label{sec:diff version}
Now we introduce the connected $n$-correlators with genus $g$:
$$
	\langle \alpha_{i_1},...,\alpha_{i_n} \rangle^{V}_{g,n}
	:=[\hbar^{2g-2+n}] 
	\left.\frac{\partial^n \log Z_V(\mathbf{T};\hbar)}
	{\partial T_{i_1} \dots \partial T_{i_n}}
	\right|_{\mathbf{T}=0},
$$
and the associated connected $n$-point function with genus $g$ for the GKM:
$$
	H^V_{g,n}(z_1,...,z_n):=\sum_{i_1,...,i_n=1}^{\infty}
	\langle \alpha_{i_1},...,\alpha_{i_n} \rangle^{V}_{g,n} z_1^{-i_1-1} \dots z_n^{-i_n-1}.
$$
Notice that the function $H^V_{g,n}(z_1,...,z_n)$ is a generating function of all the connected $n$-correlators with genus $g$.
We also introduce the differential version of the connected $n$-point function with genus $g$ of the GKM as
\beq\label{eqn:def omega^V}
	\omega^{V}_{g,n}(z_1,...,z_n):=
	\sum_{i_1,...,i_n=1}^\infty
	\frac{1}{i_1\cdots i_n}\langle \alpha_{i_1},...,\alpha_{i_n} \rangle^{V}_{g,n}
	d z_1^{-i_1} \cdots d z_n^{-i_n}.
\eeq
Clearly, the relation between the $n$-differential above and the ordinary connected $n$-point function with genus $g$ is
$$
	\omega^{V}_{g,n}(z_1,...,z_n)
	=(-1)^n H^{V}_{g,n}(z_1,...,z_n)dz_1\cdots dz_n.
$$
We note here that both the function $H^{V}_{g,n}(z_1,\cdots,z_n)$ and the multi differential $\omega^{V}_{g,n}(z_1,\cdots,z_n)$ are defined near $z_1,\cdots,z_n=\infty$. 

It is easy to see that,
the function $H^{V}_{g,n}$ and its corresponding differential $\omega^{V}_{g,n}$ contain the same information.
There are many methods to compute such $n$-point functions (see, for examples, \cite{BE09,DYZ21,Okou02,WY22,Zhou15} and reference therein).
In this paper, according to~\cite[equation (211) and Theorem 5.3]{Zhou15}, the connected $n$-point functions can be computed by using the fermionic two-point function $\Psi_V(w,z)$ via the following formula:
$$
\sum_{g=0}^\infty H_{g,n}^V(z_1,\dots,z_n) \hbar^{2g-2+n}
=(-1)^{n-1}\sum_{\text{$n$-cycles}\ \sigma}
\prod_{i=1}^n \widetilde{\Psi}_V(z_{\sigma^i(1)},z_{\sigma^{i+1}(1)})
-i_{z_1,z_2}\frac{\delta_{n,2}}{(z_1-z_2)^2},
$$
where the notation $i_{z,w}f(z,w)$ means we expand the function $f$ in the region $|z|>|w|$ and
\begin{align*}
	\widetilde{\Psi}_V(z_{i},z_{j}):=\begin{cases}
		-\Psi_V(z_{i},z_{j})+\frac{1}{z_{i}-z_{j}}+i_{z_i,z_j}\frac{1}{z_i-z_j}, & \text{if\ }i<j,\\
		-\Psi_V(z_{i},z_{j})+\frac{1}{z_{i}-z_{j}}, & \text{if\ }i=j,\\
		-\Psi_V(z_{i},z_{j})+\frac{1}{z_{i}-z_{j}}+i_{z_j,z_i}\frac{1}{z_i-z_j}, & \text{if\ }i>j.
	\end{cases}
\end{align*}

We show some explicit computations for the connected multi-differentials of the GKM with potential $V(z)=\frac{z^4}{12}-\frac{\epsilon z^2}{2}$, and one can compare these computations with the computation of the partition function shown in Example~\ref{exam:partition-function-GKM}.

\begin{example}
	When $V(z)=\frac{z^4}{12}-\frac{\epsilon z^2}{2}$,
	we have $x(z)=\frac{z^3}{3}-\epsilon z$.
	Then for $n=1$ we have
	\begin{align*}
		\sum_{g=0}^\infty \omega^V_{g,1}(z_1) \cdot \hbar^{2g-1}
		=&-\bigg(\frac{\hbar}{3}\frac{dz_1}{z_1^5}
		+2 \epsilon \hbar\frac{dz_1}{z_1^7}
		+6 \epsilon^2 \hbar\frac{dz_1}{z_1^9}
		+\frac{40 \epsilon^3 \hbar}{3} \frac{dz_1}{z_1^{11}}
		+25\epsilon^4\hbar\frac{dz_1}{z_1^{13}} \\
		&+42\epsilon^5\hbar\frac{dz_1}{z_1^{15}}
		 +\frac{385 \epsilon \hbar^3}{9} \frac{dz_1}{z_1^{15}}
		+{\frac {196 \,{\epsilon}^{6}\hbar}{3}}\frac{dz_1}{{z_{{1}}}^{17}}
		+\frac{4340 \epsilon^2 \hbar^3}{9} \frac{dz_1}{z_1^{17}}
\bigg)
		+O(z_1^{-19}).
	\end{align*}
	For $n=2$, we have
	\begin{align*}
		\sum_{g=0}^\infty \omega^V_{g,2}(z_1,z_2) \cdot &\hbar^{2g}
		=\frac{7\hbar^2}{3}\frac{dz_1dz_2}{z_1^{2}z_2^{8}}
		+\frac{4\hbar^2}{3}\frac{dz_1dz_2}{z_1^{5}z_2^{5}}
		+\frac{7\hbar^2}{3}\frac{dz_1dz_2}{z_1^{8}z_2^{2}}
		+\frac{40\hbar^2}{3}\frac{dz_1dz_2}{z_1^{3}z_2^{9}}
		+7\hbar^2\epsilon\frac{dz_1dz_2}{z_1^{4}z_2^{8}}\\
		&+8\hbar^2\epsilon\frac{dz_1dz_2}{z_1^{5}z_2^{7}}
		+\frac{25\hbar^2\epsilon}{3}\frac{dz_1dz_2}{z_1^{6}z_2^{6}}
		+8\hbar^2\epsilon\frac{dz_1dz_2}{z_1^{7}z_2^{5}}
		+7\hbar^2\epsilon\frac{dz_1dz_2}{z_1^{8}z_2^{4}}
		+\frac{40\hbar^2}{3}\frac{dz_1dz_2}{z_1^{9}z_2^{3}}\\
		&+\frac{152\hbar^2\epsilon^2}{3}\frac{dz_1dz_2}{z_1^{5}z_2^{9}}
		+\frac{140\hbar^2\epsilon^2}{3}\frac{dz_1dz_2}{z_1^{6}z_2^{8}}
		+48\hbar^2\epsilon^2\frac{dz_1dz_2}{z_1^{7}z_2^{7}}
		+\frac{140\hbar^2\epsilon^2}{3}\frac{dz_1dz_2}{z_1^{8}z_2^{6}}\\
		&+\frac{152\hbar^2\epsilon^2}{3}\frac{dz_1dz_2}{z_1^{9}z_2^{5}}
		+184\hbar^2\epsilon^3\frac{dz_1dz_2}{z_1^{7}z_2^{9}}
		+\frac{539\hbar^2\epsilon^3}{3}\frac{dz_1dz_2}{z_1^{8}z_2^{8}}
		+184\hbar^2\epsilon^3\frac{dz_1dz_2}{z_1^{9}z_2^{7}}\\
		&-\frac{680\hbar^4}{9}\frac{dz_1dz_2}{z_1^{9}z_2^{9}}
		-\frac{359\hbar2\epsilon^4}{6}\frac{dz_1dz_2}{z_1^{9}z_2^{9}}
		+O(z_1^{-10})+O(z_2^{-10}).
	\end{align*}
\end{example}

\subsection{String equation for generalized Kontsevich model and its differential version}
In this subsection, we first consider the $W_{1+\infty}$ realization of the operator $Q_V$,
which provides the string equation for the GKM.
Then we derive the differential version of the string equation.

\begin{proposition}\label{lem:w1+infty of qv}
	We suppose $V(z)$ is a polynomial of degree $r+1$,
	and the expansion $\frac{1}{x'(z)}=\sum_{k=r}^{+\infty} c_k z^{-k+1}$.
	Then the $W_{1+\infty}$ realization of the operator
	$$
		Q_V(z)=z-\frac{\hbar}{x'(z)} \partial z +\frac{\hbar}{2}\frac{x''(z)}{x'(z)^2}
	$$
	is given by
	$$
		\widehat{Q_V}=-\frac{\partial}{\partial T_1}
		+\sum_{k=r}^{+\infty} c_k L_{-k},
	$$
	where $L_{-k}$ is the $W_{1+\infty}$ realization of $-z^{-k}\big(z\partial_z+\frac{-k+1}{2}\big)$, which is precisely given by
	\begin{align*}
		L_{-k}
		=\sum_{m\geq1}(m+k)T_{m+k}\frac{\partial}{\partial T_m}
		+\frac{1}{2}\sum_{a+b=k \atop a,b\geq1}abT_aT_b,
		\ \ k\geq0.
	\end{align*}
\end{proposition}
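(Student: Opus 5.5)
The plan is to use that the $W_{1+\infty}$ realization $l_z\mapsto \widehat{l_z}$ is linear. Accordingly, I would rewrite $Q_V(z)$ as a (possibly infinite) linear combination of the generators $j_n$ and $l_n$ listed in Example~\ref{exa:w1+infty}, and then read off the realization term by term.

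\emph{The zeroth-order part.} The multiplication operator $z$ equals $-j_1$, and $\widehat{j_1}=J_1=\partial/\partial T_1$. So this part contributes $-\partial/\partial T_1$.

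\emph{The first-order part.} Insert the expansion $\frac{1}{x'(z)}=\sum_{k\ge r}c_kz^{-k+1}$, which is valid near $z=\infty$ since $x'(z)$ has degree $r-1$. The key observation is that the zeroth-order correction in $Q_V$ is a total derivative:
\begin{align*}
\frac{1}{2}\frac{x''(z)}{x'(z)^2}=-\frac{1}{2}\frac{d}{dz}\Big(\frac{1}{x'(z)}\Big)=-\frac12\sum_{k\ge r}(-k+1)c_kz^{-k}.
\end{align*}
Combining this with $-\frac{1}{x'(z)}\partial_z=-\sum_k c_k z^{-k}\,z\partial_z$ gives
\begin{align*}
-\frac{1}{x'(z)}\partial_z+\frac12\frac{x''(z)}{x'(z)^2}=-\sum_{k\ge r}c_k z^{-k}\Big(z\partial_z+\frac{-k+1}{2}\Big)=\sum_{k\ge r}c_k\, l_{-k}.
\end{align*}
This produces exactly the half-integer shift appearing in the Virasoro generators $l_{-k}$ of Example~\ref{exa:w1+infty}. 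The realization of each term is $L_{-k}$, and for $k\ge 0$ the formula in Example~\ref{exa:w1+infty} specializes to the displayed expression for $L_{-k}$. The overall factor $\hbar$ is carried along as a scalar: the first-order part of $Q_V$ realizes as $\hbar\sum_k c_kL_{-k}$. This should be matched with the statement's normalization of $\hbar$ (it is either absorbed into the $c_k$ or written explicitly). The coefficient bookkeeping here is the only genuinely computational step.

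\emph{Justifying the infinite sum.} The main point needing care is that the realization of an infinite combination is the infinite combination of the realizations. I would argue via a grading, with $\deg T_m=m$. Each $L_{-k}$ raises degree by $k\ge r\ge 2$. Hence on any fixed homogeneous component of $\mathbb C[\![\mathbf T]\!]$ only finitely many terms of $\sum_k c_kL_{-k}$ contribute, so the sum is a well-defined operator. Moreover, the fermionic action of $\sum_k c_k l_{-k}$ on a fixed wedge, truncated modulo terms of high degree in $z^{-1}$, is likewise a finite computation. Therefore the boson--fermion intertwining, which holds for each $l_{-k}$ separately, passes to the sum degree by degree.

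Finally, I would check that no central/normal-ordering constant arises. Such a constant could only come from $l_0$ (or from $j_0$), whereas here only $k\ge r\ge2$ occur. This completes the identification $\widehat{Q_V}=-\partial_{T_1}+\sum_{k\ge r}c_kL_{-k}$, with the $\hbar$ convention as noted above.
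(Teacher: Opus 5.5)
Your proposal is correct and follows the paper's argument, which is just ``use the table in Example~\ref{exa:w1+infty} and compute'': the decomposition $z=-j_1$ and $-\frac{\hbar}{x'(z)}\partial_z+\frac{\hbar}{2}\frac{x''(z)}{x'(z)^2}=\hbar\sum_{k\ge r}c_k\,l_{-k}$, via your total-derivative observation, is exactly that computation, and your grading and no-central-term remarks supply details the paper leaves implicit. Do not hedge on $\hbar$, though: the $c_k$ are fixed by $1/x'(z)=\sum_k c_kz^{-k+1}$, so the factor cannot be absorbed into them, and the correct realization is $-\partial/\partial T_1+\hbar\sum_{k\ge r}c_kL_{-k}$ (the printed statement drops the $\hbar$), as the $\hbar$-grading in Proposition~\ref{prop:GKM-string-wgn} requires and as Example~\ref{exam:partition-function-GKM} confirms, where $\partial_{T_1}Z_V=2\hbar T_1T_2+\cdots$ while $c_3L_{-3}\cdot 1=2T_1T_2$.
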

\begin{proof}
The proof follows instantly rom examples of $W_{1+\infty}$ realization in Example \ref{exa:w1+infty} and direct computations.
\end{proof}

From equation \eqref{eqn:QV act Psi},
we have known that $Q_V(z)$ is a Kac--Schwarz operator for the partition function of the GKM.
In \cite{Ale21a},
it is further proved that
\begin{align}\label{eqn:string for GKM}
	\widehat{Q_V} Z_V({\bf T};\hbar)=0,
\end{align}
which is called the string equation for the GKM.
Moreover,
the partition function $Z_{V}({\bf T};\hbar)$ is a tau-function of the $x$-reduced KP hierarchy and is uniquely determined by the above string equation. 

For further use, we establish the following differential version of the string equation \eqref{eqn:string for GKM} for the GKM.
\begin{proposition}\label{prop:GKM-string-wgn}
	The string equation for the GKM
	$$\widehat{Q_{V}}Z_{V}({\bf T};\hbar)=0$$
	is equivalent to the following equation for multi-differentials $\omega^{V}_{g,n}$:
\begin{align}
	&\, \mathop{\Res}_{z_0=\infty} z_0\cdot \omega^{V}_{g,n+1}(z_0,z_{[n]})
	=-\sum_{j=1}^n d\Big( \frac{\omega^{V}_{g,n}(z_{[n]})}{dx(z_j)} \Big)
	+\delta_{(g,n)}^{(0,2)}\,d_1d_2\frac{x'(z_1)-x'(z_2)}{x'(z_1)x'(z_2)(z_1-z_2)},
	\label{eqn:GKM-string-wgn}
\end{align}
where we recall $x(z)=V'(z)$.
\end{proposition}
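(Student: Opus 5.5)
We take $\log$ of the string equation, extract coefficients of $\hbar$ and $\mathbf T$, and repackage the result into generating series in $z_1,\dots,z_n$. By Proposition~\ref{lem:w1+infty of qv}, the string equation $\widehat{Q_V}Z_V=0$ reads
\[
-\frac{\partial F}{\partial T_1}+\sum_{k\ge r}c_k\Big(\sum_{m\ge1}(m+k)T_{m+k}\frac{\partial F}{\partial T_m}+\frac12\sum_{a+b=k}abT_aT_b\Big)=0,
\]
where $F=\log Z_V$.

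\textbf{Step 1: extract coefficients.} We apply $\partial_{T_{i_1}}\cdots\partial_{T_{i_n}}$, set $\mathbf T=0$, and take the coefficient of $\hbar^{2g-2+n}$. Then $\partial_{T_1}F$ produces $\langle\alpha_1,\alpha_{i_1},\dots,\alpha_{i_n}\rangle_{g,n+1}$. Since the $\hbar$-grading is $2g-2+(n+1)$ versus $2g-2+n$, we must keep track of $\hbar$ scaling. The hbar placement in $\widehat{Q_V}$ should be fixed as in $Q_V$, i.e.\ $\hbar$ multiplies the $L_{-k}$ part; this matches the gradings. The linear term $T_{m+k}\partial_{T_m}$ gives, for each $j$, $i_j\,c_k\,\langle\dots,\alpha_{i_j-k},\dots\rangle_{g,n}$ whenever $i_j-k\ge1$. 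The quadratic term contributes only for $(g,n)=(0,2)$, giving $c_{i_1+i_2}\,i_1i_2$.

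\textbf{Step 2: translate to differentials.} On the left, the coefficient of $\alpha_1$ in $z_0$ is picked out by $\Res_{z_0=\infty}z_0\,\omega_{g,n+1}$, because $dz_0^{-1}=-z_0^{-2}dz_0$; signs must be tracked. On the right, we use $\frac{1}{x'(z)}=\sum_k c_kz^{-k+1}$. Then
\[
\frac{\omega_{g,n}}{dx(z_j)}=\frac{1}{x'(z_j)}\cdot\frac{\omega_{g,n}}{dz_j},
\qquad
\frac{\omega_{g,n}}{dz_j}=\sum_{m}\frac{\langle\dots,\alpha_m,\dots\rangle}{m}(-m)z_j^{-m-1}\cdots.
\]
Multiplying by $c_kz_j^{-k+1}$ gives $z_j^{-m-k}$, and applying $d$ yields a factor $-(m+k)z_j^{-m-k-1}dz_j$. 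This reproduces the coefficient $(m+k)$ attached to $dz_j^{-(m+k)}/(m+k)$, as required.

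\textbf{Step 3: the unstable term.} We verify that
\[
\sum_{k}c_k\sum_{a+b=k}ab\,\frac{dz_1^{-a}}{a}\frac{dz_2^{-b}}{b}=d_1d_2\sum_k c_k\sum_{a+b=k}z_1^{-a}z_2^{-b}.
\]
Since $\sum_{a+b=k,\,a,b\ge1}z_1^{-a}z_2^{-b}=\dfrac{z_1^{-k+1}-z_2^{-k+1}}{z_2-z_1}\cdot\dfrac{1}{z_1z_2}$ up to rearrangement, summing over $k$ gives $\dfrac{1/x'(z_1)-1/x'(z_2)}{z_2-z_1}$, possibly modulo terms annihilated by $d_1d_2$. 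This equals $\dfrac{x'(z_1)-x'(z_2)}{x'(z_1)x'(z_2)(z_1-z_2)}$.

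\textbf{Converse.} Equivalence holds because the coefficient extraction is bijective: the differential identities for all $(g,n)$ recover every $\mathbf T$-coefficient of $\widehat{Q_V}Z_V/Z_V$. The $n=0$ terms, i.e.\ constants, must be checked separately; a constant there is consistent with $\widehat{Q_V}Z_V=0$ only if it vanishes, and this should be checked at low order.

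\textbf{Main obstacle.} The main difficulty is the bookkeeping of signs, the $\hbar$ normalization in $\widehat{Q_V}$, and the boundary terms with $m\le0$. Those boundary terms are exactly what make the expansion of $1/x'$ produce only negative powers after $d$.
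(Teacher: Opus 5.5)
Your proposal is correct and follows essentially the same route as the paper. Like the paper, you read off the $T_{k_1}\cdots T_{k_n}$-coefficient of $\widehat{Q_V}Z_V/Z_V$ to get the relation for $\langle\alpha_1,\alpha_{k_1},\dots,\alpha_{k_n}\rangle^V_{g,n+1}$, then resum against $\prod_j dz_j^{-k_j}/k_j$ using $1/x'(z)=\sum_k c_kz^{-k+1}$. You are also right that $\hbar$ must multiply $\sum_k c_kL_{-k}$ for the $\hbar$-gradings to match; the paper's displayed formula for $\widehat{Q_V}$ omits it.

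Two small corrections:

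In Step~3 the identity is exactly $\sum_{a+b=k,\,a,b\ge1}z_1^{-a}z_2^{-b}=\frac{z_1^{-k+1}-z_2^{-k+1}}{z_2-z_1}$. There is no factor $\frac{1}{z_1z_2}$ and no correction term.

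The $n=0$ case needs no separate check at low order. It is simply the $n=0$ instance of the identity, $\Res_{z_0=\infty}z_0\,\omega^V_{g,1}(z_0)=0$, which is equivalent to the vanishing of the $\mathbf T^0$-coefficient $-\langle\alpha_1\rangle^V_{g,1}$.
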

\begin{proof} 
	It follows from the definition that $\widehat{Q_{V}}Z_{V}({\bf T};\hbar)/Z_{V}({\bf T};\hbar)\in \mathbb Q[[{\bf T};\hbar]]$. By considering the coefficient of $T_{k_1}\cdots T_{k_n}\cdot\hbar^{2g-2+n}$ in this expression,
	we see the string equation is equivalent to the following system of equations for the connected correlators:
	for $2g-2+n+1>0$,
	\begin{align*}
		\<\alpha_1,\alpha_{k_1},...,\alpha_{k_n}\>^{V}_{g,n+1}
		=\delta_{(g,n)}^{(0,2)}k_1k_2c_{k_1+k_2}
		+\sum_{m=r}^{+\infty} c_m \sum_{j=1}^n k_j
		\<\alpha_{k_1},,...,\alpha_{k_j-m},...,\alpha_{k_n}\>^{V}_{g,n},
	\end{align*}
	where $\<-\>^{V}_{g,n}:=0$ if it contains insertion $\alpha_{-k}$ for $k\geq 0$.
	By multiplying $\frac{dz_1^{-k_1}}{k_1}\cdots \frac{dz_n^{-k_n}}{k_n}$ and then taking summation over $k_1,\cdots,k_n=1,2,\cdots$, these equations are equivalent to
	\begin{align}\label{eqn:Resz0=c}
		\mathop{\Res}_{z_0=\infty} z_0\cdot \omega^{V}_{g,n+1}(z_0,z_{[n]})
		=\delta_{(g,n)}^{(0,2)}\sum_{k_1+k_2\geq r}c_{k_1+k_2}dz_1^{-k_1}dz_2^{-k_2}
		-\sum_{j=1}^n d\Big( \frac{\omega^{V}_{g,n}(z_{[n]})}{dx(z_j)} \Big),
	\end{align}
	where we have used
	$$
	\sum_{m\geq r}\sum_{k_j>m}c_m\cdot \alpha_{k_j-m}\cdot dz_j^{-k_j}
	=\sum_{k_j\geq 1}\alpha_{k_j}\sum_{m\geq r}c_{m}\cdot dz_j^{-k_j-m}
	=-d\circ\frac{1}{dx(z_j)}\sum_{k_j\geq 1}\alpha_{k_j}\cdot\frac{dz_j^{-k_j}}{k_j}.
	$$
	Furthermore, by noticing
	$$
	\sum_{k_1+k_2\geq r}c_{k_1+k_2}dz_1^{-k_1}dz_2^{-k_2}=-d_1d_2\sum_{m\geq r}c_m\frac{z_1^{-m+1}-z_2^{-m+1}}{z_1-z_2}
	=d_1d_2\frac{x'(z_1)-x'(z_2)}{x'(z_1)x'(z_2)(z_1-z_2)},
	$$
	and substituting it to equation \eqref{eqn:Resz0=c},
	we conclude that the string equation is equivalent to equation~\eqref{eqn:GKM-string-wgn}.
	The Proposition is proved.
\end{proof}

\section{Topological recursion for generalized Kontsevich model}
\label{sec:TR-GKM}
In this section, we prove that for a generic polynomial $V(z)$ of degree $r+1$,
the multi-differentials $\omega^{V}_{g,n}$ of the GKM with potential $V(z)$ 
coincide with the ones defined by the topological recursion of the following shifted $r$-spin curve:
$$
\cC=\big(\mathbb P^1,\quad x(z)=V'(z), \quad y(z)=z\big).
$$
As a corollary, we prove the Gukov--Su{\l}kowski Conjecture for the shifted $r$-spin curve.
\subsection{Topological recursion of spectral curve and its integrability}\label{sec:TR-CohFT-KP}
We first review the Eynard--Orantin topological recursion~\cite{EO07}.
The input of the topological recursion is a set of spectral curve data:
\begin{align*}
	\cC=\big(\Sigma,\ \  x(z),\ \  y(z) \big)
\end{align*}
where $\Sigma$ is a Torelli marked Riemann surface of genus $g$, $x(z)$ and $y(z)$ are two (possibly multi-valued) functions over $\Sigma$ satisfying that: 
\begin{itemize}
\item $dx(z)$ is meromorphic differential with finitely many simple zeroes $\{z^\beta\}_{\beta=1}^{N}$;
\item $dy(z)$ is meromorphic and non-vanishing at the zeros of $dx(z)$.
\end{itemize}
A set of multi-differentials $\{\omega_{g,n}\}_{g,n\geq 0}$ are defined as follows. The initial data are directly given by spectral curve data:
\begin{align*}
	\omega_{0,1}(z_1):=y(z_1)dx(z_1),\qquad
	\omega_{0,2}(z_1,z_2):=B(z_1,z_2),
\end{align*}
where $B(z_1,z_2)$ is the fundamental normalized differential of the second kind on $\Sigma$, also referred to as the Bergman kernel. It is the unique meromorphic bidifferential with a double pole along the diagonal and vanishing A-periods with respect to the Torelli marking. 
Then we introduce the recursion kernel. For each zero $z^\beta$ of $dx$, let $\bar{z}\in\Sigma$ be the local involution of $z$ near the zero $z^\beta$ of $dx(z)$. The recursion kernel reads:
\[
K_{\beta}(z_0,z):=\frac{1}{2}\frac{\int_{z'=\bar{z}}^z \omega_{0,2}(z_0,z')}{(y(z)-y(\bar{z}))dx(z)}.
\]
Then for $2g-2+n+1>0$, $\omega_{g,n}$ are obtained by the following topological recursion:
\begin{align}
	\omega_{g,n+1}(z_0,z_{[n]})
	:=&\, \sum\limits_{\beta}\mathop{\Res}_{z=z^\beta}K_\beta(z_0,z)\bigg(\omega_{g-1,n+2}(z,\bar z,z_{[n]}) \nonumber\\
	&\, +\sum^{\prime}_{\substack{g_1+g_2=g\\ I\bigsqcup J=[n]}}\omega_{g_1,|I|+1}(z,z_{I})\omega_{g_2,|J|+1}(\bar z,z_{J})\bigg),
	\label{eqn:def-TR}
\end{align}
where $\sum\limits^{\prime}$ means we delete the terms containing $w_{0,1}$.
The recursion also gives, for $g\geq 2$,
$$
\omega_{g,0}:=\frac{1}{2-2g}\sum_{\beta}\mathop{\Res}_{z=z^\beta}\omega_{g,1}(z)\int_{z'=z^\beta}^{z}\omega_{0,1}(z').
$$
The terms of $\omega_{0,0}$ and $\omega_{1,0}$ are also defined in~\cite{EO07},  whose definitions are omitted as they are not needed for our study.

It follows from~\cite{EO07} that the multi-differentials $\omega_{g,n}$ are symmetric and meromorphic on $\Sigma^{n}$.
Moreover, for $2g-2+n>0$, the poles of $\omega_{g,n}$ on each copy of $\Sigma$ are located only at the critical points $\{z^\beta\}_{\beta=1}^{N}$.
Introduce a system of the differentials $\{d\zeta^{\bar\beta}_k(z)\}_{\beta=1,2,\cdots,N;k\geq 0}$ by
$$
d\zeta_0^{\bar\beta}(z):=-\frac{1}{\sqrt{x''(z^{\beta})}}\mathop{\Res}_{z'=z^\beta}\frac{B(z,z')}{z'-z^\beta},\qquad
d\zeta_{k}^{\bar\beta}(z):=\Big(-d\circ\frac{1}{dx(z)}\Big)^kd\zeta_0^{\bar\beta}(z).
$$ 
then according to~\cite{DOSS14,Eyn14}, the multi-differentials $\omega_{g,n}$ admit the following expansion:
\beq\label{eqn:TR-CohFT}
\omega_{g,n}(z_1,\cdots,z_n)=\sum_{\substack{k_1,\cdots,k_n\geq 0 \\ 1\leq \beta_1,\cdots,\beta_n\leq N}}
\<\bar e_{\beta_1}\bar\psi^{k_1},\cdots, \bar e_{\beta_n}\bar\psi^{k_n}\>_{g,n}
d\zeta_{k_1}^{\bar \beta_1}(z_1)\cdots d\zeta_{k_n}^{\bar \beta_n}(z_n).
\eeq
The summation on the right-hand side of~\eqref{eqn:TR-CohFT} is finite, since the order of the poles of 
$\omega_{g,n}$ is bounded \cite{EO07}.
In fact, the coefficients $\<-\>_{g,n}$ coincide with the ancestor correlators of the CohFT associated with the spectral curve. We refer the reader to~\cite{DOSS14,Eyn14} for further details; see also~\cite{GJZ23}.

Near a boundary point $b$ (by definition, $b$ is a point on $\Sigma$ such that $x(b)=\infty$), we choose a local coordinate $\lambda$ such that $\lambda^{-1}(b)=0$.
Then near the point $b$, the multi-differentials $\omega_{g,n}$ can be expanded under basis $\{d\lambda^{-k}\}_{k=1,2,\cdots}$ as follows.
For $2g-2+n\geq 0$,
\beq\label{eqn:TR-des-gn}
\omega_{g,n}(z_1,\cdots,z_n)=
\delta_{(g,n)}^{(0,2)}\frac{d\lambda_1d\lambda_2}{(\lambda_1-\lambda_2)^2}
+\sum_{k_1,\cdots,k_n>0}\<\alpha_{k_1},\cdots,\alpha_{k_n}\>^{\lambda}_{g,n}\frac{d\lambda_1^{-k_1}\cdots d\lambda_n^{-k_n}}{k_1\cdots k_n},
\eeq
where $\lambda_i=\lambda(z_i)$.
For $(g,n)=(0,1)$, $\omega_{0,1}$ has the following form:
\beq\label{eqn:TR-des-01}
\omega_{0,1}(z_1)=\sum_{k\geq 0} v_k\lambda_1^{k-1}d\lambda_1+\sum_{k_1,k_2>0}\<\alpha_{k_1}\>^{\lambda}_{0,2}\frac{d\lambda_1^{-k_1} }{k_1 }.
\eeq
We define the corresponding generating series $Z^{\lambda}_{\cC}(\mathbf{T};\hbar)$ by
\beq\label{def:tau-TR}
Z^{\lambda}_{\cC}(\mathbf{T};\hbar)
=\exp\bigg(\sum_{g\geq 0,n\geq 1}\hbar^{2g-2+n}\sum_{k_1,\cdots,k_n>0}
\<\alpha_{k_1},\cdots,\alpha_{k_n}\>^{\lambda}_{g,n}
\cdot \frac{T_{k_1}\cdots T_{k_n}}{n!}\bigg).
\eeq

\begin{proposition}[\cite{GJZ23}]\label{prop:KP-TR}
	Let $\cC$ be a spectral curve of genus zero possessing a single boundary point
	~\footnote{We remark that \cite{GJZ23} also proves the reduced-KP integrability for higher genus cases by introducing the non-perturbative generating series.}. 
	Let $\lambda$ be a local coordinate near the boundary point such that $x(z)\in \mathbb C[\lambda]$. 
	Then, the generating series $Z^{\lambda}_{\cC}(\mathbf{T};\hbar)$ constitutes a tau-function of the $x$-reduced KP hierarchy with respect to the KP times $\{T_k\}_{k\geq 1}$.
\end{proposition}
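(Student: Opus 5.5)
We plan to prove Proposition~\ref{prop:KP-TR} by identifying the point of the Sato Grassmannian attached to $Z^{\lambda}_{\cC}$ and showing that multiplication by $x(\lambda)$ preserves it. Since $\Sigma=\mathbb P^1$ and there is a single boundary point, $x$ is a polynomial in the global coordinate $\lambda$ and $B(z_1,z_2)=d\lambda_1 d\lambda_2/(\lambda_1-\lambda_2)^2$. Using the expansion \eqref{eqn:TR-des-gn} around $\lambda=\infty$, the first step is to define the would-be wave function and two-point function
\[
\Psi(\lambda)=\exp\Big(\sum_{2g-2+n\ge 0}\frac{\hbar^{2g-2+n}}{n!}\int^{\lambda}\!\!\cdots\!\int^{\lambda}\big(\omega_{g,n}-\delta_{(g,n)}^{(0,2)}B\big)\Big),
\]
with the unstable terms regularized in the usual way (so that $\Psi$ is the principal specialization $T_k=-\lambda^{-k}/k$ of $Z^\lambda_\cC$ as in \eqref{defnone}), together with the analogous $\Psi(w,z)$ obtained by specializing at $T_k=(w^{-k}-z^{-k})/k$.

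The second step is to show that $Z^\lambda_\cC$ is a KP tau-function at all. Here we would invoke the determinantal formula approach: show that the connected $n$-point functions built from the $\omega_{g,n}$ are given by the cyclic sum formula quoted in \S\ref{sec:diff version} (Zhou's formula) in terms of a single kernel $\Psi(w,z)$. For genus zero curves with one boundary point, the loop equations satisfied by the $\omega_{g,n}$ (which follow from topological recursion) imply the ``determinantal'' loop equations; equivalently, one shows that the $\omega_{g,n}$ are the determinantal correlators of a Christoffel--Darboux kernel $\frac{\sum}{x(w)-x(z)}$-type expression. Concretely I would prove, by induction on $2g-2+n$ using \eqref{eqn:def-TR}, that the determinantal formulae built from $\Psi(w,z)$ satisfy the same linear and quadratic loop equations and the same pole structure (poles only at zeros of $d x$), hence coincide with TR by uniqueness of solutions of loop equations with prescribed poles.

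The third step is the reduction: the kernel $\Psi(w,z)$ should have the form analogous to \eqref{eqn:2-point formula}, namely $\frac{\text{(bilinear operator)}}{x(w)-x(z)}\Psi^*(w)\Psi(z)$, which makes the span $\{\Psi(\cdot),\ \text{its derivatives}\}$ an $x(\lambda)$-stable subspace; i.e., the quantum curve $\hat{P}(x,\hbar\partial_x)\Psi=0$ coming from TR gives $x(\lambda)$ as a Kac--Schwarz operator. By the Proposition in the Kac--Schwarz subsection this yields $\widehat{x(\lambda)}Z=c\,Z$, Definition~\ref{def:x-red-KP}.

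The main obstacle is the second step: establishing the determinantal structure (equivalently the quantum curve) from topological recursion in all genera. I would first try to deduce it from the CohFT/Givental expansion \eqref{eqn:TR-CohFT}, writing $Z^\lambda_\cC$ as a product of $r$ copies of the Kontsevich--Witten tau function acted on by $R$-matrix and translation operators and then checking that the dilaton-shifted Laplace transform along $\lambda$ turns this into a $GL(\infty)$ group element acting on the vacuum, which automatically gives a KP tau-function.
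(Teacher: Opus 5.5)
There is a genuine gap, and it is the step you yourself flag as the main obstacle. Your kernel $\Psi(w,z)$ is the specialization of $Z^\lambda_\cC$ itself, and every series in $\frac{1}{w-z}+w^{-1}z^{-1}\mathbb C[\![w^{-1},z^{-1}]\!]$ is the two-point function of some point of the big cell. By Zhou's formula, the claim ``the connected $n$-point functions of $Z^\lambda_\cC$ are the cyclic sums of this kernel'' is therefore not an intermediate lemma: it is just a restatement of the KP property. Proving it through loop equations requires independent control of the kernel. The Bergère--Eynard argument obtains the loop equations for determinantal correlators from a linear $\hbar$-differential system satisfied by the kernel. That system is also what gives the global meromorphic continuation needed even to state ``poles only at the zeros of $dx$''. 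A kernel defined as an exponential of iterated integrals of the $\omega_{g,n}$ near $\lambda=\infty$ comes with neither, so the induction has nothing to run on.

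Your third step has the same problem. A quantum curve $\hat P\Psi=0$ for the wave function alone does not make the whole subspace $x$-stable. In the GKM, $x$-stability comes from a second Kac--Schwarz operator $Q_V$ with $[Q_V,x]=-\hbar$, which generates the subspace from $\Psi_V$; this is what lies behind \eqref{eqn:2-point formula}. For a general genus zero curve with arbitrary $y$, no such operator is available a priori. Your fallback, that a ``dilaton-shifted Laplace transform'' turns $\hat R\hat T\prod_\beta\tau_{\mathrm{KW}}$ into a $GL(\infty)$ element acting on the vacuum, is an assertion rather than a mechanism. The product $\prod_\beta\tau_{\mathrm{KW}}(\mathbf t^\beta)$ lives in an $N$-component Fock space, and $\hat R$ quantizes a quadratic Hamiltonian in the bosonic times, which is in general not a $GL(\infty)$ element there. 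Nothing in your sketch explains why restricting to the single set of times $\{T_k\}$ lands in the one-component Sato Grassmannian, let alone at an $x$-stable point.

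The missing idea is the route the paper takes from \cite{GJZ23}, sketched in the proof of Proposition~\ref{prop:x-red-KP-deformed-GKM}. One encodes the KP property and the $x$-reduction together as the Hirota equations $\Res_{z=\infty}x(z)^k\,\Gamma^{+}Z\cdot\Gamma^{-}Z=0$, $k\ge 0$. These are transported through the descendant--ancestor relation \eqref{eqn:TR-CohFT} into Hirota equations for the total ancestor potential. Then, by conjugating the vertex operators through $\hat R$ in the spirit of Givental--Milanov, they are reduced to local KdV Hirota equations at the critical points. There $x-x(z^\beta)=\eta^2/2$ makes $x^k$ a polynomial in $\eta^2$, and the local equations hold by the Witten--Kontsevich theorem. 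No kernel, determinantal formula or quantum curve is needed.

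Separately, you misread the hypothesis. $\lambda$ is only a local coordinate at the boundary point with $x\in\mathbb C[\lambda]$; for instance $x=\lambda^r$ in Theorem~\ref{thm:TR-rspin}, while $x$ is a general polynomial in the global coordinate $z$. So in general $B\neq d\lambda_1d\lambda_2/(\lambda_1-\lambda_2)^2$, and $Z^\lambda_\cC$ has nontrivial $(0,1)$ and $(0,2)$ contributions. This is repairable: the change from $z$ to $\lambda$ acts by $L_0$, the $L_{-k}$, and multiplication by $e^{\sum_k c_kT_k}$. All of these preserve the KP property and intertwine multiplication by $x$ in the two coordinates. But this reduction has to be stated.
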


\subsection{Topological recursion for generalized Kontsevich model}
In this subsection we show that the multi-differentials $\{\omega^{V}_{g,n}\}$ defined by the GKM with polynomial potential $V(z)$ coincide with the ones $\{\omega_{g,n}\}$ defined by the topological recursion of the spectral curve $\cC=\big(\mathbb P^1, x(z)=V'(z),  y(z)=z\big)$,
proving Theorem \ref{thm:main1}.
Let $x(z)=a_{r}z^{r}+\cdots+a_1 z+a_0$, where $a_r\ne 0$, 
we require that $x(z)$ has only simple critical points, which is satisfied for generic parameters $a_i$.

We first prove a useful lemma for the multi-differentials from topological recursion.
\begin{lemma}\label{lem:regularity-TR}
	Let $\{\omega_{g,n}\}_{g,n\geq 0}$ be the set of multi-differentials defined by the topological recursion on the spectral curve and let $\{z^\gamma\}_{\gamma=1}^{N}$ be the set of critical points.
	For $2g-2+n>0$ and $\gamma=1,\cdots,N$, we have 
	$$
	\omega_{g,n}(z,z_{[n-1]})+\omega_{g,n}(\bar z,z_{[n-1]}) \,  {\text{ is regular near $z=z^{\gamma}$}},
	$$
	and 
	$$
	\frac{\omega_{g,n}(z,z_{[n-1]})}{dx(z)}+\frac{\omega_{g,n}(\bar z,z_{[n-1]})}{dx(z)} \,  {\text{ is regular near $z=z^{\gamma}$}}.
	$$
	Furthermore, for $\gamma=1,\cdots,N$,
	$$
	\frac{\omega_{0,2}(z,\bar z)}{dx(z)}+\frac{\tilde\omega_{0,2}(z,z)}{dx(z)}\, {\text{ is regular near $z=z^{\gamma}$}},
	$$
	where $\tilde \omega_{0,2}(z_1,z_2)=\omega_{0,2}(z_1,z_2)-\frac{dx(z_1)dx(z_2)}{(x(z_1)-x(z_2))^2}$. 
\end{lemma}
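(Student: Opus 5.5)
The plan is to handle the three statements in order, using the local structure of $\omega_{g,n}$ near a simple critical point $z^\gamma$. Fix $\gamma$ and choose a local coordinate $\zeta$ near $z^\gamma$ with $x(z)=x(z^\gamma)+\zeta^2$. In this coordinate the involution is $\bar z\leftrightarrow -\zeta$. Then $dx=2\zeta\,d\zeta$ is odd in $\zeta$, and $dx(\bar z)=dx(z)$ as differentials. The whole argument reduces to parity statements in $\zeta$.

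\emph{First statement.} We use the standard linear loop equation from \cite{EO07}: for $2g-2+n>0$, $\omega_{g,n}(z,\cdot)+\omega_{g,n}(\bar z,\cdot)$ is holomorphic at $z^\gamma$. To justify it from \eqref{eqn:def-TR}, we write $\omega_{g,n}(z_0,\cdot)=\sum_\beta\Res_{z=z^\beta}K_\beta(z_0,z)\,(\cdots)$. For $z_0$ near $z^\gamma$, only the $\beta=\gamma$ term can be singular there. The integrand of $K_\gamma$ contains $\int_{\bar z}^{z}\omega_{0,2}(z_0,\cdot)$, and this is antisymmetric under $z\mapsto\bar z$. Moving the residue contour, $\Res_{z=z^\gamma}K_\gamma(z_0,z)(\cdots)$ becomes $-\Res_{z=z_0}-\Res_{z=\bar z_0}$ of the same integrand. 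Summing the $z_0$ and $\bar z_0$ versions, the polar parts cancel because the kernel is antisymmetric, so the sum is regular.

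An alternative route uses \eqref{eqn:TR-CohFT}. The basis $d\zeta^{\bar\gamma}_k$ has only odd-order poles in $\zeta$ at $z^\gamma$, since it is generated from $d\zeta^{\bar\gamma}_0\sim d\zeta/\zeta^2$ by repeatedly applying $-d\circ\frac{1}{dx}$. Hence its polar part is anti-invariant under $\zeta\mapsto-\zeta$, and the polar part cancels in the symmetrization. The other basis elements $d\zeta^{\bar\beta}_k$ with $\beta\neq\gamma$ are holomorphic at $z^\gamma$.

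\emph{Second statement.} Since $dx(\bar z)=dx(z)$, the expression equals $\big(\omega_{g,n}(z,\cdot)+\omega_{g,n}(\bar z,\cdot)\big)/dx(z)$. Dividing by $dx$ introduces at most a simple pole, so we must show that the symmetrized differential vanishes at $\zeta=0$. Write $\omega(z)=f(\zeta)d\zeta$. Under $\zeta\mapsto-\zeta$, $d\zeta\mapsto -d\zeta$, so the sum is $\big(f(\zeta)-f(-\zeta)\big)d\zeta$ up to the polar part, which already cancels. Its regular part therefore has an odd coefficient function and vanishes at $\zeta=0$. Dividing by $2\zeta\,d\zeta$ gives an even regular function. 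This step is routine once the first statement is written in the parity form above.

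\emph{Third statement.} This is the main obstacle, because both terms are singular on the diagonal. We take $B(z_1,z_2)=\frac{dz_1dz_2}{(z_1-z_2)^2}$ on $\mathbb P^1$. Then $\tilde\omega_{0,2}(z,z)$ is understood as the limit $z_2\to z_1$, which is regular because the double poles cancel; it is a quadratic differential in $z$. Near $z^\gamma$ the term $\omega_{0,2}(z,\bar z)$ has a double pole $\sim -d\zeta^2/(4\zeta^2)$, since $\bar\zeta=-\zeta$. The term $\frac{dx(z_1)dx(z_2)}{(x(z_1)-x(z_2))^2}$ also has singular behavior on the diagonal, as $x$ is $2:1$ there; its diagonal limit picks up the Schwarzian-type term $\frac{1}{6}\{x,z\}$, which has a double pole at $z^\gamma$.

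We will expand both terms in $\zeta$ and check that the coefficients of $\zeta^{-2}$ and $\zeta^{-1}$ in the sum $\omega_{0,2}(z,\bar z)+\tilde\omega_{0,2}(z,z)$ cancel. The expected mechanism is $B(z,\bar z)+B_{\rm reg}(z,z)=\frac{dx^2}{(x(z)-x(\bar z))^2}$-type regularization, computed through the Schwarzian of the local coordinate. Concretely, $\{x,\zeta\}=-\frac{3}{2\zeta^2}$ for $x=\zeta^2$ should cancel $-\frac{1}{4\zeta^2}$ after accounting for the factors. After that, dividing by $dx\sim\zeta\,d\zeta$ requires the sum to vanish to first order in $\zeta$, i.e. to have no $\zeta^{-1}$ and no $\zeta^{0}$ terms. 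We get this from parity: the sum is invariant under $\zeta\mapsto-\zeta$, so the quadratic differential has an even coefficient, and we will verify that its constant term is absent.

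If this parity bookkeeping fails, the fallback is to compute explicitly with $x(z)=x(z^\gamma)+\tfrac{x''(z^\gamma)}{2}(z-z^\gamma)^2+\cdots$ up to the needed order and check the cancellation directly.
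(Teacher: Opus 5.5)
\textbf{Gap in the third statement.} Your key step is the claim that $\omega_{0,2}(z,\bar z)+\tilde\omega_{0,2}(z,z)$ is invariant under $\zeta\mapsto-\zeta$. This is false. $\omega_{0,2}(z,\bar z)$ is invariant by the symmetry of $B$, but $\tilde\omega_{0,2}(z,z)=-\frac16\{x,z\}\,dz^2$ is not. To see what actually happens, write $B(z_1,z_2)=\frac{d\zeta_1d\zeta_2}{(\zeta_1-\zeta_2)^2}+h(\zeta_1,\zeta_2)\,d\zeta_1d\zeta_2$ with $h$ holomorphic and symmetric near $(z^\gamma,z^\gamma)$. For $x=x(z^\gamma)+\zeta^2$ one has the exact identity
\[
\frac{d\zeta_1d\zeta_2}{(\zeta_1-\zeta_2)^2}-\frac{dx(z_1)dx(z_2)}{(x(z_1)-x(z_2))^2}=\frac{d\zeta_1d\zeta_2}{(\zeta_1+\zeta_2)^2}.
\]
Hence $\tilde\omega_{0,2}(z,z)=\big(\frac{1}{4\zeta^2}+h(\zeta,\zeta)\big)d\zeta^2$ and $\omega_{0,2}(z,\bar z)=-\big(\frac{1}{4\zeta^2}+h(\zeta,-\zeta)\big)d\zeta^2$. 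The sum is therefore exactly $\big(h(\zeta,\zeta)-h(\zeta,-\zeta)\big)d\zeta^2=\big(2\,\partial_2h(0,0)\,\zeta+O(\zeta^2)\big)d\zeta^2$, which is not even in general. For example, for $x(z)=z^3+z^2$ at $z^\gamma=0$ one finds $\partial_2h(0,0)=-1$; the sum is $-2z\,dz^2+O(z^2)\,dz^2$ and its quotient by $dx$ is $-dz+O(z)\,dz$. So parity cannot be what removes the $\zeta^{-1}$ coefficient. For the $\zeta^{0}$ coefficient, which is the real content of the statement, you give no mechanism beyond ``we will verify''. The displayed ``mechanism'' with $\frac{dx^2}{(x(z)-x(\bar z))^2}$ is not meaningful, since $x(\bar z)=x(z)$.

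\textbf{The missing idea and the fix.} What is missing is exactly this splitting of $B$ into its universal Airy part and a holomorphic remainder, which is what the paper does. The Airy parts contribute $\mp\frac{d\zeta^2}{4\zeta^2}$ and cancel identically, so no $\zeta^{-1}$ term ever arises. The remainder enters as $h(\zeta,\zeta)-h(\zeta,-\zeta)$, whose constant terms $\pm h(0,0)$ cancel; these are the paper's $\mp c\,\frac{d\eta}{\eta}$ after division by $dx$. Hence the sum is $O(\zeta)\,d\zeta^2$, and dividing by $dx=2\zeta\,d\zeta$ leaves a regular differential. Note also that in the $z$-coordinate $\{x,z\}$ has a simple pole $-\frac{x'''(z^\gamma)}{2x''(z^\gamma)(z-z^\gamma)}$ besides the double pole. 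So passing to $\{x,\zeta\}=-\frac{3}{2\zeta^2}$ needs the chain rule $\{x,z\}\,dz^2=\{x,\zeta\}\,d\zeta^2+\{\zeta,z\}\,dz^2$, and $-\frac16\{\zeta,z\}\,dz^2$ is precisely $h(\zeta,\zeta)\,d\zeta^2$. Your brute-force fallback would succeed, but as written it is only a promise.

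\textbf{First two statements.} These are sound. The contour-deformation (linear loop equation) argument works, and deducing the second statement from the first via $\omega(z)+\omega(\bar z)=\big(f(\zeta)-f(-\zeta)\big)d\zeta$ is cleaner than the paper's ``similar''. One correction to your alternative route: for $k\geq1$ and $\beta\neq\gamma$, $d\zeta^{\bar\beta}_k$ is not holomorphic at $z^\gamma$, because dividing by $dx$ creates poles at every critical point. What is true, and what the paper uses, is that $-d\circ\frac{1}{dx}$ maps any local differential, regular or not, to one whose polar part at $z^\gamma$ is spanned by $\zeta^{-2m}d\zeta$. These poles are of even order, not ``odd-order'', and are anti-invariant.
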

\begin{proof}
	For the first part of the Lemma, according to the structure of $\omega_{g,n}$ (equation~\eqref{eqn:TR-CohFT}), we just need to prove that 
	$$
	d\zeta_k^{\bar\beta}(z)+d\zeta_k^{\bar\beta}(\bar z) \,  {\text{ is regular near $z=z^{\gamma}$}}
	$$
	for $k=0,1,2,\cdots$ and $\beta,\gamma=1,\cdots,N$.
	In fact, for each $\gamma=1,\cdots,N$, let $\eta=\eta(z)$ (depends on $\gamma$) be the local airy coordinate near $z^{\gamma}$ defined by $x(z)=x(z^{\gamma})+\frac{1}{2}\eta^2$ satisfying $\lim_{z=z^\gamma}\eta/z=\sqrt{x''(z^\gamma)}$,
	then for $k=0$, $d\zeta_0^{\bar\beta}(z)$ admits the following local expansion:
	$$
	d\zeta_0^{\bar\beta}(z)=-\frac{\delta_{\gamma,\beta}}{\eta^2}d\eta+{\text{regular part}}. 
	$$
	Notice that $\eta(\bar z)=-\eta(z)$, we get
	$$
	d\zeta_0^{\bar\beta}(z)+d\zeta_0^{\bar\beta}(\bar z) \,  {\text{ is regular near $z=z^{\gamma}$}}.
	$$
	Furthermore, since 
	$$
	-d\frac{1}{dx(z)}(\eta^nd\eta)=-(n-1)\eta^{n-2}d\eta,
	$$
	the singular part of $d\zeta_k^{\bar\beta}(z)$ is of the form $f_k^{\bar\beta}(\eta^{-2})d\eta$, where $f_k^{\bar\beta}$ is a polynomial.
	Then by the similar discussion as above, one can see
	$$
	d\zeta_k^{\bar\beta}(z)+d\zeta_k^{\bar\beta}(\bar z) \,  {\text{ is regular near $z=z^{\gamma}$}}.
	$$
	This proves the first part of the Lemma and
	the proof for the second part is similar. 
	
	For the third part of the Lemma, similar as the discussion in the proof of~\cite[Proposition 3.1]{GZ25b}, we first notice the following local expansions:
	$$
	\frac{\omega_{0,2}(z,\bar z)}{dx(z)}=\frac{1}{dx(z)}\frac{d\eta(z)d\eta(\bar z)}{(\eta(z)-\eta(\bar z))^2}
	-c\cdot \frac{d\eta(z)}{\eta(z)}+{\text{regular part}}
	$$
	and
	$$
	\frac{\tilde \omega_{0,2}(z, z)}{dx(z)}=\frac{1}{dx(z)}
	\Big(\frac{d\eta(z)d\eta(z')}{(\eta(z)-\eta(z'))^2}
	-\frac{d\eta(z)^2d\eta(z')^2}{(\eta(z)^2-\eta(z')^2)^2}
	\Big)\Big|_{z'\to z}
	+c\cdot \frac{d\eta(z)}{\eta(z)}+{\text{regular part}}.
	$$
	Then the result follows from the straightforward computations.
	The Lemma is proved.
\end{proof}

\begin{theorem}[=Theorem \ref{thm:main1}]
	\label{thm:GKM-TR}
	Let $V(z)$ be a polynomial in variable $z$ such that $V'(z)$ has only simple critical points, we denote by $\omega^{V}_{g,n}$ the local multi-differentials defined by the GKM with potential $V$, and denote by $\omega_{g,n}$ the multi-differentials defined by the topological recursion on the spectral curve
	$\cC=(\mathbb P^1, x(z)=V'(z),  y(z)=z)$.
	Then for $2g-2+n>0$,
	\beq\label{eqn:GKMw-TRw}
	\omega^{V}_{g,n}(z_1,...,z_n)=\omega_{g,n}(z_1,...,z_n) \, {\text { near }}z_1,\cdots,z_n=\infty.
	\eeq
\end{theorem}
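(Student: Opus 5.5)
We follow the uniqueness strategy outlined at the end of \S\ref{sec:KP}. Two tau-functions of the same $x$-reduced KP hierarchy that satisfy the same string equation and have the same normalization must coincide. We apply this to $Z_V(\mathbf{T};\hbar)$ and to the topological recursion generating series $Z^{\lambda}_{\cC}(\mathbf{T};\hbar)$ of \eqref{def:tau-TR}, taking the local coordinate $\lambda=z$ near the unique boundary point $z=\infty$. Since $x(z)=V'(z)\in\mathbb C[z]$, Proposition~\ref{prop:KP-TR} applies directly, so $Z^{z}_{\cC}$ is a tau-function of the $x$-reduced KP hierarchy. The GKM side is already known: $x(z)\cdot$ and $Q_V$ are Kac--Schwarz operators for $Z_V$, and $Z_V$ is the unique (normalized) tau-function of the $x$-reduced KP hierarchy annihilated by $\widehat{Q_V}$. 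It therefore suffices to show $\widehat{Q_V}Z^{z}_{\cC}=0$. By the computation in the proof of Proposition~\ref{prop:GKM-string-wgn}, which used only the definition of $\omega_{g,n}$ via correlators and not the matrix model, this reduces to checking that the TR differentials satisfy \eqref{eqn:GKM-string-wgn} with $\omega^V$ replaced by $\omega$.

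One bookkeeping point must be handled first. The unstable terms $\omega_{0,1}$ and $\omega_{0,2}$ contribute to $\log Z^{z}_{\cC}$ through the correlators $\<\alpha_k\>_{0,1}$ and $\<\alpha_{k_1},\alpha_{k_2}\>_{0,2}$. We will check directly that the $(0,1)$ and $(0,2)$ instances of the string identity hold. For $(0,2)$, the TR Bergman kernel is $B=dz_1dz_2/(z_1-z_2)^2$, whose regular part in \eqref{eqn:TR-des-gn} vanishes, while the GKM $\omega^V_{0,2}$ is not zero. The correction term $\delta^{(0,2)}\,d_1d_2\frac{x'(z_1)-x'(z_2)}{x'(z_1)x'(z_2)(z_1-z_2)}$ in \eqref{eqn:GKM-string-wgn} must be matched by the $(0,3)$ equation together with the $n=1$ and $n=2$ equations, which pin down $\omega_{0,1}$ and $\omega_{0,2}$ through $y=z$. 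If needed, we adjust by a linear or quadratic exponential factor, which is harmless by the Remark after Definition~\ref{def:x-red-KP}.

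The core step is proving the residue identity
\[
\mathop{\Res}_{z_0=\infty} z_0\,\omega_{g,n+1}(z_0,z_{[n]})=-\sum_j d_j\Big(\frac{\omega_{g,n}(z_{[n]})}{dx(z_j)}\Big)+\delta^{(0,2)}_{(g,n)}(\cdots)
\]
for the TR differentials. On $\mathbb P^1$, $\omega_{g,n+1}$ has poles in $z_0$ only at the critical points $z^\beta$ when $2g-1+n>0$. We therefore move the contour, writing $\Res_{z_0=\infty}=-\sum_\beta\Res_{z_0=z^\beta}$. Next we insert the recursion \eqref{eqn:def-TR} and exchange residues, using the identity $\sum_\beta\Res_{z_0=z^\beta}z_0\,K_\beta(z_0,z)$-type manipulations with $y(z)=z$. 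The key observation is that $z_0$ is exactly $y(z_0)$. The sum $-\sum_\beta\Res_{z_0} y(z_0)\,\omega_{g,n+1}(z_0,\cdot)$ is the standard dilaton- or string-type combination. Via the loop equations, i.e.\ the regularity statements of Lemma~\ref{lem:regularity-TR}, it becomes $\sum_\beta\Res_{z=z^\beta}$ of a sum of terms $\frac{\omega_{g,n}(z,\dots)}{dx(z)}B(z,z_j)$. Finally, moving those residues back to the poles at $z=z_j$ gives $-d_j(\omega_{g,n}/dx(z_j))$. For $(g,n)=(0,2)$, the $\tilde\omega_{0,2}$ clause of Lemma~\ref{lem:regularity-TR} produces the extra term: the residue at $z=z_1,z_2$ of $\frac{dx\,dx}{(x-x)^2}$-corrected pieces yields $d_1d_2\frac{x'(z_1)-x'(z_2)}{x'(z_1)x'(z_2)(z_1-z_2)}$.

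We expect this residue computation to be the main obstacle, specifically controlling which terms are regular at $z^\beta$ so that they drop out, and correctly handling the $(0,2)$ correction from the diagonal. We would first verify the identity for $(g,n)=(0,2)$ and $(1,0)$ by hand, to fix signs and normalizations, before carrying out the general induction.
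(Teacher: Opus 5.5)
Your overall strategy matches the paper's: uniqueness from $x$-reduced KP integrability, the string equation and the initial condition, with the TR string equation obtained from the recursion. Your main term is also handled as in the paper: loop equations, then moving the residue to $z=z_j$. But there are two genuine problems.

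\textbf{The unstable terms.} Your bookkeeping here is wrong, and this is exactly the step that makes the theorem equivalent to $Z_V=Z^{z}_{\cC}$.
\begin{itemize}
\item In fact $\omega^V_{0,1}=\omega^V_{0,2}=0$. The constant $\mathcal C_V$ removes the $\hbar^{-1}$ and $\hbar^{0}$ parts, so $\log Z_V$ starts at order $\hbar$ (see Example~\ref{exam:partition-function-GKM}).
\item On the TR side, $\omega_{0,1}=z\,x'(z)\,dz$ is polynomial and $\omega_{0,2}$ is exactly $dz_1dz_2/(z_1-z_2)^2$. Hence $\<\alpha_k\>^{z}_{0,1}=\<\alpha_k,\alpha_l\>^{z}_{0,2}=0$, and nothing needs to be absorbed.
\item Your fallback would be illegitimate anyway. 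The Remark allows only linear exponential factors; a quadratic one generally destroys KP ($e^{cT_1^2/2}$ already violates the Hirota KP equation).
\item The $\delta^{(0,2)}$ term is simply what the $(0,3)$ recursion produces:
$-\sum_\beta\Res_{z=z^\beta}\frac{1}{dx(z)}B(z,z_1)B(\bar z,z_2)=d_1d_2\sum_\beta\frac{1}{x''(z^\beta)(z_1-z^\beta)(z_2-z^\beta)}$, matched by \eqref{eqn:B-x}. Equivalently it is $-\sum_j d_j\big(B(z_1,z_2)/dx(z_j)\big)$, coming from the poles at $z=z_1,z_2$.
\item The $\tilde\omega_{0,2}$ clause of Lemma~\ref{lem:regularity-TR} plays no role in the $(0,2)$ case. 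It is needed for $(g,n)=(1,0)$, to get $\sum_\beta\Res_{z=z^\beta}\omega_{0,2}(z,\bar z)/dx(z)=0$.
\item You also need to say why the remaining terms vanish. For $\omega_{g_1}(z,\cdot)\,\omega_{g_2}(\bar z,\cdot)$ and $\omega_{g-1,n+2}(z,\bar z,\cdot)$, symmetrizing with the lemma turns each into $-\sum_\beta\Res_{z=z^\beta}$ of a global form, namely $\frac{1}{dx(z)}\omega_{g_1}(z,\cdot)\omega_{g_2}(z,\cdot)$ or $\frac{1}{dx(z)}\omega_{g-1,n+2}(z,z,\cdot)$. These forms have poles only at the critical points, so the sums vanish by the residue theorem.
\end{itemize}

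\textbf{The case $n=0$.} You never address it. Since $2g-2+n>0$ allows $n=0$, the statement asserts $\omega_{g,0}=0$ for $g\geq 2$, because $\omega^V_{g,0}=[\hbar^{2g-2}]\log Z_V(\mathbf 0;\hbar)=0$. With $Z^{z}_{\cC}$ defined by \eqref{def:tau-TR}, your ``same normalization'' is vacuous and says nothing about $\omega_{g,0}$. This is not automatic: in the deformed case $\omega^{\epsilon}_{g,0}\neq0$, which is why that theorem is restricted to $n\geq1$. The paper closes this by homogeneity. With $\deg z=\frac1r$ and $\deg a_i=1-\frac ir$, one has $\deg\omega_{g,0}=-\frac{r+1}{r}(2g-2)<0$. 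But $\omega_{g,0}$ is a series in the $a_i/a_r$, all of non-negative degree, so it must vanish.

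\textbf{A smaller technical point.} Moving to $-\sum_\beta\Res_{z_0=z^\beta}$ and then exchanging with $\Res_{z=z^\beta}$ is not a free exchange, since the kernel has poles at $z_0=z,\bar z$. Done correctly, it only reproduces the direct route. That route commutes $\Res_{z_0=\infty}$ with $\Res_{z=z^\beta}$ and uses $\Res_{z_0=\infty}z_0K_\beta(z_0,z)=-\frac{1}{2\,dx(z)}$, reducing everything to \eqref{eqn:TR-string-02}--\eqref{eqn:TR-string-vanishterm}.
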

\begin{proof}
	For the spectral curve $\cC$, we take the local coordinate $\lambda=z$ near the boundary point $\infty$, and denote by $Z_{\cC}({\bf T};\hbar)$ the corresponding generating series defined by equation~\eqref{def:tau-TR}.
	Notice that the local expansion of $\omega_{0,1}$ and $\omega_{0,2}$ have the following forms:
	$$\textstyle
	\omega_{0,1}(z_1)=y(z_1)dx(z_1)\in \mathbb C[z_1]\cdot dz_1,\qquad
	\omega_{0,2}(z_1,z_2)=\frac{dz_1dz_2}{(z_1-z_2)^2},
	$$
	we have for $k,l\geq 1$,
	$\<\alpha_k\>_{0,1}^{z}=\<\alpha_k,\alpha_l\>^{z}_{0,2}=0$.
	Therefore, equation~\eqref{eqn:GKMw-TRw} is equivalent to the following identification:
	$$
	Z_{V}({\bf T};\hbar)=Z_{\cC}({\bf T};\hbar).
	$$
	We prove this by showing that both of these two generating series satisfy the $x$-reduced KP integrability, the same string equation, and the same initial condition.
	
	The $x$-reduced KP integrability for $Z_{V}$ has been introduced in~\cite{Ale21a} and the $x$-reduced KP integrability for $Z_{\cC}$ can be derived by Proposition~\ref{prop:KP-TR}.

	Then we  prove that $Z_{\cC}({\bf T};\hbar)$ satisfies the string equation 
	$$
	\frac{\pd}{\pd T_1} Z_{\cC}({\bf T};\hbar)
	=\sum_{m\geq r}c_m L_{-m}Z_{\cC}({\bf T};\hbar),
	$$
	where $c_k$ is defined by $\frac{1}{x'(z)}=\sum_{m=r}^{\infty}c_mz^{-m+1}$. 
	According to Proposition~\ref{prop:GKM-string-wgn}, we just need to prove that the multi-differentials $\omega_{g,n}$ of the spectral curve $\cC$ satisfy  the differential version of the string equation ~\eqref{eqn:GKM-string-wgn},
	i.e.,
	\begin{align}
		&\, \mathop{\Res}_{z_0=\infty} z_0\cdot \omega_{g,n+1}(z_0,z_{[n]})
		=-\sum_{j=1}^n d\Big( \frac{\omega_{g,n}(z_{[n]})}{dx(z_j)} \Big)
		+\delta_{(g,n)}^{(0,2)}\,d_1d_2\frac{x'(z_1)-x'(z_2)}{x'(z_1)x'(z_2)(z_1-z_2)}.
		\label{eqn:GKM-string-wgn-V}
	\end{align}
	We derive this from the topological recursion procedure. 
	Explicitly, we prove the differential version of the string equation for $\omega_{g,n}$ by multiplying $z_0$ and then taking the residue $\mathop{\Res}_{z_0=\infty} $ on the both sides of equation~\eqref{eqn:def-TR}. 
	After doing this, the left-hand side of equation~\eqref{eqn:def-TR} gives $\Res_{z_0=\infty}z_0\cdot \omega_{g,n+1}(z_0,z_{[n]})$.
	For the right-hand side of equation~\eqref{eqn:def-TR}, we first recall that the Bergman kernel $B(z_1,z_2)=\frac{dz_1dz_2}{(z_1-z_2)^2}$ for the Riemann sphere $\mathbb P^1$, thus,
	$$
	K_{\beta}(z_0,z)=\frac{1}{2}\frac{\int_{z'=\bar{z}}^z w_{0,2}(z_0,z')}{(y(z)-y(\bar{z}))dx(z)}
	=\frac{1}{2(z_0-z)(z_0-\bar z)}\frac{dz_0}{dx(z)}.
	$$
	Therefore, for any multi-differential $\omega$, we have 
	\begin{align*}
		\mathop{\Res}_{z_0=\infty}z_0\cdot\sum_{\beta} \mathop{\Res}_{z=z^\beta}K_{\beta}(z_0,z)\cdot \omega(z,\bar z,-)
		=&\, \sum_{\beta}\mathop{\Res}_{z=z^\beta}\Big(\mathop{\Res}_{z_0=\infty}z_0\cdot K_{\beta}(z_0,z)\Big)\cdot \omega(z,\bar z,-)\\
		=&\, -\frac{1}{2}\sum_{\beta}\mathop{\Res}_{z=z^\beta}\frac{1}{dx(z)}\cdot \omega(z,\bar z,-).
	\end{align*}
	Applying this to the right-hand side of equation~\eqref{eqn:def-TR},
	to prove the differential version of the string equation for $\omega_{g,n}$,
	we just need to prove the following equations:
	\beq\label{eqn:TR-string-02}
	-\sum_{\beta}\mathop{\Res}_{z=z^\beta} \frac{1}{dx(z)}\omega_{0,2}(z,z_1)
	\omega_{0,2}(\bar z,z_2) =d_1d_2\frac{x'(z_1)-x'(z_2)}{x'(z_1)x'(z_2)(z_1-z_2)},
	\eeq
	for $2g-2+n>0$,
	\beq\label{eqn:TR-string}
	-\sum_{\beta}\mathop{\Res}_{z=z^\beta} \frac{1}{dx(z)}\omega_{g,n}(z,z_{[n]\setminus \{j\}})
	\omega_{0,2}(\bar z,z_j)
	=-d\Big( \frac{\omega_{g,n}(z_{[n]})}{dx(z_j)} \Big),
	\eeq
	and
	\beq\label{eqn:TR-string-vanishterm}
	\sum_{\beta}\mathop{\Res}_{z=z^\beta} \frac{1}{dx(z)}\omega_{g_1,n_1+1}(z,-)
	\omega_{g_2,n_2+1}(\bar z,-)
	=\sum_{\beta}\mathop{\Res}_{z=z^\beta} \frac{1}{dx(z)}\omega_{0,2}(z,\bar z)
	=0,
	\eeq
	where $2g_i-2+n_i+1>0$ for $i=1,2$. We prove these equations one by one.
	
	For equation~\eqref{eqn:TR-string-02}, we have firstly
	$$
	-\sum_{\beta}\mathop{\Res}_{z=z^\beta} \frac{1}{dx(z)}\omega_{0,2}(z,z_1)\omega_{0,2}(\bar z,z_2)
	=-\sum_{\beta}\mathop{\Res}_{z=z^\beta} \frac{dz d\bar z}{dx(z)}d_1d_2\frac{1}{(z-z_1)(\bar z- z_2)}.
	$$
	Notice that for each $\beta$, $dx(z)$ has a simple root at $z^{\beta}$ and $\lim_{z=z^\beta} \frac{d\bar z}{dz}=-1$, $\lim_{z=z^{\beta}}\bar z=z^\beta$, 
	then we have
	$$
	-\mathop{\Res}_{z=z^\beta} \frac{dz d\bar z}{dx(z)}d_1d_2\frac{1}{(z-z_1)(\bar z- z_2)}
	=d_1d_2\frac{1}{x''(z^\beta)(z^\beta-z_1)(z^\beta- z_2)}.
	$$
	Thus, to prove equation~\eqref{eqn:TR-string-02}, it is sufficient to prove the following equation:
	\beq\label{eqn:B-x}
	\sum_{\beta}\frac{1}{x''(z^\beta)(z_1-z^\beta)(z_2-z^\beta)}
	=\frac{x'(z_1)-x'(z_2)}{x'(z_1)x'(z_2)(z_1-z_2)}.
	\eeq
	Starting from the right hand side of equation~\eqref{eqn:B-x}, the function $\frac{x'(z_1)-x'(z_2)}{x'(z_1)x'(z_2)(z_1-z_2)}$ has poles only at $z_1=z^\beta$ or $z_2=z^\beta$,
	then one has,
	\begin{align}\label{eqn:B-x-cbeta}
		\frac{x'(z_1)-x'(z_2)}{x'(z_1)x'(z_2)(z_1-z_2)}
		=\sum_{\beta,\gamma}\frac{c_{\beta,\gamma}}{(z_1-z^\beta)(z_2-z^\gamma)}.
	\end{align}
	To specify these constants $c_{\beta,\gamma}$,
	we multiply both sides of equation~\eqref{eqn:B-x} by $dz_1$ and then take the residue at $z_1=z^\beta$ to obtain
	$$
	\sum_{\gamma}\frac{c_{\beta,\gamma}}{z_2-z^\gamma}
	=\mathop{\Res}_{z_1=z^\beta}\frac{x'(z_1)-x'(z_2)}{x'(z_1)x'(z_2)(z_1-z_2)}dz_1
	=\frac{1}{x''(z^\beta)\, (z_2-z^\beta)}.
	$$
	Therefore, we have $c_{\beta,\gamma}=\frac{\delta_{\beta,\gamma}}{x''(z^\beta)}$. 
	Then equation~\eqref{eqn:B-x-cbeta} is equivalent to equation~\eqref{eqn:B-x}, which proves equation~\eqref{eqn:TR-string-02}.
	
	For equation~\eqref{eqn:TR-string}, by Lemma~\ref{lem:regularity-TR}, we have known that 
	$\frac{\omega_{g,n}(z,z_{[n]\setminus \{j\}})}{dx(z)}+\frac{\omega_{g,n}(\bar z,z_{[n]\setminus \{j\}})}{dx(z)}$ is regular near $z^{\beta}$, thus we just need to prove the following equation:
	\beq\label{eqn:TR-string-2}
	\sum_{\beta}\mathop{\Res}_{\bar z=z^\beta} \frac{1}{dx(\bar z)}\omega_{g,n}(\bar z,z_{[n]\setminus \{j\}})
	\omega_{0,2}(\bar z,z_j)
	=-d\Big( \frac{\omega_{g,n}(z_{[n]})}{dx(z_j)} \Big),
	\eeq
	where we have used $\Res_{z=z^{\beta}}f(z)dz=\Res_{\bar z=z^\beta}f(z)dz$ and $dx(z)=dx(\bar z)$. 
	In fact, the expression on the left hand side of equation~\eqref{eqn:TR-string-2} has only possible poles at the critical point and $\bar z=z_j$, thus by the Residue theorem, the left hand side of equation~\eqref{eqn:TR-string-2} equals
	$$
	-\mathop{\Res}_{\bar z=z_j} \frac{1}{dx(\bar z)}\omega_{g,n}(\bar z,z_{[n]\setminus \{j\}})
	\omega_{0,2}(\bar z,z_j).
	$$
	Then by using the property of the Bergman kernel: $\Res_{z=z'}f(z)B(z,z')=df(z')$,
	we see the above expression is equal to the right hand side of equation~\eqref{eqn:TR-string-2}.
	This proves equation~\eqref{eqn:TR-string}.
	
	For equation~\eqref{eqn:TR-string-vanishterm}, it can be proved by using the similar discussion as above.
	One just needs to notice that both of the expressions
	$\frac{1}{dx(z)}\omega_{g_1,n_1+1}(z,-)\omega_{g_2,n_2+1}(z,-)$ and
	$$
	\frac{1}{dx(z)}\tilde\omega_{0,2}(z,z)
	=\bigg(\frac{1}{4}\frac{x''(z)^2}{x'(z)^3}-\frac{1}{6}\frac{x'''(z)}{x'(z)^2}\bigg)dz
	$$
	have only possible poles at the critical points.

	Now we prove the initial condition: $Z_{\cC}({\bf 0};\hbar)=1$, which is equivalent to the vanishing of $\omega_{g,0}$ for $g\geq 2$.
	Let $x(z)=\sum_{i=0}^{r}a_iz^i$, where $a_r\ne 0$. 
	By introducing degrees $\deg z:=\frac{1}{r}$ and $\deg a_i=1-\frac{i}{r}$, $i=0,\cdots,r$, one can recursively deduce 
	$$\textstyle
		\deg \omega_{g,n}(z_1,\cdots,z_n)=-\frac{r+1}{r}(2g-2+n).
	$$
	In particular, $\deg \omega_{g,0}=-\frac{r+1}{r}(2g-2)$.
	However, since $\omega_{g,0}$ is power series of $\{\frac{a_i}{a_{r}}\}$ and each $\frac{a_i}{a_{r}}$ is of non-negative degree, one must have $\omega_{g,0}=0$ for $g\geq 2$.
	The Theorem is proved.
\end{proof}

\begin{corollary}
	The locally defined multi-differentials $\omega^{V}_{g,n}$ associated with the GKM with polynomial potential $V$ can be extended to be global multi-differentials on $(\mathbb P^1)^{\times n}$.
\end{corollary}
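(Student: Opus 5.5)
The corollary should follow from Theorem~\ref{thm:GKM-TR} together with the global meromorphy of the topological recursion outputs. Assume first that $V'(z)$ has only simple critical points. Theorem~\ref{thm:GKM-TR} gives, for $2g-2+n>0$, the equality $\omega^{V}_{g,n}=\omega_{g,n}$ as formal expansions near $z_1=\cdots=z_n=\infty$. By \cite{EO07}, each $\omega_{g,n}$ is a symmetric meromorphic multi-differential on $(\mathbb P^1)^{n}$ whose poles lie only at the critical points $z^\beta$ of $x$. In particular $\omega_{g,n}$ is holomorphic on a neighbourhood of $(\infty,\dots,\infty)$, and its Laurent expansion there is exactly the formal series $\omega^V_{g,n}$. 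So $\omega_{g,n}$ is the desired global extension, and it is unique because a meromorphic form is determined by its germ at a point. The unstable case $(0,2)$ needs no argument: there $\omega^V_{0,2}$ has vanishing correlators, and the natural global object is $B(z_1,z_2)=\frac{dz_1dz_2}{(z_1-z_2)^2}$.

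The main obstacle is removing the genericity assumption, so that the corollary holds for every polynomial $V$ of degree $r+1\ge 3$, including those with degenerate critical points. The plan is a continuity argument in the coefficients $a_i$ of $x=V'$. Each correlator $\langle\alpha_{k_1},\dots,\alpha_{k_n}\rangle^V_{g,n}$ is a polynomial in $a_0,\dots,a_{r-1},a_r^{-1}$. This follows from the determinantal and perturbative definition: expanding the Gaussian integrals in \eqref{eqn:Psi^V_k} term by term, or using Proposition~\ref{pro:fermionic two point}, produces only such polynomial coefficients.

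On the generic locus, the quasi-homogeneity used at the end of the proof of Theorem~\ref{thm:GKM-TR} controls the shape of $\omega_{g,n}$. Since the poles sit at the zeros of $x'$ with bounded order, $\omega_{g,n}$ can be written as
$$\frac{P_{g,n}(z_1,\dots,z_n)\,dz_1\cdots dz_n}{\prod_i x'(z_i)^{m}}$$
for some $m$ depending only on $(g,n)$. Here $P_{g,n}$ is a polynomial in the $z_i$ whose degree is bounded by the degree count $\deg\omega_{g,n}=-\frac{r+1}{r}(2g-2+n)$.

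Comparing coefficients with the expansion at infinity then determines $P_{g,n}$ from finitely many correlators, so the coefficients of $P_{g,n}$ are polynomial in $a_0,\dots,a_{r-1},a_r^{-1}$. The same expression therefore persists on the non-generic locus by Zariski density. This shows that $\omega^V_{g,n}$ is always the expansion of a rational, hence global meromorphic, multi-differential on $(\mathbb P^1)^{\times n}$. The delicate point is justifying the uniform bound on $m$ and on $\deg P_{g,n}$. I would take the bound on pole orders from the local expansion \eqref{eqn:TR-CohFT} in terms of the finitely many $d\zeta^{\bar\beta}_k$ and the explicit $(-d\circ\frac1{dx})$ structure.
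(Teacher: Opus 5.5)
\textbf{Verdict.} Your proposal is correct. Your first paragraph is exactly the paper's argument, and your second part is a valid extension that the paper does not contain.

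\textbf{The generic case.} The corollary is stated right after Theorem~\ref{thm:GKM-TR}, under its hypothesis that $V'$ has only simple critical points. It follows because $\omega^V_{g,n}$ is the expansion at $(\infty,\dots,\infty)$ of the recursion output $\omega_{g,n}$, which is meromorphic on $(\mathbb P^1)^n$ with poles only at the $z^\beta$. The paper uses the corollary in exactly this form to make sense of the Laplace transforms in \S\ref{sec:Geo-GKM}.

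There is one small slip in your remark on $(0,2)$. With the definition \eqref{eqn:def omega^V}, vanishing of the $(0,2)$ correlators means $\omega^V_{0,2}=0$, so its extension is $0$, not $B$. The Bergman kernel is the recursion's $\omega_{0,2}$, which differs from $\omega^V_{0,2}$ by the singular term in \eqref{eqn:TR-des-gn}. This is harmless, since the statement concerns $2g-2+n>0$.

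\textbf{The non-generic case.} The paper does not remove the genericity assumption here. It only remarks, after the Laplace-transform theorem, that the formula ``remains valid'' as $a_i\to0$. Your density argument is sound, and the step you flag as delicate is easy.

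\emph{A uniform $m$ exists.} Only terms with $k_1+\cdots+k_n\le 3g-3+n$ contribute to \eqref{eqn:TR-CohFT}. Each $d\zeta^{\bar\beta}_k$ has poles of order at most $2k+2$ at the critical points, since each application of $-d\circ\frac{1}{dx}$ raises the pole order by $2$. Hence $m=6g-4+2n$ works for all generic parameters; this is the pole bound of \cite{EO07}.

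\emph{No separate bound on $\deg P_{g,n}$ is needed.} Each coefficient of a monomial with some negative exponent in $\prod_i x'(z_i)^m\,H^V_{g,n}(z_1,\dots,z_n)$ is a finite sum. It is therefore a regular function of $(a_0,\dots,a_r)$ on $\{a_r\neq0\}$. It vanishes on the non-empty Zariski-open set where $x'$ has simple roots, hence identically. Since $H^V_{g,n}\in\prod_i z_i^{-2}\,\mathbb C[\![z_1^{-1},\dots,z_n^{-1}]\!]$, the product is then automatically a polynomial of degree at most $m(r-1)-2$ in each $z_i$.

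\emph{The polynomial-dependence input.} Your claim that the correlators lie in $\mathbb C[a_0,\dots,a_{r-1},a_r^{-1}]$ is right. A scaling of the matrix integral makes it transparent: $Z_V(\mathbf T;\hbar)=Z_{V/a_r}(\mathbf T;\hbar/a_r)$, the normalization $\mathcal C_V$ being compatible with this rescaling.

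\textbf{Comparison.} The paper's one-line deduction is only valid on the generic locus. Your route gives the corollary for every polynomial $V$ of degree $\ge3$, at the cost of this extra polynomial-dependence input. This includes the $r$-spin point $x=a_rz^r$, which the paper's closing remark in \S\ref{sec:Geo-GKM} handles only by a limiting argument.
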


\subsection{Shifted $r$-spin curve case of the Gukov--Su{\l}kowski conjecture}\label{sec:GS}
Start from a spectral curve $\cC=(\Sigma,x(z),y(z))$, where $x,y$ satisfy the polynomial equation~\footnote{Gukov--Su{\l}kowski also consider the case that $x,y$ satisfy the polynomial equation $A(e^{x},e^{y})=0$, we don't consider this case in the present paper.}
$$
A(x,y)=0.
$$
Gukov--Su{\l}kowski introduced the following Baker--Akhiezer function \cite{GS12} based on the topological recursion,
$$
\tilde{\Psi}(z)=\exp\Big(\sum_{k=0} \hbar^{k-1} S_k(z)\Big),
$$
where $S_0(z):=\int^z y(z) dx(z)$,
$S_1(z):=-\frac{1}{2}\log \frac{dx(z)}{dz}$,
and for $k\geq 2$,
$$
S_k(z):=\sum_{2g-1+n=k}\frac{1}{n!}\int^z\dots\int^z
\omega_{g,n}(z_1,...,z_n).
$$
They proposed that,
there should be a quantum spectral curve $\hat{A}$, defined by quantizing the polynomial $A(x,y)$,
that annihilates the Baker--Akhiezer function:
$$
\hat{A} \, \tilde{\Psi}(z) = 0.
$$
Moreover, they provided a construction of $\hat{A}=\hat{A_0}+\hbar \hat{A_1}+\hbar^2\hat{A_2}+\cdots$ by a recursive procedure and showed that in examples that the quantum curve $\hat{A}$ can be taken as the canonical quantization ($\hat{x}:=x$ and $\hat{y}:=\hbar\frac{d}{dx}$) of the polynomial $A$.

We call curve~\eqref{eqn:spec data} the shifted $r$-spin curve, as it shifts the $r$-spin curve which has $x(z)=z^{r}$.
For the shifted $r$-spin curve, it is clear that the polynomial $A(x,y)$ can be given by
$$
A(x,y)=x(y)-x.
$$
Based on the relationship between topological recursion of curve~\eqref{eqn:spec data} and the GKM, established in Theorem \ref{thm:main1},
we prove that for the shifted $r$-spin curve,
the Gukov--Su{\l}kowski Conjecture holds and the quantum curve is given by the canonical quantization of $A(x,y)$: 
$$
\hat{A}=x(\hat{y})-x.
$$

\begin{corollary} 
	For the shifted $r$-spin curve with parameters $\{a_i\}$:
	$$
	A(x,y)=\sum_{i=0}^{r}a_i y^i -x=0,
	$$
	where $a_r\ne 0$, we assume that $\sum_{i=1}^{r}ia_i y^{i-1}=0$ has only simple roots.
	Let $\tilde {\Psi}(z)$ be the Baker--Akhiezer function associated with the topological recursion on the following spectral curve:
	$$
	\cC=\Big(\mathbb P^1,\quad x(z)=\sum_{i=0}^{r}a_i z^i,\quad y(z)=z\Big).
	$$ 
	Then the following equation holds:
	$$
	\big(x(\hat{y})-\hat{x}\big)\tilde{\Psi}(z)=0,
	$$
	where $\hat{x}:=x(z)\cdot$ and $\hat{y}:=\hbar\frac{d}{dx(z)}$.
\end{corollary}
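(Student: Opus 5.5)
The plan is to identify the Baker--Akhiezer function $\tilde\Psi(z)$ with the fermionic one-point function $\Psi_V(z)$ of the GKM, up to the explicit classical and half-order prefactors, and then read off the quantum curve from equation~\eqref{eqn:qsc}. The first step is to express $S_k$ for $k\ge2$ through the GKM correlators. By Theorem~\ref{thm:main1}, near $z=\infty$ we have $\omega_{g,n}=\omega^V_{g,n}$ for $2g-2+n>0$. We fix the base point of the integrals $\int^z$ at $\infty$; this is legitimate because $\omega^V_{g,n}$ is expanded in $dz_i^{-k_i}$ with $k_i\ge1$. Then
\[
\int_\infty^z\!\cdots\!\int_\infty^z\omega^V_{g,n}=\sum_{k_i\ge1}\<\alpha_{k_1},\dots,\alpha_{k_n}\>^V_{g,n}\prod_i\frac{z^{-k_i}}{k_i},
\]
which is the coefficient of $\hbar^{2g-2+n}$ in $\partial^n\log Z_V$ evaluated at $T_k=z^{-k}/k$, divided by $n!$ in the Taylor expansion. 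Summing over $(g,n)$ with $2g-2+n>0$ gives $\sum_{k\ge2}\hbar^{k-1}S_k(z)=\log Z_V(\mathbf{T};\hbar)|_{T_k=z^{-k}/k}$. Here we use $\<\alpha\>_{0,1}=\<\alpha,\alpha\>_{0,2}=0$ and $Z_V(\mathbf 0)=1$, as established in the proof of Theorem~\ref{thm:GKM-TR}. The exponent $\hbar^{2g-2+n}$ matches the exponent $\hbar^{k-1}$ with $k=2g-1+n$. By \eqref{defnone}, this evaluation equals $\Psi^*_V(z)=\Psi^V_1(z)|_{\hbar\to-\hbar}$. If the sign conventions instead lead to $\Psi_V$, we replace $\hbar\to-\hbar$ throughout; this sign bookkeeping must be fixed carefully.

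Next we handle the unstable terms. We have $S_0=\int^z z\,x'(z)\,dz=zx-V(z)$ up to a constant, and $S_1=-\tfrac12\log x'(z)=-\tfrac12\log V''(z)$. Hence $e^{S_0/\hbar+S_1}$ is, up to a constant, exactly the prefactor $\sqrt{V''(z)}^{\,-1}e^{(zV'-V)/\hbar}$, which is conjugate to the one appearing in the definition \eqref{eqn:KS Q for GKM} of $Q_V$ (or of $Q^*_V$). Therefore
\[
\tilde\Psi(z)=\mathrm{const}\cdot V''(z)^{-1/2}e^{\pm(zV'(z)-V(z))/\hbar}\,\Psi^{(*)}_V(z).
\]
Conjugating by this prefactor turns $Q^{(*)}_V(z)$ into $-\frac{\hbar}{x'(z)}\partial_z=-\hbar\frac{d}{dx}$, or into $+\hbar\frac{d}{dx}$ after the appropriate choice of sign of $\hbar$. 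Equation \eqref{eqn:qsc} or \eqref{eqn:qsc dual}, namely $(x(Q)-x(z))\Psi=0$, is conjugation-equivariant, so it becomes $(x(\hbar\tfrac{d}{dx})-x(z))\tilde\Psi=0$, which is the claim.

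The main obstacle is the consistency of signs and orientation. The Baker--Akhiezer function as normalized by $S_0$ with $y\,dx$ must land on the operator $+\hbar\,d/dx$ rather than $-\hbar\,d/dx$, and this requires choosing between $\Psi_V$ and $\Psi_V^*$ correctly. I would check this on the Airy case $x=z^2/2$, where the genus-zero WKB term fixes the sign. The remaining risks are that the $(0,2)$ regularization in $S_2$ must vanish, which holds because $\omega_{0,2}^V$ has no nonsingular part in the coordinate $z$, and that any additive constants are harmless, since they only rescale $\tilde\Psi$.
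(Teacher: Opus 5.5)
Your argument is essentially the paper's: identify $\sum_{k\ge 2}\hbar^{k-1}S_k$ with $\log Z_V$ evaluated at $T_k=z^{-k}/k$ via Theorem~\ref{thm:main1}, split off $e^{S_0/\hbar+S_1}=x'(z)^{-1/2}e^{(zx(z)-V(z))/\hbar}$, conjugate the Kac--Schwarz operator into $\hbar\frac{d}{dx}$, and apply the quantum spectral curve equation. The sign you leave open is forced, not chosen, so the $\pm$ and the ``replace $\hbar\to-\hbar$'' hedge should go, and your first identification is the consistent one: by \eqref{defnone} the evaluation is $\Psi^*_V=\Phi^V_1$, the factor $e^{S_0/\hbar+S_1}$ conjugates $Q^*_V=Q_{-V}$ exactly to $\hbar\frac{d}{dx}$ (it sends $Q_V$ to $2z-\hbar\frac{d}{dx}$), and \eqref{eqn:qsc dual} finishes the proof---the paper's write-up cites $\Psi_V$, $Q_V$ and \eqref{eqn:qsc} instead, which under its own conventions is off by $\hbar\to-\hbar$.
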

\begin{proof}
	Firstly, by equation~\eqref{eqn:TR-des-gn}, for $2g-2+n>0$,
	$$
	\int_{\infty}^z\dots\int_{\infty}^z
	\omega_{g,n}(z_1,...,z_n)
	=\sum_{k_1,\cdots,k_n\geq 1}\<\alpha_{k_1},\cdots,\alpha_{k_n}\>_{g,n}^{z}\prod_{i=1}^{n}\frac{z^{-k_i}}{k_i}.
	$$
	Secondly, by the definition of $\omega^{V}_{g,n}$ (equation~\eqref{eqn:def omega^V}) and Theorem~\ref{thm:main1},
	$$
	\<\alpha_{k_1},\cdots,\alpha_{k_n}\>_{g,n}^{z}
	=[\hbar^{2g-2+n}]\frac{\pd^n \log(Z_{V}({\bf T};\hbar))}{\pd T_{k_1}\cdots \pd T_{k_n}}\Big|_{{\bf T=0}}.
	$$
	Then it follows that
	$$
	S_k(z)=[\hbar^{k-1}]\log(Z_{V}({\bf T};\hbar))\big|_{T_n=\frac{z^{-n}}{n},n=1,2,\cdots},
	$$
	and thus
	$$
	\tilde\Psi(z)=e^{\hbar^{-1} S_0(z)+S_1(z)}\Psi_{V}(z),
	$$
	where $\Psi_V(z)$ is the wave function.
	Lastly, comparing equation~\eqref{eqn:KS Q for GKM}, we have
	\begin{align*}
		x(\hat{y})-\hat{x}=
		e^{\hbar^{-1} S_0(z)+S_1(z)} \big(x(Q_V)-x(z)\big) e^{-\hbar^{-1} S_0(z)-S_1(z)}.
	\end{align*}
	The theorem follows immediately from equation~\eqref{eqn:qsc}.
\end{proof}

\begin{remark}
	In \S 3.5 of \cite{Ale21a},
	Alexandrov studied the quantum spectral curve operator for the GKM purely within the KP hierarchy framework, without topological recursion.
	He defined it as an operator that annihilates the wave function,
	thereby providing a quantum counterpart to the spectral curve even in the absence of a classical one (see his Conjecture 2.1).
\end{remark}

\section{Witten's $r$-spin theory and its relation with generalized Kontsevich model}
\label{sec:rspin-GKM}
In this section, we first establish the relationship of the topological recursion on the shifted $r$-spin curve with the shifted total descendent potential $\cD^{r,t}({\bf t};\hbar)$ of the $r$-spin theory.
Then we show the explicit relationship of the multi-differentials $\omega_{g,n}^{V}$ with the shifted descendent invariants of the $r$-spin theory, proving Theorem~\ref{thm:main2}.
As a result, this gives an explicit geometric formulation of the GKM with polynomial potential. 

\subsection{Witten's $r$-spin theory}\label{sec:rspin-geo}
In~\cite{Wit93}, Witten introduced the intersection theory on the moduli space of the $r$-spin curves.
The mathematical definition of the moduli space, denoted by $\Mbar_{g,n}^{r}$, was then constructed by Jarvis~\cite{Jar00}.
The corresponding cohomological class ${\Wit}^{r}_{g,n}(i_1,\cdots,i_n)$, called the Witten top Chern class, was constructed by~\cite{PV01} and then also by Chiodo~\cite{Chi06} and Fan--Jarvis--Ruan~\cite{FJR13}.
Now this theory is also known as the FJRW theory of $A_{r-1}$ singularity.

The intersection number of the $r$-spin theory is defined by
$$
\<\phi_{i_1}\psi^{k_1},\cdots,\phi_{i_n}\psi^{k_n}\>_{g,n}^{r}
:=\int_{\Mbar_{g,n}^{r}}{\Wit}^{r}_{g,n}({i_1},\cdots,{i_n})\psi_1^{k_1}\cdots \psi_n^{k_n}.
$$
Here, $\psi_j$ is the psi-class on $\Mbar_{g,n}^{r}$, $k_j\geq 0$, and the indices $i_j$ take values in $\{0,\cdots,r-2\}$.
The invariant $\<\phi_{i_1}\psi^{k_1},\cdots,\phi_{i_n}\psi^{k_n}\>_{g,n}^{r}$ is vanished unless the following dimension condition holds:
\beq\label{eqn:rspin-dim-condition}
\sum_{i=1}^n\Big(\frac{i_1}{r}+k_i-1\Big)
=\Big(2+\frac{2}{r}\Big)(g-1).
\eeq
The total descendent potential $\cD^{r}({\bf t};\hbar)$ is defined by
\beq\label{def:D-r}
\cD^{r}({\bf t};\hbar):=\exp\bigg(
\sum_{g\geq 0}\hbar^{2g-2}\sum_{n\geq 0}\frac{1}{n!}\<{\bf t}(\psi_1),\cdots,{\bf t}(\psi_n)\>^{r}_{g,n}
\bigg),
\eeq
where ${\bf t}(\psi)=\sum_{k\geq 0}\sum_{i=0}^{r-2}t_k^i\phi_i\psi^k$.

The generalized Witten Conjecture states that via the following coordinate transformation:
\beq\label{eqn:rspin-t-p}
t_n^i({\bf T})=\sqrt{-r}\, (-1)^{n}\, \frac{\Gamma(\frac{i+1}{r}+n+1)}{\Gamma(\frac{i+1}{r})} \, T_{rn+i+1},\qquad n\geq 0,\quad i=0,\cdots,r-2,
\eeq
the total descendent potential $\cD^{r}(\hbar\cdot{\bf t(T)};\hbar)$ gives a tau-function of the $r$KdV hierarchy, with time variables $\{T_k\}$.
Furthermore, $\cD^{r}$ is uniquely determined by the $r$KdV integrability together with the string equation:
$$
\bigg(-\frac{\pd }{\pd t_0^0}+\sum_{k,a}t_{k+1}^a\frac{\pd}{\pd t_k^a}+\frac{1}{2\hbar^2}\sum_{a+b=r-2}t_0^at_0^b\bigg)\cD^{r}({\bf t};\hbar)=0,
$$
and the initial condition $\cD^{r}({\bf 0};\hbar)=1$.
This conjecture was proved by Faber--Shadrin--Zvonkine~\cite{FSZ06} (see also~\cite{FJR13}).

\subsection{Relationship of $r$-spin theory and shifted $r$-spin curve}
To establish the relationship of the $r$-spin theory with the shifted $r$-spin curve, we first introduce the shifted descendent invariants:
$$
\<\phi_{i_1}\psi^{k_1},\cdots,\phi_{i_n}\psi^{k_n}\>_{g,n}^{r,t}
:=\sum_{m\geq 0}\frac{1}{m!}\<\phi_{i_1}\psi^{k_1},\cdots,\phi_{i_n}\psi^{k_n},t,\cdots,t\>_{g,n+m}^{r},
$$
where $t=\sum_{i=0}^{r-2}t^i\phi_i$, and the shifted total descendent potential:
$$
\cD^{r,t}({\bf t};\hbar)
:=\exp\bigg(
\sum_{g\geq 0}\hbar^{2g-2}\sum_{n\geq 0}\frac{1}{n!}\<{\bf t}(\psi_1),\cdots,{\bf t}(\psi_n)\>^{r,t}_{g,n}
\bigg).
$$

Now we consider the shifted $r$-spin curve:
$$
\cC=\big(\mathbb P^1,\quad x(z)=a_{r}z^{r}+\cdots+a_1z+a_0,\quad y(z)=z\big),
$$
where $\{a_i\}$ are parameters such that $a_{r}\ne 0$ and $x(z)$ has only simple critical points.
We denote by $\{\omega_{g,n}\}$ the multi-differentials defined by the topological recursion on $\cC$.
Let $\lambda$ be the local coordinate near $z=\infty$ defined by $x(z)=\lambda^r$ and $\lim_{z\to \infty} \lambda/z=a_r^{1/r}$,
then one gets the corresponding generating series $Z_{\cC}^{\lambda}({\bf T};\hbar)$ by equation~\eqref{def:tau-TR}. 

We have the following theorem.
\begin{theorem}\label{thm:TR-rspin}
	We have the following equality:
	\beq\label{eqn:D=Z}
	Z^{\lambda}_{\cC}\big({\bf T};\hbar/(a_r^{1/r}\sqrt{-r})\big)
	=\cD^{r,t(a)}\big(\hbar\cdot{\bf t(T)};\hbar\big)\big/e^{\frac{1}{\hbar^2}F^{r}_0(t(a))},
	\eeq
	where $F_0^{r}(t)=\sum_{n\geq 3 }\frac{1}{n!}\<t,\cdots,t\>^{r}_{0,n}$, the coordinate transformation ${\bf t}={\bf t(T)}$ is given by equation~\eqref{eqn:rspin-t-p}, and $t^i=t^i(a)$ is defined by
	\beq\label{eqn:ydx-mirrormap}
	y(z)dx(z)=a_{r}^{-1/r}\cdot r\lambda^{r}d\lambda+a_{r}^{-1/r}\cdot\sum_{i=0}^{r-1}t^i(a)\lambda^{i}d\lambda+dO(\lambda^{-1}).
	\eeq
\end{theorem}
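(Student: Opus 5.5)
The strategy is the one used for Theorem~\ref{thm:GKM-TR}. Both sides of \eqref{eqn:D=Z}, written in the times $T_k$, should be tau-functions of the $r$KdV hierarchy. Both should satisfy the same string equation and the same initial condition, and the uniqueness statement recalled in \S\ref{sec:KP} then identifies them.

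For the left-hand side, the local coordinate $\lambda$ satisfies $x(z)=\lambda^r\in\mathbb C[\lambda]$. Proposition~\ref{prop:KP-TR} therefore gives that $Z^{\lambda}_{\cC}$ is a tau-function of the $\lambda^r$-reduced KP hierarchy, i.e.\ of $r$KdV; the rescaling of $\hbar$ does not affect this.

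For the right-hand side, I would use that the shift $t\mapsto t(a)$ is a translation of the primary times, $t_0^i\mapsto t_0^i+t^i(a)$. In the $T$-variables this is a shift of $T_1,\dots,T_{r-1}$ by \eqref{eqn:rspin-t-p}, with $T_r$ excluded since $i\le r-2$. Translations of KP times preserve KP, and hence $r$KdV, tau-functions. By the Faber--Shadrin--Zvonkine theorem, $\cD^{r,t(a)}(\hbar\mathbf t(\mathbf T);\hbar)$ is then an $r$KdV tau-function, and dividing by the constant $e^{F_0^r(t(a))/\hbar^2}$ normalizes its value at $\mathbf T=0$. The $r$-spin string equation gives, after conjugating by the shift, an operator $-\pd_{T_1}+c\,L_{-r}+(\text{terms linear in } T \text{ and } t(a))$ annihilating the right-hand side, since $\pd/\pd t_0^0\propto\pd_{T_1}$ and $\sum t_{k+1}^a\pd_{t_k^a}$ corresponds to $L_{-r}$.

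The main work is to show that $Z^{\lambda}_{\cC}$ satisfies this same string equation. I would copy the argument of Theorem~\ref{thm:GKM-TR}: multiply the recursion \eqref{eqn:def-TR} by a suitable function of $\lambda(z_0)$ and take $\Res_{z_0=\infty}$. The natural choice is $y(z_0)$ expressed through $\lambda$, i.e.\ $z_0=\lambda_0 a_r^{-1/r}+\dots$, because $\Res_{z_0=\infty}z_0K_\beta$ is computed exactly as before. The identities \eqref{eqn:TR-string-02}--\eqref{eqn:TR-string-vanishterm} then give
\[
\Res_{z_0=\infty} z_0\,\omega_{g,n+1}(z_0,z_{[n]})=-\sum_j d\Big(\frac{\omega_{g,n}}{dx(z_j)}\Big)+\delta_{(g,n)}^{(0,2)}(\cdots),
\]
and this identity is intrinsic, independent of the coordinate. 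The remaining step is to re-expand it in $\lambda$. Writing $z=a_r^{-1/r}\lambda+\sum_{k\le 0}\beta_k\lambda^k$, the residue $\Res z_0\omega_{g,n+1}$ becomes a combination of $\langle\alpha_1,\dots\rangle^{\lambda}$ and of the insertions $\alpha_{k}$ for $k\le r-1$, i.e.\ $\pd_{T_k}$ with coefficients $\beta$. Meanwhile $d(\omega/dx)$ with $x=\lambda^r$ is exactly the $L_{-r}$ term. The coefficients $\beta_k$ are tied to the $t^i(a)$ through \eqref{eqn:ydx-mirrormap}: comparing $y\,dx=r z\lambda^{r-1}d\lambda$ identifies the $\lambda^{i}d\lambda$ coefficients. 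I expect the hardest part to be matching these extra terms, together with the unstable contributions $\omega_{0,1},\omega_{0,2}$ in the $\lambda$-expansions \eqref{eqn:TR-des-gn}--\eqref{eqn:TR-des-01}, with the shifted $r$-spin string operator after conjugation by the shift. In particular the $\delta^{(0,2)}$ term and the nonzero $\langle\alpha_k\rangle^{\lambda}_{0,1}$ must produce exactly the quadratic and linear terms; for this one may need the full $W$-constraint coming from the Kac--Schwarz operator $z=y$ rather than the string equation alone. The constants $a_r^{1/r}\sqrt{-r}$ are fixed by matching the leading terms.

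For the initial condition, $Z^\lambda_{\cC}(\mathbf 0)=1$ because $\omega_{g,0}=0$, by the degree argument of Theorem~\ref{thm:GKM-TR}. The right-hand side equals $1$ at $\mathbf T=0$ provided $F^r_g(t)=0$ for $g\ge1$ on primary shifts, which follows from the dimension constraint \eqref{eqn:rspin-dim-condition}: primaries have $i/r-1<0$, so it cannot be met for $g\ge1$. Uniqueness then gives \eqref{eqn:D=Z}.
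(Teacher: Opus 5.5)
Your overall architecture (rKdV integrability of both sides, a common string equation, a common initial value, uniqueness) is the paper's, and your integrability and initial-condition steps agree with it. The gap is the string equation for $Z^{\lambda}_{\cC}$ in the $\lambda$-frame. The paper does not derive it but takes it from \cite[Theorem 1]{GZ25b}, and your derivation is mis-set-up exactly where $t(a)$ enters.

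Since $z_0-a_r^{-1/r}\lambda_0=\sum_{m\le 0}\beta_m\lambda_0^m$ has no positive powers of $\lambda_0$, while a stable $\omega_{g,n+1}$ expands in $d\lambda_0^{-k}$ with $k\ge 1$, one simply gets $\Res_{z_0=\infty}z_0\,\omega_{g,n+1}=a_r^{-1/r}\langle\alpha_1,\dots\rangle^{\lambda}_{g,n+1}$. The $\beta_m$ never appear, and there are no extra $\pd_{T_k}$ insertions. Hence the re-expanded intrinsic identity produces only the $t(a)$-independent part of the operator:
the $\pd_{T_1}$ term;
the term $\frac1r\sum (k+r)T_{k+r}\pd_{T_k}$, via $-d\big(\frac{1}{dx}\frac{d\lambda^{-k}}{k}\big)=\frac1r d\lambda^{-k-r}$;
the vanishing of $\langle\alpha_1\rangle^\lambda_{g,1}$ for $g\ge 1$;
and the quadratic term $\frac1{2r}\sum_{i+j=r}ijT_iT_j$. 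The last comes from writing the $\delta^{(0,2)}$ term as $-\sum_j d_j\big(B(z_1,z_2)/dx(z_j)\big)$ and splitting $B=\frac{d\lambda_1d\lambda_2}{(\lambda_1-\lambda_2)^2}+(\text{regular})$.

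Every $t(a)$-dependent term lives in the genus-zero components at orders $\hbar^{-2}$ and $\hbar^{-1}$. These are would-be identities for $\Res z_0\,\omega_{0,1}$ and $\Res z_0\,\omega_{0,2}$, which topological recursion does not touch at all. You flag these as the hardest part and suggest a further $W$-constraint might be needed, but you leave them open. So the string equation, and with it the uniqueness argument, is not established.

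The hole closes with elementary facts about the expansion $z=a_r^{-1/r}\lambda+\sum_{m\le0}\beta_m\lambda^m$; no constraint beyond the string equation is needed.
Equation~\eqref{eqn:ydx-mirrormap} gives $t^i(a)=r a_r^{1/r}\beta_{i-r+1}$ and $\langle\alpha_k\rangle^\lambda_{0,1}=-r\beta_{-k-r}$.
From $\Res_{z_1=\infty}z_1B(z_1,z_2)=-dz_2$ one gets $a_r^{-1/r}\langle\alpha_1,\alpha_k\rangle^\lambda_{0,2}=-k\beta_{-k}$.
The $r$ roots of $x(z)=\lambda^r$ have constant sum, so $\beta_{-rm}=0$ for $m\ge 1$.
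Finally, $\Res_{z=\infty}z^2\,dx(z)=0$ gives $2a_r^{-1/r}\beta_{-r-1}+\sum_{i=1}^{r-1}\beta_{-i}\beta_{i-r}=0$.
These are exactly the $T_k$-linear and constant components of the operator in \cite[Theorem 1]{GZ25b}. With them in hand, your comparison with the shifted $r$-spin string equation goes through as in the paper.
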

\begin{proof}
	We prove the Theorem by showing that the expressions on the both sides of equation~\eqref{eqn:D=Z} are tau-functions of the $r$KdV hierarchy and satisfy the same string equation and initial condition.
	In the follows, we prove these three properties one by one.
	
	Firstly, by~\cite[Theorem I]{GJZ23}, $Z^{\lambda}_{\cC}({\bf T};\epsilon)$, is a tau-function of the $r$KdV hierarchy, with time variables $\{T_k\}$.
	By noticing $\cD^{r,t}({\bf t};\hbar)=\cD^{r}({\bf t}+t;\hbar)$, the $r$KdV integrability for $\cD^{r,t}({\bf t(T)};\hbar)$ follows from the generalized Witten Conjecture.
	
	Secondly, according to~\cite[Theorem 1]{GZ25b}, $Z^{\lambda}_{\cC}({\bf T};\epsilon)$ satisfies the following string equation
	$$
	\bigg(-\frac{1}{a_{r}^{1/r}\epsilon}\frac{\pd }{\pd T_{1}}
	+\frac{1}{r}\sum_{k\geq 1,r\nmid k} (k+r)\, T_{k+r}\frac{\pd}{\pd T_{k}}
	+\frac{1}{2r}\sum_{i+j=r}\Big(iT_i-\frac{t^{i-1}(a)}{a_{r}^{1/r}\epsilon}\Big)
	\Big(jT_j-\frac{t^{j-1}(a)}{a_{r}^{1/r}\epsilon}\Big)\bigg)Z_{\cC}^{\lambda}({\bf T};\epsilon)=0.
	$$
	Comparing this equation with the string equation for $\cD^{r,t}(\hbar\cdot{\bf t};\hbar)$:
	$$
	\bigg(-\frac{1}{\hbar}\frac{\pd }{\pd t_0^0}+\sum_{k,i}t_{k+1}^i\frac{\pd}{\pd t_k^i}+\frac{1}{2}\sum_{i+j=r-2}\Big(t_0^i+\frac{t^i}{\hbar}\Big)\Big(t_0^j+\frac{t^j}{\hbar}\Big)\bigg)\cD^{r,t}(\hbar\cdot {\bf t};\hbar)=0,
	$$
	we see these two equations coincide with each other (up to a sign)
	via $\epsilon=\hbar/(a_r^{1/r}\sqrt{-r})$, ${\bf t}={\bf t(T)}$, and $t^i=t^i(a)$, $i=0,\cdots,r-2$.
	
	Lastly, for the initial condition, since 
	$$
	\cD^{r,t}({\bf 0};\hbar)
	=e^{\sum_{g\geq 0}\hbar^{2g-2}F^{r}_g(t)},
	$$ 
	where $F_g^{r}(t)=\sum_{n\geq 0 }\frac{1}{n!}\<t,\cdots,t\>^{r}_{g,n}$. 
	We just need to prove that $F_g^r(t)=0$ for $g\geq 1$.
	This follows from the fact that 
	$$
	\<\phi_{i_1},\cdots,\phi_{i_n}\>_{g,n}^{r}=0,\qquad
	g\geq 1, 
	$$
	which can be seen from the dimension condition (equation~\eqref{eqn:rspin-dim-condition}).
	The theorem is proved.
\end{proof}
\begin{remark}\label{rem:mirrormap-rspin}
	Setting $a_r=1$ and $a_{r-1}=0$ (so that $x(z)=z^r+a_{r-2}z^{r-2}+\cdots+a_0$), 
	a direct computation yields $t^i(a)=a_i+f_{i}(a_{i+1},\cdots,a_{r-2})$ for some polynomials $f_i$. 
	Conversely, it follows that $a_i(t)=t^i+g_i(t^{i+1},\cdots,t^{r-2})$. 
	This coincides with Saito's flat structure in the deformation theory of $A_{r-1}$-type singularities~\cite{Sai81}, 
	where $\{t^i\}$ serves as a system of flat coordinates on the deformation space. 
	While the relationship of this spectral curve to the total ancestor potential of the shifted Witten class of the $A_{r-1}$-singularity has been established~\cite{DNOPS19}, the present work establishes its relationship to the total descendent potential of the Witten top Chern class.
\end{remark}

\subsection{Relationship of $r$-spin theory and generalized Kontsevich model}\label{sec:Geo-GKM}
The combination of Theorem~\ref{thm:GKM-TR} and Theorem~\ref{thm:TR-rspin} establishes a relationship between the GKM with degree $r+1$ polynomial potential and the shifted $r$-spin theory. 
Note, however, that this connection takes the topological recursion as a bridge and depends on a choice of local coordinate on the spectral curve specified in equation~\eqref{eqn:D=Z}. 
In this subsection, we provide a direct formulation of this relationship via Laplace transforms.
This result constitutes an all-genus analogue of the descendent mirror theorem established by \cite{FLYZ25} for Gromov-Witten theory of semi-projective toric Calabi-Yau 3-orbifolds.

\begin{theorem}
	Consider paths $\tilde\gamma_i=e^{\frac{2i\pi \sqrt{-1}}{r}}[0,\infty)\subset \mathbb C$, and $\gamma_i:=\tilde\gamma_0-\tilde\gamma_i$, we introduce a sequence of classes by
	$$
	\Phi_i(\givz):=\frac{1}{\sqrt{-r}}\sum_{j=1}^{r-1}\big(e^{\frac{2ij\pi\sqrt{-1}}{r}}-1\big)\, \Gamma\Big(\frac{j}{r}\Big)\, \givz^{\frac{j}{r}}\, \phi_{r-1-j},
	$$
	then we have for $2g-2+n>0$,
	\beq\label{eqn:GKM-rspin}
	\int_{\gamma_{i_1}}\!\!\cdots \!\!\int_{\gamma_{i_n}}e^{-\sum_i V'(z_{i})/\givz_i}
	\omega_{g,n}^{V}(z_1,\cdots,z_n)
	=C_r^{2g-2+n}\cdot \left\<\frac{\Phi_{i_1}(\givz_1)}{\givz_1-\psi_1},\cdots,\frac{\Phi_{i_n}(\givz_n)}{\givz_n-\psi_n}\right\>_{g,n}^{r,t(a)}.
	\eeq
	Here $C_r=a_r^{1/r}\sqrt{-r}$ with $a_r$ the coefficient of $z^r$ in $V'(z)$ and $t(a)$ is given by equation~\eqref{eqn:ydx-mirrormap}.
\end{theorem}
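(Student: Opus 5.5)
By Theorem~\ref{thm:GKM-TR}, $\omega^V_{g,n}$ is the germ at $\infty$ of the global topological recursion differential $\omega_{g,n}$ on $\cC=(\mathbb P^1,V'(z),z)$. So the left-hand side of \eqref{eqn:GKM-rspin} is a Laplace transform of $\omega_{g,n}$ along contours that run in and out of $\infty$. The plan is to expand $\omega_{g,n}$ near $z=\infty$ in the coordinate $\lambda$ with $x=\lambda^r$, use Theorem~\ref{thm:TR-rspin} to identify the coefficients with shifted $r$-spin invariants, and then compute the Laplace transform term by term. The last step uses only Gamma integrals against powers of $\lambda$.

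\textbf{Step 1: expand in $\lambda$ and identify the coefficients.} By \eqref{eqn:TR-des-gn},
\[
\omega_{g,n}=\sum_{k_j}\<\alpha_{k_1},\dots,\alpha_{k_n}\>^{\lambda}_{g,n}\prod_j\frac{d\lambda_j^{-k_j}}{k_j}.
\]
By \eqref{eqn:D=Z} and \eqref{eqn:rspin-t-p}, differentiating $\log$ of both sides in $T_{k_1},\dots,T_{k_n}$ at ${\bf T}=0$ gives
\[
\<\alpha_{k_1},\dots,\alpha_{k_n}\>^\lambda_{g,n}=C_r^{2g-2+n}\prod_j c_{k_j}\,\<\phi_{i_1}\psi^{m_1},\dots,\phi_{i_n}\psi^{m_n}\>^{r,t(a)}_{g,n}.
\]
Here $k_j=rm_j+i_j+1$ and $c_k=\sqrt{-r}(-1)^m\Gamma(\tfrac{i+1}{r}+m+1)/\Gamma(\tfrac{i+1}{r})$. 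Indices with $r\mid k_j$ contribute zero, because $\partial_{T_{rm}}$ does not appear in ${\bf t(T)}$. The factor $\hbar^{2g-2+n}$ produces $C_r^{2g-2+n}$ via the rescaling $\hbar\mapsto\hbar/C_r$, and the factor $\hbar$ inside $\hbar\cdot{\bf t}$ fixes the correct genus bookkeeping. One should check that the normalization $e^{F_0/\hbar^2}$ only affects $n=0$.

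\textbf{Step 2: single-variable Laplace transform.} With $V'(z)=x=\lambda^r$, the integral over $\gamma_i=\tilde\gamma_0-\tilde\gamma_i$ becomes, in the $\lambda$-plane, an integral over two rays from $0$ to $\infty$, differing by the rotation $\lambda\mapsto e^{2\pi i\sqrt{-1}/r}\lambda$. One has to justify deforming the $z$-contour, which is fixed near $\infty$, into these rays. This uses that $\omega_{g,n}$ is globally meromorphic with poles only at the critical points, and that $e^{-x/u}$ decays along the rays. Then
\[
\int_0^\infty e^{-\lambda^r/u}\,d\lambda^{-k}=-k\,\frac{1}{r}\,\Gamma\!\left(-\frac{k}{r}\right)u^{-k/r},
\]
understood by analytic continuation in $k$. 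This is justified by the $\gamma_i$ difference: the rotated ray contributes the factor $e^{-2\pi i k\sqrt{-1}/r}$, and the combination $1-e^{-2\pi ik\sqrt{-1}/r}$ cancels the divergence at $\lambda=0$, much as Hankel-type contours do. This factor also kills every $k$ divisible by $r$, matching the vanishing in Step 1.

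\textbf{Step 3: match with the descendent insertion.} Use $\Gamma(-\tfrac kr)\,\Gamma(\tfrac{i+1}{r}+m+1)/\Gamma(\tfrac{i+1}{r})$ together with the reflection formula. The goal is that, after summing over $m$, the contribution of a fixed $i$ equals the coefficient of $\phi_i$ in $\Phi_{i}(u)/(u-\psi)=\sum_m\Phi_i(u)\,u^{-m-1}\psi^m$, with $j=r-1-i$ and the power $u^{j/r-m-1}=u^{-k/r}$. The prefactors $\frac{1}{\sqrt{-r}}\Gamma(\tfrac jr)(e^{2\pi ij\sqrt{-1}/r}-1)$ should assemble from $\sqrt{-r}\,c_k$, the Gamma identity $\Gamma(-\tfrac kr)\Gamma(\tfrac kr+1)\cdots$, and the root-of-unity factor.

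\textbf{Main obstacle.} I expect Step 2 to be the hardest part. It requires rigorously replacing a Laplace transform of a global differential along $\gamma_i$ by a termwise transform of an asymptotic expansion at $\infty$. Since $\omega_{g,n}$ is a rational differential in $z$ whose $\lambda$-expansion converges only near $\infty$, I would first try to show that $\omega_{g,n}$ is a finite combination of exact pieces $d\zeta_k^{\bar\beta}$ (by \eqref{eqn:TR-CohFT}). I would then compute the Laplace transforms of these pieces exactly, and compare the asymptotic expansion in $u$ as $u\to0$ with the termwise series. Because both sides are formal or asymptotic series in $u^{1/r}$, it then suffices to identify coefficients.
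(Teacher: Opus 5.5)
Your Steps 1 and 3 are the paper's computation, and the constants check out. With $k=rm+i+1$ one has $c_k/k=-\frac{1}{\sqrt{-r}}(-1)^m\Gamma(\frac{i+1}{r}+m)/\Gamma(\frac{i+1}{r})$. Combining this with $(1-e^{-2\pi\sqrt{-1}\,\iota k/r})\Gamma(1-\frac kr)u^{-k/r}$ and $\Gamma(a+m)\Gamma(1-a-m)=(-1)^m\Gamma(a)\Gamma(1-a)$ gives exactly the $\phi_{r-1-j}\psi^m$-coefficient of $\Phi_\iota(u)/(u-\psi)$. Keep the contour index $\iota$ separate from the primary index $i$; you use $i$ for both.

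The gap is Step 2, and the justification you give would not work. Knowing only that $\omega_{g,n}$ is globally meromorphic with poles at the $z^\beta$ does not let you move $\gamma_i$. Sweeping across a critical point costs $2\pi\sqrt{-1}\Res_{z^\beta}e^{-x/u}\omega_{g,n}$, so you need these residues to vanish, and you never show that. The phase factor $1-e^{-2\pi\sqrt{-1}\,\iota k/r}$ multiplies the divergence at $\lambda=0$; it does not cancel it. Moreover, the $\lambda$-rays reaching $\lambda=0$ leave the region $|\lambda|>R_0$, $R_0^r=\max_\beta|x(z^\beta)|$, where the $\lambda$-expansion converges. Their lifts to the $z$-plane (preimages of $u\cdot[0,\infty)$) in general start at two different zeros of $x$, so they are not a deformation of $\gamma_i$. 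The termwise Gamma values only see the non-integer powers $u^{-k/r}$, which are fixed by the germ at $\infty$ and the two asymptotic directions. A genuinely global contribution would show up as integer powers $u^{-m}$, and ruling these out for $\gamma_i$ is the real content of this step.

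The missing ingredient is the decomposition \eqref{eqn:TR-CohFT} of $\omega_{g,n}$ into the residue-free exact differentials $d\zeta^{\bar\beta}_k$. The paper lifts $u\cdot[0,\infty)$ to paths $\gamma_i'$ and computes termwise there. It then compares with $\gamma_i$ by expanding $e^{-x/u}$ and using $\int_C x^m d\zeta^{\bar\beta}_k=m!\int_C d\zeta^{\bar\beta}_{k-m}=0$ for cycles $C$ avoiding $\infty$: $d\zeta^{\bar\beta}_{k-m}$ is residue-free for $k\ge m$ and polynomial for $k<m$.

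The same ingredient closes your own route more directly. Near each $z^\gamma$ the polar part of $d\zeta^{\bar\beta}_k$ contains only even powers of the Airy coordinate $\eta$ (proof of Lemma~\ref{lem:regularity-TR}). Also $e^{-x/u}=e^{-x(z^\gamma)/u}e^{-\eta^2/(2u)}$ is even in $\eta$. Hence $e^{-x/u}\omega_{g,n}$ has zero residue at every critical point, and $\gamma_i$ may be replaced by a Hankel-type contour: the two $\lambda$-rays cut off at $|\lambda|=R>R_0$, joined by an arc of $|\lambda|=R$. On this contour the $\lambda$-series converges uniformly, so termwise integration is legitimate. For $r\nmid k$ the form $e^{-\lambda^r/u}\lambda^{-k-1}d\lambda$ has no residue at $\lambda=0$, so each term equals the Gamma value exactly. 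This route gives \eqref{eqn:GKM-rspin} as an exact identity rather than a formal one.

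Your fallback in the last paragraph names the right decomposition but uses the wrong regime. The right-hand side is a series in $u^{-1/r}$ with unboundedly negative exponents, so any comparison of expansions must be taken as $u\to\infty$. As $u\to0$ the integral localizes at the critical points, not at $\infty$.
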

\begin{proof}
	We first explain that the integration on the left hand side of equation~\eqref{eqn:GKM-rspin} is well-defined as a formal power series of $\givz^{-\frac{1}{r}}$ (view $\givz\in a_r\cdot\mathbb R_{+}$).
	This is because $\omega^{V}_{g,n}$ can be extended to be a global multi-differential on $(\mathbb P^1)^{\times n}$ according to Theorem~\ref{thm:main1} and $-V'(z)/\givz$ tends to $-\infty$ when $z$ tends to $\infty$ along paths $\gamma_i$.
	In the follows, we denote $V'(z)$ by $x(z)$.
	
	We view $x(z)$ as a $r:1$ covering map from $\mathbb P^1$ to $\mathbb P^1$, then the path $\givz\cdot[0,\infty)$ in the target space has $r$ pre-images.
	We denote these pre-images by $\tilde\gamma'_i$, $i=0,1,\cdots,r-1$. Clearly,
	$$
	\lim_{z\to \infty, \, z\in \tilde\gamma_i}\arg(z)
	=\lim_{z\to \infty, \, z\in \tilde\gamma'_i}\arg(z).
	$$
	Moreover, we define $\gamma'_i=\tilde \gamma'_0-\tilde \gamma'_i$.
	In the follows, we prove equation~\eqref{eqn:GKM-rspin} by two steps. Firstly, 
	\beq\label{eqn:GKM-rspin-1}
	\int_{\gamma'_{i_1}}\!\!\cdots \!\!\int_{\gamma'_{i_n}}e^{-\sum_i V'(z_{i})/\givz_i}
	\omega_{g,n}^{V}(z_1,\cdots,z_n)
	=C_r^{2g-2+n}\cdot \left\<\frac{\Phi_{i_1}(\givz_1)}{\givz_1-\psi_1},\cdots,\frac{\Phi_{i_n}(\givz_n)}{\givz_n-\psi_n}\right\>_{g,n}^{r,t(a)}.
	\eeq
	Secondly, 
	\beq\label{eqn:GKM-rspin-2}
	\int_{\gamma_{i_1}}\!\!\cdots \!\! \int_{\gamma_{i_n}}e^{-\sum_i V'(z_{i})/\givz_i}
	\omega_{g,n}^{V}(z_1,\cdots,z_n)
	=\int_{\gamma'_{i_1}}\!\! \cdots \!\! \int_{\gamma'_{i_n}}e^{-\sum_i V'(z_{i})/\givz_i}
	\omega_{g,n}^{V}(z_1,\cdots,z_n).
	\eeq
	
	For the first step, let $\lambda$ be the local coordinate defined by $x(z)=\lambda^r$ and $\lim_{z\to \infty}\lambda/z=a_r^{1/r}$,
	then $\lambda=\lambda(z)$ can be extended along $\tilde\gamma'_i$.
	Moreover, we have $\lambda(\tilde\gamma'_i)=e^{\frac{2i\pi \sqrt{-1}}{r}} a_r^{1/r}\cdot [0,\infty)$.
	Therefore,
	$$
	\int_{\gamma'_i}e^{-x(z)/\givz}d\lambda^{-k}
	=-\frac{k}{r}\Big(1-e^{-\frac{2ki\pi \sqrt{-1}}{r}}\Big)\int_{0}^{\infty}e^{- x/\givz} x^{-\frac{k}{r}-1}d x.
	$$
	By using $\Gamma(\alpha)=\int_{0}^{\infty} x^{\alpha-1} e^{-x} dx$ and $\Gamma(\alpha+1)=\alpha\, \Gamma(\alpha)$, we get
	$$
	\int_{\gamma'_i}e^{-x(z)/\givz}d\lambda^{-k}
	=\Big(1-e^{\frac{-ki2\pi \sqrt{-1}}{r}}\Big) \Gamma\Big(-\frac{k}{r}+1\Big)\givz^{-\frac{k}{r}}.
	$$
	Introduce local functions $\chi^i_n(\lambda)$ by
	$$
	\chi_n^j=- \frac{(-1)^n}{\sqrt{-r}}\frac{\Gamma(\frac{j+1}{r}+n)}{\Gamma(\frac{j+1}{r})}\lambda^{-rn-j-1},\qquad
	n\geq 0,\quad j=0,\cdots,r-2,
	$$
	then we have
	$$
	\int_{\gamma'_i}e^{-x(z)/\givz}d\chi_{n}^{j}
	=-\frac{(-1)^n}{\sqrt{-r}}\frac{\Gamma(\frac{j+1}{r}+n)}{\Gamma(\frac{j+1}{r})}
	\Big(1-e^{-\frac{2i(j+1)\pi \sqrt{-1}}{r}}\Big) \Gamma\Big(-n-\frac{j+1}{r}+1\Big)\givz^{-n-\frac{j+1}{r}}.
	$$
	Notice that $\Gamma(\frac{j+1}{r}+n)\Gamma\big(-n-\frac{j+1}{r}+1\big) =(-1)^n{\Gamma(\frac{j+1}{r})}{\Gamma\big(\frac{r-1-j}{r}\big)}$, we get that for $j=1,\cdots,r-1$
	$$
	\int_{\gamma'_i}e^{-x(z)/\givz}d\chi_{n}^{r-1-j}
	=\frac{1}{\sqrt{-r}}
	\big(e^{\frac{2ij\pi \sqrt{-1}}{r}}-1\big) \Gamma\Big(\frac{j}{r}\Big)\givz^{-n-1+\frac{j}{r}}.
	$$
	Then by straightforward computations,
	\begin{align*}
		\sum_{j,n}\phi_{r-1-j}\psi^n\int_{\gamma'_i}e^{-x(z)/\givz}d\chi_{n}^{r-1-j}
		=&\, \frac{\Phi_{i}(\givz)}{\givz-\psi}.
	\end{align*}
	Furthermore, by using Theorem~\ref{thm:main1} and Theorem~\ref{thm:TR-rspin}, we have for $2g-2+n>0$,
	$$
	\omega^{V}_{g,n}(z_1,\cdots,z_n)=(a_r^{1/r}\sqrt{-r})^{2g-2+n}\sum_{}\<\phi_{i_1}\psi^{k_1},\cdots,\phi_{i_n}\psi^{k_n}\>_{g,n}^{r,t(a)}
	d\chi_{k_1}^{i_1}\cdots d\chi_{k_n}^{i_n}.
	$$
	Put above computations together, equation~\eqref{eqn:GKM-rspin-1} is proved.
	
	For the second step, we consider loops $C_i:=\gamma_i-\gamma'_i$ and $C_i^{\epsilon}$ which is almost the same as $C_i$ but slightly changes the path near $\infty$ (the difference of $C_i$ and $C_i^{\epsilon}$ lies in a disc $\{z|g(z,\infty)<\epsilon\}$ where $g$ is the metric of the Riemann sphere induced from $\mathbb R^3$) such that $\infty\notin C^{\epsilon}_{i}$.
	Then we have
	\beq\label{eqn:res-wgn}
	\bigg(\int_{\gamma_i}-\int_{\gamma'_i}\bigg)\big(e^{-x(z)/\givz}\omega^{V}_{g,n}(z,-)\big)
	=\lim_{\epsilon\to 0}\oint_{z\in C^{\epsilon}_i}e^{-x(z)/\givz}\omega^{V}_{g,n}(z,-),
	\eeq
	since the expression $e^{-x(z)/\givz}\omega^{V}_{g,n}(z,-)$ is regular at $z=\infty$.
	Clearly, equation~\eqref{eqn:GKM-rspin-2} follows from that the expression on the right hand side of above equation equals zero.
	In fact, given an arbitrary cycle $C\subset \mathbb P^1$ such that $\infty\notin C$,
	$$
	\int_{z\in C}x(z)^md\zeta_{k}^{\bar\beta}(z)
	=-\int_{z\in C}mx(z)^{m-1}\zeta_{k}^{\bar\beta}(z)dx(z)
	=m\int_{z\in C}x(z)^{m-1}d\zeta_{k-1}^{\bar\beta}(z),
	$$
	we get
	$$
	\int_{z\in C}x(z)^md\zeta_{k}^{\bar\beta}(z)
	=m!\int_{z\in C}d\zeta_{k-m}^{\bar\beta}(z).
	$$
	Here $d\zeta^{\bar\beta}_{k-m}(z)=d(-\frac{d}{dx(z)})^{k-m}\zeta_0^{\bar\beta}(z)$, and for $k<m$, we have $d\zeta^{\bar\beta}_{k-m}(z)\in\mathbb C[z]dz$.
	Recall that $k\geq 0$, $d\zeta^{\bar\beta}_{k}(z)$ is meromorphic and has vanishing residue at each pole, we conclude that
	$$\int_{z\in C}x(z)^md\zeta_{k}^{\bar\beta}(z)=0.$$
	Therefore, according to the structure of $\omega_{g,n}$ (equation~\eqref{eqn:TR-CohFT}), equation~\eqref{eqn:res-wgn} holds true and thus equation~\eqref{eqn:GKM-rspin-2} follows immediately.
	The Theorem is proved.
\end{proof}
While standard topological recursion typically requires $dx(z)$ to have only simple roots 
(and thus non-simultaneously vanishing $a_i$, $i=0,\cdots,r-1$), 
both GKM and the shifted $r$-spin theory remain well-defined in the limit $a_i \to 0$ ($i=0,\cdots,r-1$) and $t^j \to 0$ ($j=0,\cdots,r-2$), respectively. 
Therefore, equation~\eqref{eqn:GKM-rspin} remains valid at this limit.

\section{Deformed generalized Kontsevich model and deformed $r$-spin geometry}
\label{sec:deformed-GKM}
In this section, we study the GKM with deformed potentials $V^{\epsilon}_{r}(z)$. We call a deformed potential $V_{r}^{\epsilon}$ {\it admissible} if its second derivative ${V_{r}^{\epsilon}}''(z)$ both has the form
\begin{align}\label{eqn:def Uepsilon}
	{V_{r}^{\epsilon}}''(z)=U_r^{\epsilon}(z)=\frac{U_r(z)}{\prod_{i=1}^{d}(1-\epsilon_i\, z)},
\end{align}
where $U_r(z)$ is an arbitrary polynomial of degree $r-1$, and satisfies the condition
\begin{align}\label{eqn:limit condition}
	\lim_{z\to \infty}{V_{r}^{\epsilon}}''(z)\, dz=\infty.
\end{align}
Clearly, a potential of the form~\eqref{eqn:def Uepsilon} is admissible if and only if $d\leq r$.

Similar to the polynomial cases, we show that the GKM with admissible deformed potential $V_{r}^{\epsilon}$ corresponds to the topological recursion on the spectral curve
$$\cC^{\epsilon}_{r}=\big(\mathbb P^1,\quad x^\epsilon_{r}(z)={V_{r}^{\epsilon}}'(z),\quad y(z)=z\big).$$
Subsequently, we specialize to the case where 
$U_r^{\epsilon}(z)=\frac{r z^{r-1}}{\prod_{i=1}^{d}(1-\epsilon_i\, z)}$, 
and establish its connection with the $r$-spin theory. 
In particular, for $d=1$, this connection confirms a conjecture proposed by Alexandrov~\cite{Ale21a}.

\subsection{Deformed generalized Kontsevich model}
The partition function of the GKM with potential $V_r^{\epsilon}(z)$ can be defined analogously to the procedure in \S \ref{sec:GKMM}. Taking the large $N$ limit ($N\to \infty$) yields the tau-function $Z_{V_r^{\epsilon}}(\mathbf{T};\hbar)$. This model is referred to as the deformed GKM (see~\cite[\S 4]{Ale21a} for further details).

For the deformed GKM, the operators  $x_r^{\epsilon}(z)\cdot$, where $x^\epsilon_{r}(z)={V_{r}^{\epsilon}}'(z)$, and
$$
	Q_{V_r^{\epsilon}}(z)
	=z-\frac{\hbar}{{x_r^{\epsilon}}'(z)} \partial z +\frac{\hbar}{2}\frac{{x_r^{\epsilon}}''(z)}{{x_r^{\epsilon}}'(z)^2}
$$
are still Kac--Schwarz operators of the tau-function $Z_{V_r^{\epsilon}}(\mathbf{T};\hbar)$.
These operators impose specific constraints on the tau-function. 
Precisely, for the operator $x_r^{\epsilon}(z)\cdot$,
the corresponding constraint is the $x$-reduction,
which can be reformulated as the following Hirota quadratic equations (HQEs):
\beq\label{eqn:HQE-Z}
\mathop{\Res}_{z=\infty}x_r^{\epsilon}(z)^k\cdot \Gamma^{+}Z_{V_r^{\epsilon}}({\bf T};\hbar)
\cdot \Gamma^{-}Z_{V_r^{\epsilon}}({\bf T'};\hbar)=0, \qquad k\geq 0,
\eeq
where
$$
\Gamma^{\pm}
= \exp\bigg( \pm \sum_{n\geq 1} z^n T_n \bigg)
\exp\bigg( \mp \sum_{n\geq 1} \frac{1}{n z^n} \frac{\partial}{\partial T_n} \bigg).
$$	
Here, the expression is understood as a Laurent series in $z^{-1}$ for each fixed order in $\epsilon$.
For the operator $Q_{V_r^{\epsilon}}(z)$,
the corresponding constraint takes the form
\begin{align}\label{eqn:hat QCepsilon}
	\widehat{Q_{V_r^{\epsilon}}(z)} Z_{V_r^{\epsilon}}(\mathbf{T};\hbar)=
	f(\epsilon,\hbar)\cdot Z_{V_r^{\epsilon}}(\mathbf{T};\hbar)
\end{align}
for some function $f(\epsilon,\hbar)$ independent of $\mathbf{T}$.
In fact, by Lemma 4.4 of \cite{Ale21a},
these two Kac--Schwarz operators uniquely specify the point in the Sato-Grassmannian corresponding to this tau-function.

We introduce the differential version of connected $n$-point function of this model similar to what we have done in \S \ref{sec:diff version}.
That is,
for any $(g,n)$,
\begin{align*}
	\omega^{V_r^{\epsilon}}_{g,n}(z_1,...,z_n):=
	\sum_{i_1,...,i_n=1}^\infty
	\frac{1}{i_1\cdots i_n}
	[\hbar^{2g-2+n}] 
	\left.\frac{\partial^n \log Z_{V_r^{\epsilon}}(\mathbf{T};\hbar)}
	{\partial T_{i_1} \dots \partial T_{i_n}}
	\right|_{\mathbf{T}=0}
	d z_1^{-i_1} \cdots d z_n^{-i_n}.
\end{align*}
In the following lemma,
we establish the differential version of the constraint provided by the Kac--Schwarz operator $Q_{V_r^{\epsilon}}(z)$.
\begin{lemma}\label{Lem:string-deformed-GKM}
	We assume that $\frac{1}{U_r^{\epsilon}(z)}=\sum_{k=r-d}^\infty c_k z^{-k+1}$.
	Then the $W_{1+\infty}$ realization of the operator $Q_{V_r^{\epsilon}}(z)$ is given by
	\begin{align*}
		\widehat{Q_{V_r^{\epsilon}}(z)}
		=-\frac{\partial}{\partial T_1}
		+\sum_{k=r-d}^\infty c_k L_{-k}.
	\end{align*}
	 As a consequence,
	 the constraint \eqref{eqn:hat QCepsilon} is equivalent to
	 \begin{align*}
	 	\mathop{\Res}_{z_0=\infty} z_0\cdot \omega^{V_r^{\epsilon}}_{g,n+1}(z_0,z_{[n]})
	 	=&-\sum_{j=1}^n d\Big( \frac{\omega^{V_r^{\epsilon}}_{g,n}(z_{[n]})}{dx_r^{\epsilon}(z_j)} \Big)
	 	+\delta_{(g,n)}^{(0,2)}\,d_1d_2\frac{{x_r^{\epsilon}}'(z_1)-{x_r^{\epsilon}}'(z_2)}{{x_r^{\epsilon}}'(z_1){x_r^{\epsilon}}'(z_2)(z_1-z_2)}
	 	&-\delta_{n,0}f_g,
	 \end{align*}
	 where $f_g$ is the coefficient of $\hbar^{2g-1}$ in $f(\epsilon,\hbar)$.
\end{lemma}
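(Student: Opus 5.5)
The plan is to copy the proofs of Proposition~\ref{lem:w1+infty of qv} and Proposition~\ref{prop:GKM-string-wgn}, with $x(z)$ replaced by $x_r^{\epsilon}(z)$. The only new features are the lower summation bound $r-d$ and the constant $f(\epsilon,\hbar)$ on the right-hand side.

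\textbf{Step 1: the $W_{1+\infty}$ realization.} Since ${x_r^{\epsilon}}'=U_r^{\epsilon}$ has a pole of order $d-(r-1)$ at $\infty$ (so $1/U_r^{\epsilon}$ vanishes to order $r-1-d$ there), we have $\frac{1}{U_r^{\epsilon}(z)}=\sum_{k\ge r-d}c_kz^{-k+1}$. Admissibility gives $d\le r$, hence $r-d\ge 0$, and this expansion is a genuine Laurent series in $z^{-1}$. I would rewrite
\[
Q_{V_r^{\epsilon}}(z)=z-\frac{\hbar}{U_r^{\epsilon}(z)}\Big(\partial_z-\frac12\frac{{U_r^{\epsilon}}'(z)}{U_r^{\epsilon}(z)}\Big).
\]
Under conjugation by $z^{1/2}$, each term $c_kz^{-k+1}\partial_z$, together with its share of the half-derivative correction, assembles into $c_k\cdot\big(-z^{-k}(z\partial_z+\frac{-k+1}{2})\big)=c_k l_{-k}$ up to the overall $\hbar$. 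This works because
\[
\frac{1}{2}\frac{{U_r^{\epsilon}}'}{(U_r^{\epsilon})^2}=-\frac12\partial_z\Big(\frac{1}{U_r^{\epsilon}}\Big)=\sum_k\frac{k-1}{2}c_kz^{-k},
\]
which is exactly the constant part of $l_{-k}$. The multiplication operator $z$ realizes as $-J_1=-\partial_{T_1}$. Then Example~\ref{exa:w1+infty} gives the stated formula, with $L_{-k}$ valid for every $k\ge0$, including $k=0$ when $d=r$.

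\textbf{Step 2: extracting coefficients.} Divide \eqref{eqn:hat QCepsilon} by $Z_{V_r^{\epsilon}}$ and work with $\log Z$. Then $\partial_{T_1}\log Z$ and $\sum c_kL_{-k}\log Z$ produce the same relations as in the proof of Proposition~\ref{prop:GKM-string-wgn}: the quadratic term $\frac12\sum abT_aT_b$ gives the $\delta^{(0,2)}$ contribution, and the linear derivative part gives the shifts $\alpha_{k_j}\to\alpha_{k_j-k}$. The new ingredient is the constant $f(\epsilon,\hbar)$, which contributes only at $n=0$. Taking $[\hbar^{2g-1}]$ of the $\mathbf T$-independent part produces $\langle\alpha_1\rangle_{g,1}=\dots-f_g$, and the sign works out to give $-\delta_{n,0}f_g$.

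\textbf{Step 3: the differential form.} Multiply by $dz_1^{-k_1}\cdots dz_n^{-k_n}/(k_1\cdots k_n)$ and sum. The identity
\[
\sum_{k\ge r-d}c_k\,dz^{-m-k}=-d\circ\frac{1}{dx_r^{\epsilon}(z)}\Big(\frac{dz^{-m}}{m}\Big)
\]
holds verbatim, and resumming $\sum c_{k_1+k_2}dz_1^{-k_1}dz_2^{-k_2}$ into $d_1d_2\frac{{x_r^{\epsilon}}'(z_1)-{x_r^{\epsilon}}'(z_2)}{{x_r^{\epsilon}}'(z_1){x_r^{\epsilon}}'(z_2)(z_1-z_2)}$ also goes through as before.

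\textbf{Main obstacle.} The delicate point is bookkeeping at the low end of the sum when $r-d$ is $0$ or $1$. For $k=0$ the term $L_0$ acts diagonally on the $T$'s, and for $k=1$ the quadratic sum $\frac12\sum_{a+b=k}abT_aT_b$ is empty. So the $\delta^{(0,2)}$ resummation must be checked to include precisely the pairs with $k_1+k_2\ge\max(r-d,2)$. The resummation identity $-\sum_m c_m\frac{z_1^{-m+1}-z_2^{-m+1}}{z_1-z_2}$ automatically kills the $m=0,1$ terms, so this should still match. Throughout, everything is understood order by order in $\epsilon$, as for the HQEs \eqref{eqn:HQE-Z}.
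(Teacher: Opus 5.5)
Your proposal is correct and matches the paper's argument: the paper's proof only says that everything goes through as in Proposition~\ref{lem:w1+infty of qv} and Proposition~\ref{prop:GKM-string-wgn}, and you carry out those two computations with $x$ replaced by $x_r^{\epsilon}$, isolating the only new features (the range $k\ge r-d\ge 0$, where the $k=0,1$ terms are harmless, and the constant $f$, which enters only at $n=0$ via $\langle\alpha_1\rangle_{g,1}=-f_g$). Two cosmetic remarks: $U_r^{\epsilon}$ has a pole of order $r-1-d$ (not $d-(r-1)$) at $\infty$, and the factor $\hbar$ in front of $\sum_k c_kL_{-k}$ that your Step~1 produces is needed to make the $\hbar$-grading in Step~2 homogeneous, even though the displayed formula in the statement omits it.
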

\begin{proof}
	The proof is completely the same as that in Proposition \ref{prop:GKM-string-wgn}.
\end{proof}

\subsection{Topological recursion for deformed generalized Kontsevich model}
Now we consider the spectral curve
$$
\mathcal{C}_{r}^{\epsilon} = \bigl(\mathbb{P}^1,\quad x^{\epsilon}_r(z),\quad y(z) = z \bigr).
$$
By applying topological recursion procedure, we obtain the differentials $\{\omega^{\epsilon}_{g,n}\}$. 
Moreover, choosing the local coordinate $\lambda = z$ near $z=\infty$, we define the invariants $\langle - \rangle^{\epsilon, z}_{g,n}$ and the generating series $Z_{\cC_{r}^{\epsilon}}(\mathbf{T};\hbar)$; see §\ref{sec:TR-CohFT-KP}.
We emphasize that the differentials $\{\omega^{\epsilon}_{g,n}\}$, the invariants $\langle - \rangle^{\epsilon, z}_{g,n}$, and the generating series $Z_{\cC_{r}^{\epsilon}}(\mathbf{T};\hbar)$ are all regarded as $\epsilon$-adic objects.

\begin{proposition}
	\label{prop:x-red-KP-deformed-GKM}
	The generating series $Z_{\cC_{r}^{\epsilon}}(\mathbf{T};\hbar)$ is a KP tau-function, with KP times $\{T_k\}$, and satisfies the $x$-reduction.
\end{proposition}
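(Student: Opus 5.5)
The plan is to reduce the statement to Proposition~\ref{prop:KP-TR} (from \cite{GJZ23}), applied order by order in the deformation parameters $\epsilon=(\epsilon_1,\dots,\epsilon_d)$. The complication is that $x^\epsilon_r(z)={V_r^\epsilon}'(z)$ is not a polynomial in $z$. It is the antiderivative of a rational function with poles at $z=\epsilon_i^{-1}$, and it may even contain logarithms. So the hypothesis ``$x\in\mathbb C[\lambda]$'' of Proposition~\ref{prop:KP-TR} does not hold literally. Instead we treat everything as $\epsilon$-adic, as the paper stipulates. Expanding $U_r^\epsilon(z)=U_r(z)\prod_i(1-\epsilon_i z)^{-1}$ as a power series in $\epsilon$, each coefficient of a monomial $\epsilon^{\mathbf m}$ is a polynomial in $z$. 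Hence, modulo $(\epsilon)^{M}$, the function $x^\epsilon_r(z)$ is a polynomial in $z$ whose degree grows with $M$. The admissibility condition \eqref{eqn:limit condition} ensures that the leading behaviour at $z=\infty$ is unaffected in the sense needed for the expansion at the boundary point.

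The key steps are as follows.

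\textbf{Step 1.} Fix a truncation order $M$ and let $x^{(M)}(z)$ be the degree-$<M$ Taylor polynomial of $x^\epsilon_r$ in $\epsilon$. This is an honest polynomial in $z$ whose coefficients are polynomials in $\epsilon$, and for generic $\epsilon$ it has only simple critical points. Apply Proposition~\ref{prop:KP-TR} to the curve $(\mathbb P^1,x^{(M)}(z),y=z)$ with local coordinate $\lambda=z$. This shows that the associated generating series $Z^{(M)}$ is a KP tau-function and that $x^{(M)}(z)\cdot$ is a Kac--Schwarz operator for it. By Remark after Definition~\ref{def:x-red-KP}, the possible constants in this statement are harmless.

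\textbf{Step 2.} Show that $\omega_{g,n}^{(M)}\equiv\omega^\epsilon_{g,n}\pmod{(\epsilon)^M}$ near $z_i=\infty$, for each fixed $(g,n)$. This is the locality of topological recursion. In \eqref{eqn:def-TR}, the sum of residues at the critical points can be rewritten as minus the residue at $z=\infty$ (and at the other poles of the integrand), because the kernel $K(z_0,z)=\frac{dz_0}{2(z_0-z)(z_0-\bar z)dx(z)}$ is explicit on $\mathbb P^1$. The recursion can then be phrased through expansions that depend on $x$ only through finitely many $\epsilon$-orders at each step. Alternatively, one can argue that the $\omega_{g,n}$ depend polynomially or analytically on the coefficients of $dx$ at the critical points. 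The critical points that escape to infinity as $\epsilon\to 0$ (the ones coming from the poles $\epsilon_i^{-1}$) need a separate check that their contributions vanish to the required $\epsilon$-order or are captured correctly. This is where admissibility, $d\le r$, should enter.

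\textbf{Step 3.} Both the Hirota bilinear identity and the $x$-reduced equations \eqref{eqn:HQE-Z} are closed under passing to the $(\epsilon)$-adic limit. Their coefficients at each order in $\epsilon$ involve only finitely many truncations, and $x^\epsilon_r(z)^k\equiv x^{(M)}(z)^k$ modulo $(\epsilon)^M$. Therefore the Hirota equation and \eqref{eqn:HQE-Z} for $Z_{\cC_r^\epsilon}$ follow coefficientwise from those for $Z^{(M)}$ as $M\to\infty$.

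The main obstacle is Step 2: controlling the critical points of $x^{(M)}$ that run off to infinity. These are spurious with respect to $\cC^\epsilon_r$, which has its own poles and critical points near $\epsilon_i^{-1}$. One must ensure that they do not contaminate the expansion at $z=\infty$ at low $\epsilon$-order. My first attempt would be a degree and homogeneity argument, in the spirit of the proof of Theorem~\ref{thm:GKM-TR}. Give $\epsilon_i$ a positive degree and show that the residues at such far-away critical points carry a positive power of $\epsilon$ that increases with $M$. If that fails, I would instead avoid truncation and prove the $x$-reduction directly, adapting the argument of \cite{GJZ23} to a rational $dx$ with a single boundary point.
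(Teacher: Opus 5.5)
There is a genuine gap, and it is not the one you flagged: truncating in $\epsilon$ does not commute with expanding at the boundary point. Proposition~\ref{prop:KP-TR}, applied to $(\mathbb P^1,x^{(M)},z)$, controls the Laurent expansion of $\omega^{(M)}_{g,n}$ at $z=\infty$ for \emph{fixed} $\epsilon$.

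Take $d=1$. Then ${x^{(M)}}'(z)=U_r(z)\sum_{n<M}(\epsilon z)^n$ has top term of order $\epsilon^{M-1}z^{r+M-2}$. It also has $M-1$ extra simple zeros at $z=\zeta/\epsilon$, where $\zeta^M=1\neq\zeta$. These are poles of $\omega^{(M)}_{g,n}$ lying between any fixed $z$ and $\infty$. So the boundary-point expansion lives in the regime $|\epsilon z|>1$. By contrast, $Z_{\cC^\epsilon_r}$ is $\epsilon$-adic, i.e.\ it uses $\frac{1}{1-\epsilon z}=\sum_n\epsilon^nz^n$.

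This is already fatal in genus zero. Up to sign and combinatorial factors, the three-point correlators are $-\Res_{z=\infty}z^K\,dz/x'(z)$, i.e.\ the coefficients of $1/x'$ at $\infty$. For $\cC^\epsilon_r$ this is $(1-\epsilon z)/U_r(z)$, which is polynomial in $\epsilon$. For the truncation it is $\frac{1-\epsilon z}{U_r(z)(1-(\epsilon z)^M)}=\frac{(\epsilon z)^{1-M}-(\epsilon z)^{-M}+\cdots}{U_r(z)}$ expanded at $z=\infty$. This produces negative powers of $\epsilon$: for instance, the coefficient at $K=r+M-3$ is exactly $\epsilon^{1-M}/u_{r-1}$, where $u_{r-1}$ is the leading coefficient of $U_r$.

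Consequently $Z^{(M)}$ is not even a power series in $\epsilon$. The congruence in your Step~2 is false for the expansion that Step~1 controls, and Step~3 has nothing to pass to the limit. If you instead read Step~2 $\epsilon$-adically (expand in $\epsilon$ at fixed $z$ first), the congruence becomes plausible. But Proposition~\ref{prop:KP-TR} says nothing about that expansion, so the KP property is not transferred.

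The observation that makes truncation unnecessary is that $\cC^\epsilon_r$ has no critical points near $\epsilon_i^{-1}$, contrary to what you wrote. The differential $dx^\epsilon_r=U_r(z)\,dz/\prod_i(1-\epsilon_iz)$ vanishes exactly at the $r-1$ zeros of $U_r$, which do not depend on $\epsilon$. The points $\epsilon_i^{-1}$ are only poles of $dx$, and these disappear $\epsilon$-adically.

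The paper therefore works directly with $x^\epsilon_r$ and reruns the argument of \cite[\S 5]{GJZ23}: the HQEs \eqref{eqn:HQE-Z} are transformed into HQEs for the ancestor potential, and then into the Witten--Kontsevich HQEs. It only checks three things:
\begin{enumerate}
\item $x^\epsilon_r$ is $\epsilon$-adically meromorphic and polynomial-like at its unique boundary point;
\item $1/dx^\epsilon_r$ and the $d\zeta^{\bar\beta}_k$ have no poles other than the critical points and $\infty$ (note that $1/U^\epsilon_r=\prod_i(1-\epsilon_iz)/U_r(z)$ is polynomial in $\epsilon$);
\item the critical points do not move with $\epsilon$.
\end{enumerate}
That is your fallback, and it is where the actual content of the proof lies.
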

\begin{proof}
	The proposition is equivalent to the validity of the following HQEs:
	$$
	\mathop{\Res}_{z=\infty}x_{r}^{\epsilon}(z)^k\cdot \Gamma^{+}Z_{\cC_{r}^{\epsilon}}({\bf T};\hbar)
	\cdot \Gamma^{-}Z_{\cC_{r}^{\epsilon}}({\bf T'};\hbar)=0,\qquad k\geq 0.
	$$
	Following the argument in~\cite[\S 5]{GJZ23}, one can prove these equations as follows:
	one first transforms these HQEs into the HQEs for the total ancestor potential, then into the HQEs for the Witten–Kontsevich tau-function, and finally proves them using the Witten–Kontsevich theorem.
	Since the proof follows the same steps as in~\cite{GJZ23} (for the case where $x$ is meromorphic and behaves like a polynomial near its unique boundary point), we only explain why their argument applies in our setting. 
	First, $x^{\epsilon}_{r}$ can be regarded as a meromorphic function in the $\epsilon$-adic sense.
	Second, the elements $\frac{1}{dx}$ and $d\zeta^{\bar\beta}_k(z)$ do not introduce additional poles beyond the critical points and the boundary point, again in the $\epsilon$-adic sense.
	Third, the critical points $z^\beta$ are independent of the deformation parameter $\epsilon$.
	These observations ensure that the construction of~\cite{GJZ23} carries over to the present situation without further modification, and repeating their argument yields the desired HQEs.
\end{proof}
\begin{theorem}
\label{thm:GKM-TR-deformed}
	For $g\geq 0$, $n\geq 1$, and $2g-2+n>0$, we have 
	$$
	\omega^{V_{r}^{\epsilon}}_{g,n}(z_1,\cdots,z_n)=\omega^{\epsilon}_{g,n}(z_1,\cdots,z_n)\, 
	{\text{ near }}\,  z_1,\cdots,z_n=\infty.
	$$
\end{theorem}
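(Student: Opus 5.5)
The plan is to mirror the proof of Theorem~\ref{thm:GKM-TR}, replacing the polynomial $x(z)$ by the $\epsilon$-adic function $x^\epsilon_r(z)$. With $\lambda=z$ near $\infty$, the expansions of $\omega^\epsilon_{0,1}=z\,dx^\epsilon_r(z)$ and $\omega^\epsilon_{0,2}=\frac{dz_1dz_2}{(z_1-z_2)^2}$ carry no $\langle\alpha_k\rangle^{\epsilon,z}_{0,1}$ or $\langle\alpha_k,\alpha_l\rangle^{\epsilon,z}_{0,2}$ terms. Here $\omega^\epsilon_{0,1}$ is a Laurent series in $z^{-1}$ order by order in $\epsilon$, and its part in $d z^{-k}$ is taken as the genus zero one-point data; the claim concerns only $2g-2+n>0$ in any case. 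The statement with $n\geq1$ is therefore equivalent to
\[
\log Z_{V_r^\epsilon}(\mathbf T;\hbar)-\log Z_{\cC_r^\epsilon}(\mathbf T;\hbar)\in \mathbb C[\![\epsilon]\!](\!(\hbar)\!),
\]
that is, the two series are equal up to a $\mathbf T$-independent factor. This is why the theorem excludes $n=0$: we never need to compare the constants $\omega_{g,0}$, and the degree argument used for the initial condition in Theorem~\ref{thm:GKM-TR} is no longer needed.

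First, both series are tau-functions of the $x^\epsilon_r$-reduced KP hierarchy. For $Z_{V_r^\epsilon}$ this is the HQE \eqref{eqn:HQE-Z}, and for $Z_{\cC_r^\epsilon}$ it is Proposition~\ref{prop:x-red-KP-deformed-GKM}. Second, I would show that $Z_{\cC_r^\epsilon}$ satisfies the string constraint $\widehat{Q_{V_r^\epsilon}}Z=f'\cdot Z$ for some constant $f'$. By Lemma~\ref{Lem:string-deformed-GKM}, for $n\geq1$ this amounts to the differential identity
\[
\mathop{\Res}_{z_0=\infty} z_0\,\omega^\epsilon_{g,n+1}(z_0,z_{[n]})=-\sum_j d\bigl(\omega^\epsilon_{g,n}/dx^\epsilon_r(z_j)\bigr)+\delta_{(g,n)}^{(0,2)}\,d_1d_2\frac{{x^\epsilon_r}'(z_1)-{x^\epsilon_r}'(z_2)}{{x^\epsilon_r}'(z_1){x^\epsilon_r}'(z_2)(z_1-z_2)},
\]
while the $n=0$ equations only fix the constant $f'$. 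I would prove this exactly as in the proof of Theorem~\ref{thm:GKM-TR}. The kernel is again $K_\beta=\frac{dz_0}{2(z_0-z)(z_0-\bar z)dx(z)}$, so $\Res_{z_0=\infty}z_0K_\beta=-\frac{1}{2dx(z)}$. The proof then splits into three identities:
\begin{itemize}
\item the partial fraction identity \eqref{eqn:B-x}, now for the rational function ${x^\epsilon_r}'=U^\epsilon_r$;
\item the Bergman kernel residue trick, which uses Lemma~\ref{lem:regularity-TR};
\item the vanishing \eqref{eqn:TR-string-vanishterm}.
\end{itemize}

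The main obstacle is the residue-theorem steps, where the old argument used that the only poles on $\mathbb P^1$ were the critical points and the point $z=z_j$. The function $1/dx^\epsilon_r=\prod_i(1-\epsilon_iz)\,dz^{-1}/U_r(z)$ introduces no new poles at finite points, so finite poles are still only at the zeros of $U_r$ (the critical points) and at $z=z_j$. What needs checking is the behaviour at $z=\infty$, where $1/U^\epsilon_r$ may now grow like $z^{d-r+1}$. Here I would use admissibility, $d\leq r$, which gives $c_k$ starting at $k=r-d\geq0$. I would verify that, order by order in $\epsilon$, the relevant integrands have no residue at $\infty$. For example, $\omega_{g,n}(z,\dots)\,\omega_{0,2}(z,z_j)/dx$ decays at least like $z^{-2}\cdot z^{-2}\cdot z^{d-r+1}dz$. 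For \eqref{eqn:B-x}, the right side $\frac{U(z_1)-U(z_2)}{U(z_1)U(z_2)(z_1-z_2)}$ has polynomial parts in $z_1,z_2$ that may be nonzero when $d\geq1$; I expect these to be matched by, or to be $d_1d_2$-exact with vanishing contribution from, terms in the expansion. This step I would treat most carefully, possibly arguing $\epsilon$-adically where everything reduces to the polynomial case at each order.

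Finally, two $x^\epsilon_r$-reduced tau-functions satisfying the same Kac--Schwarz constraints for $x^\epsilon_r$ and $Q_{V_r^\epsilon}$ define the same point of the Sato Grassmannian by Lemma 4.4 of \cite{Ale21a}. Hence $Z_{V_r^\epsilon}$ and $Z_{\cC_r^\epsilon}$ are proportional, and taking $\log$ and $n\geq1$ derivatives gives the theorem.
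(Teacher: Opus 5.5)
Your proposal is correct and follows essentially the same route as the paper. Both sides are $x^\epsilon_r$-reduced KP tau-functions satisfying the $Q_{V_r^\epsilon}$ constraint up to a constant, and the TR side is checked by the three identities from Theorem~\ref{thm:GKM-TR}. The only genuinely new term, which the paper identifies as the nonvanishing $\epsilon$-adic residue $\Res_{z=\infty}\tilde\omega_{0,2}(z,z)/dx^\epsilon_r(z)$, enters only the $n=0$ equation. It therefore fixes only the constant $f$, which is irrelevant for $n\geq 1$, exactly as you anticipated.

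The step you hedged on, \eqref{eqn:B-x}, closes directly. Since $d\leq r$, we have $1/{x^\epsilon_r}'(z)=\sum_\beta\frac{1}{{x^\epsilon_r}''(z^\beta)(z-z^\beta)}+p(z)$ with $\deg p\leq 1$. The difference quotient of $p$ is then a constant, which $d_1d_2$ kills. Note that the alternative you mention of reducing order by order in $\epsilon$ to the polynomial case would not work: the $\epsilon$-truncations of $x^\epsilon_r$ have more critical points than the zeros of $U_r$.
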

\begin{proof}
	Similar as the proof of Theorem~\ref{thm:GKM-TR}, we just need to check that the generating series $Z_{V^{\epsilon}_{r}}$ and $Z_{\cC_{r}^{\epsilon}}$ satisfy the same $x$-reduced KP integrability and the same string equation (they do have different initial condition: $Z_{V^{\epsilon}_{r}}({\bf 0};\hbar)=1$ and $Z_{\cC_{r}^{\epsilon}}({\bf 0};\hbar)=\exp(\sum_{g\geq 2}\hbar^{2g-2}\omega_{g,0}^{\epsilon})$).
	First, the $x$-reduced KP integrability has been established on both sides, see equation~\eqref{eqn:HQE-Z} and Proposition~\ref{prop:x-red-KP-deformed-GKM}.
	Second, the TR side of the differential version of the string equation can be proved by using the same method as that in the proof of Theorem~\ref{thm:GKM-TR}, the only difference is that $\mathop{\Res}_{z=\infty}\frac{\tilde \omega_{0,2}(z,z)}{dx^{\epsilon}_r(z)}$ is not vanished. 
	For example, for $U_{r}^{\epsilon}(z)=U_r(z)/(1-\epsilon z)$,
	$$
	\mathop{\Res}_{z=\infty}\frac{\tilde \omega_{0,2}(z,z)}{dx^{\epsilon}_r(z)}
	=\mathop{\Res}_{z=\infty}\bigg(\frac{1}{4}\frac{{x^{\epsilon}_r}''(z)^2}{{x^{\epsilon}_r}'(z)^3}
	-\frac{1}{6}\frac{{x^{\epsilon}_r}'''(z)}{{x^{\epsilon}_r}'(z)^2}\bigg)dz
	=-\frac{\epsilon^2}{12}\mathop{\Res}_{z=\infty}\frac{dz}{(1-\epsilon z)U_r(z)}.
	$$
	We note here that the residue is in $\epsilon$-adic, i.e, we should understand $\frac{1}{1-\epsilon z}$ as $\sum_{n\geq 0}\epsilon^n z^n$, thus the result is 
	$\frac{\epsilon}{12U_r(1/\epsilon)}$.
	This in fact determines $f(\epsilon,\hbar)=\frac{\hbar\cdot\epsilon}{24U_r(1/\epsilon)}$ in equation~\eqref{eqn:hat QCepsilon}.
	The proof is finished.
\end{proof}

\subsection{Relationship of deformed generalized Kontsevich model and $r$-spin geometry}
\label{sec:deformed-GKM-geo}
Now we focus on the case that $U_r^{\epsilon}= \frac{rz^{r-1}}{\prod_{i=1}^{d}(1-\epsilon_i z)}$, 
and establish the relationship between the deformed GKM and $r$-spin theory.
For the general case in which $U_r(z)$ is an arbitrary polynomial of degree $r-1$, an analogous correspondence holds, but is accompanied by a non-trivial mirror map $t = t(a)$, as in the polynomial $U(z)$ function cases.

Define $\lambda=\lambda(z)$ by $\lambda^r=x_{r}^{\epsilon}(z)$ and $\lim_{z\to 0}\lambda/z=1$, then we have 
$\lambda=z+O(z^2)$ and, conversely, $z=\lambda+O(\lambda^2)$.
Following similar discussions in~\cite{DOSS14,Eyn14}, 
we consider the local expansion of the Bergman kernel:
$$
\frac{dz_1dz_2}{(z_1-z_2)^2}=\frac{d\lambda_1 d\lambda_2}{(\lambda_1-\lambda_2)^2}
+\sum_{m,n\geq 0}Q_{m,n}\lambda_1^m\lambda_2^n d\lambda_1d\lambda_2,
$$
where $z_i$ are understood as functions of $\lambda_i$, i.e., $z_i=z_i(\lambda_i)$ for $i=1,2$.
Let $\{\phi_i\}$ be a basis for the state space of the $r$-spin theory, and let $\{\phi^i\}$ be its dual basis with respect to the pairing $\eta$. We define the $V$-matrix $\mathscr V^{r,\epsilon}(z,w) = \sum_{k,l\geq 0} \mathscr V_{k,l} z^k w^l$. Its matrix elements $\mathscr V_{k,l}^{a,b} = \eta(\phi^a, \mathscr V_{k,l} \phi^b)$ are given by:
$$
\mathscr V^{a,b}_{k,l}=\frac{(-1)^{k+l}}{(-r)}\cdot 
\prod_{i=0}^{k-1}\Big(\frac{a+1}{r}+i\Big)\cdot \prod_{j=0}^{l-1}\Big(\frac{b+1}{r}+j\Big)
\cdot Q_{rk+a,rl+b}.
$$
Next, we define the $R$-matrix $\mathscr R^{r,\epsilon}(z)$ via the relation:
$$
\mathscr V^{r,\epsilon}(z,w)=\frac{{\rm I} - \mathscr R^{r,\epsilon,*}(-z)\mathscr R^{r,\epsilon}(-w)}{z+w},
$$
where $\mathscr R^{r,\epsilon,*}$ is the adjoint of $\mathscr R^{r,\epsilon}$ with respect to $\eta$.
The proof that the $R$-matrix is well-defined follows from arguments similar to those presented in~\cite[Appendix B]{Eyn14}.
Similarly, we consider the local expansion of $y(z)$:
$$
y(z)=z=\sum_{n\geq 0}h_n \lambda^n.
$$
We define the $T$-vector $\mathscr T^{r,\epsilon}(z)=\sum_{k\geq 1} \mathscr T^a_k \phi_a z^k$.
Its vector elements $\mathscr T^a_k $ are given by
$$
\mathscr T^{a}_k(z)=\delta_{k,1}\delta_{a,0}-r\cdot (-1)^{k}\cdot \prod_{i=0}^{k}\Big(\frac{a+1}{r}+i\Big)\cdot h_{rk+a+1}.
$$

Recall the Witten top Chern class ${\Wit}^{r}_{g,n}$, we define the corresponding cohomological field theory (CohFT) $\Omega^{r}_{g,n}$ by 
$$
\Omega^{r}_{g,n}(\phi_{i_1},\cdots,\phi_{i_n}):=\pi_{*}{\Wit}^{r}_{g,n}(i_1,\cdots,i_n),
$$
where $\pi:\Mbar^{r}_{g,n}\to \Mbar_{g,n}$ is the forgetful map defined by forgetting the $r$-spin structure.
We introduce the deformed CohFT $\Omega^{r,\epsilon}$ by the $R$- and $T$-actions on $\Omega^{r}$:
$$
\Omega^{r,\epsilon}=\rR^{r,\epsilon}\cdot \mathscr T^{r,\epsilon}\cdot \Omega^{r}.
$$
We refer readers to~\cite{PPZ15} for the definition of the actions of $R$-matrix and $T$-vector on a CohFT.
We introduce the generating series
$$
\mathscr D^{r,\epsilon}({\bf t};\hbar):=
\exp\bigg(\sum_{2g-2+n>0}\frac{\hbar^{2g-2}}{n!}
\int_{\Mbar_{g,n}}\Omega^{r,\epsilon}\big({\bf t}(\psi_1),\cdots,{\bf t}(\psi_n)\big)\bigg),
$$
where ${\bf t}(\psi)=\sum_{k\geq 0}\sum_{i=0}^{r-2}t_k^i\phi_i\psi^k$.

\begin{theorem}
	We have the following equality
	$$
	Z_{V_r^{\epsilon}}\big({\bf T};\hbar/\sqrt{-r}\, \big)
	=\mathscr D^{r,\epsilon}(\hbar\cdot {\bf t(T)};\hbar)/\mathscr D^{r,\epsilon}({\bf 0};\hbar).
	$$
	Here the coordinate transformation ${\bf t(T)}$ is given as follows: introduce the polynomials
	$$
	\varphi_n^{b}(z)=-\frac{(-1)^n}{\sqrt{-r}}\prod_{i=0}^{n-1}\Big(\frac{b+1}{r}+i\Big)\,\cdot 
	\Big[\frac{1}{\lambda(z)^{rn+b+1}}\Big]_{-}\in \mathbb C[\epsilon][z^{-1}],
	$$
	and define the quantization: $\widehat{z^{-k}}:=kT_k$,
	then for $m\geq 0$ and $a=0,\cdots,r-2$,
	$$
	t_m^a({\bf T})=\sum_{n=0}^{m}\sum_{b=0}^{r-2}(\rR_{m-n})^{a}_b \,\cdot \widehat{\varphi_n^{b}(z)}.
	$$
\end{theorem}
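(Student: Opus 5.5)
The plan is to use the topological recursion of $\cC^{\epsilon}_r$ as a bridge. By Theorem~\ref{thm:GKM-TR-deformed}, for $n\geq 1$ and $2g-2+n>0$ the multi-differentials $\omega^{V_r^{\epsilon}}_{g,n}$ coincide with the TR differentials $\omega^{\epsilon}_{g,n}$ near $z=\infty$. Hence $\log Z_{V_r^{\epsilon}}({\bf T};\hbar)$ equals the $n\geq1$ part of $\log Z_{\cC^{\epsilon}_r}$ in the coordinate $z$. This is the reason the right-hand side is normalized by $\mathscr D^{r,\epsilon}({\bf 0};\hbar)$: the $\omega_{g,0}$ terms, and the unstable constant pieces, are discarded on both sides. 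It therefore suffices to identify, for $2g-2+n>0$ and $n\geq 1$, the expansion of $\omega^{\epsilon}_{g,n}$ in the basis $d\zeta^{\bar\beta}_k$ and its re-expansion near $\infty$ in powers of $z^{-1}$ with the correlators of $\Omega^{r,\epsilon}$ paired against $\widehat{\varphi^b_n}$.

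First I would apply the Dunin-Barkowski--Orantin--Shadrin--Spitz / Eynard identification \eqref{eqn:TR-CohFT}, but at the single irregular boundary point rather than at the simple critical points. Concretely, I would treat $\cC^{\epsilon}_r$, $\epsilon$-adically, as a deformation of the $r$-spin curve $x=z^r$, $y=z$ near $z=0$, with local coordinate $\lambda$ defined by $\lambda^r=x^{\epsilon}_r(z)$. For the undeformed curve, the Bouchard--Eynard recursion at the degenerate point $z=0$ reproduces $\Omega^r$ via the generalized Witten conjecture; one can also combine Theorem~\ref{thm:TR-rspin} at $t=0$ with \cite{BE13,BE17}.

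The deformation would then be encoded in the standard way, as in \cite{DOSS14,Eyn14}, using two ingredients. The regular part $Q_{m,n}$ of the Bergman kernel in the $\lambda$-coordinates produces the $V$-matrix, hence the $R$-matrix $\mathscr R^{r,\epsilon}$. The Taylor coefficients $h_n$ of $y$ in $\lambda$ produce the dilaton-shift $T$-vector $\mathscr T^{r,\epsilon}$. The specific normalizations $\frac{(-1)^{k+l}}{-r}\prod(\tfrac{a+1}{r}+i)$ come from the Laplace/Gamma factors relating $d\lambda^{-rk-a-1}$ to $\phi_a\psi^k$, exactly as for the functions $\chi^j_n$ in the proof of the previous theorem. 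The output of this step is
$$\omega^{\epsilon}_{g,n}=(\sqrt{-r})^{2g-2+n}\sum\int_{\Mbar_{g,n}}\Omega^{r,\epsilon}(\phi_{a_1}\psi^{k_1},\dots)\prod d\xi^{a_i}_{k_i},$$
where $d\xi^{a}_k$ is the $R$-transformed basis $\sum_{n}(\rR_{k-n})^{a}_{b}\,d\varphi^b_n$. Taking the negative part $[\cdot]_-$ in $z$ is legitimate because the polynomial part integrates to zero against the expansion at $\infty$.

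Next I would translate the expansion near $z=\infty$ into KP times. By definition, $\omega=\sum\langle\alpha_{k_1},\dots\rangle\prod dz^{-k}/k$, so a function $\sum_k c_kz^{-k}$ paired with the generating series corresponds to $\sum_k c_k\,kT_k=\widehat{\cdot}$. This yields ${\bf t}({\bf T})$ as stated, and the rescaling $\hbar\mapsto\hbar/\sqrt{-r}$ absorbs the factor $(\sqrt{-r})^{2g-2+n}$.

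The main obstacle is the first step. TR on $\cC^{\epsilon}_r$ is run at the simple critical points of $x^{\epsilon}_r$, whose number grows with the degree data $d$. Identifying the resulting CohFT, which is a Givental-type semisimple theory, with the $R$- and $T$-deformation of the non-semisimple $\Omega^r$ expanded around the single point $\lambda=0$ is not an instance of \eqref{eqn:TR-CohFT}. To handle it, I would first try to run the argument at the level of tau-functions instead, as elsewhere in the paper. Both sides are tau-functions of the $x$-reduced KP hierarchy: for $\mathscr D^{r,\epsilon}$, because the Givental action by upper-triangular $R$ and $T$ of this type, with the stated ${\bf t}({\bf T})$, conjugates the $r$KdV Kac--Schwarz operator $\lambda^r$ into $x^{\epsilon}_r(z)$ after the change $\lambda\leftrightarrow z$. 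I would then check the string equation of Lemma~\ref{Lem:string-deformed-GKM} via the dilaton/string behaviour of $\mathscr T^{r,\epsilon}$, and conclude by uniqueness (Lemma 4.4 of \cite{Ale21a}).
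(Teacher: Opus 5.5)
\textbf{The topological-recursion route.} Most of your proposal goes through topological recursion, and that route has a genuine gap, though not the one you name. For $U^{\epsilon}_r=rz^{r-1}/\prod_i(1-\epsilon_i z)$, the differential $dx^{\epsilon}_r$ has exactly one zero: $z=0$, of order $r-1$. The factor $1/\prod_i(1-\epsilon_i z)$ creates no new zeros, and its poles at $1/\epsilon_i$ are invisible $\epsilon$-adically. So there is no family of simple critical points growing with $d$, and the semisimplicity mismatch you worry about does not arise.

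The actual problem is different. Theorem~\ref{thm:GKM-TR-deformed} (through Lemma~\ref{lem:regularity-TR} and the local involution $\bar z$) and the expansion \eqref{eqn:TR-CohFT} are statements about Eynard--Orantin recursion at simple ramification points. Your first step identifies $\log Z_{V^{\epsilon}_r}$ with the TR generating series of $\cC^{\epsilon}_r$. Your second step is a DOSS/Eynard-type identification at $z=0$. Both would need Bouchard--Eynard versions at a degenerate point, or a limiting argument from a generic $U_r$. The paper provides neither, and neither do you. As written, that route does not prove the theorem.

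\textbf{Your final paragraph.} This is essentially the paper's proof. The paper uses Givental's formula $\mathscr D^{r,\epsilon}=[\widehat{\rR^{r,\epsilon}}\cD^{r}]$, with the $\mathscr T^{r,\epsilon}$-shift in the arguments. It then:
\begin{enumerate}
\item shows that this function at $\hbar\,{\bf t(T)}$ is a tau-function of the $x^{\epsilon}_r$-reduced KP hierarchy;
\item matches the string equation of Lemma~\ref{Lem:string-deformed-GKM};
\item notes the initial condition is trivial after dividing by $\mathscr D^{r,\epsilon}({\bf 0};\hbar)$;
\item concludes by uniqueness.
\end{enumerate}

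The one real difference is how the $x$-reduction is justified. The paper adapts \cite[\S 5]{GJZ23}: it passes from the Hirota equations for $x^{\epsilon}_r$ to equations for the ancestor potential, then to the $r$KdV equations, which hold by Faber--Shadrin--Zvonkine.

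Your mechanism reads $\widehat{\rR^{r,\epsilon}}$ together with ${\bf t(T)}$ as the bosonic realisation of the $\epsilon$-adic change of local coordinate $\lambda=z(1+O(\epsilon z))$. This is an element generated by the $L_n$ with $n\geq 1$: its linear part produces the $[\lambda^{-k}]_-$ and its quadratic part produces the $Q_{m,n}$. Such an element preserves KP. Since $x^{\epsilon}_r(z)=\lambda(z)^r$ identically, the Kac--Schwarz operator $\lambda^r$ simply becomes $x^{\epsilon}_r(z)$ in the new frame; nothing is conjugated into anything else. Similarly, the $\mathscr T^{r,\epsilon}$ time shift turns multiplication by $\lambda$ in the $r$-spin string operator into multiplication by $z=\sum_n h_n\lambda^n$. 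Meanwhile $\frac{1}{r\lambda^{r-1}}\partial_\lambda=\frac{d}{dx}$ is frame-independent up to the half-density factor. This treats both Kac--Schwarz constraints uniformly and avoids the detour through ancestor Hirota equations.

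To make it a proof you must check three things:
\begin{enumerate}
\item the $V$-matrix, with its Gamma-type normalisations, is exactly the quadratic part of that Virasoro-group element;
\item dropping the index $a=r-1$, both in $Q_{rk+a,rl+b}$ and in $h_{rk+a+1}$, is harmless. It is, because $\cD^{r}(\hbar{\bf t(T)};\hbar)$ does not depend on the $T_{rk}$, and the corresponding string-operator terms are polynomials in $x$, hence themselves Kac--Schwarz;
\item the overall constants are absorbed by the normalisation.
\end{enumerate}
The paper omits its computations at a comparable level of detail.
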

\begin{proof}
	The proof proceeds by verifying that both sides of the equality satisfy the characterizing properties (the $x$-reduced KP integrability, the string equation, and the initial condition) of the generating function.
	
	We note firstly that as Givental's reconstruction theorem, we have
		$$
		\mathscr D^{r,\epsilon}({\bf t};\hbar)=[\widehat{\rR^{r,\epsilon}}\cD^{r}]({\bf t};\hbar)
		=[e^{\frac{\hbar^{2}}{2}\rV^{r,\epsilon}(\pd_t,\pd_t)}\cD^{r}]({\rR^{r,\epsilon}}^{-1}\tilde{\bf t};\hbar).
		$$
	We refer readers to~\cite{Giv01} for the explanation of this formula, see also~\cite[\S 1]{GJZ23}.
	Then, analogous to the approach in~\cite[\S 5]{GJZ23}, it can be shown that the expression $[\widehat{\rR^{r,\epsilon}}\cD^{r}](\hbar\cdot {\bf t(T)};\hbar)$ constitutes a tau-function of the $x$-reduced KP hierarchy. While the proof in~\cite{GJZ23} relies on the Witten--Kontsevich theorem, our case necessitates the application of the generalized Witten Conjecture for $r$-spin theory, which has been established in~\cite{FJR13,FSZ06}.
	
	Furthermore, it is straightforward to verify that the string equation for $Z_{V_{r}^{\epsilon}}$ is equivalent to the corresponding string equation in $r$-spin theory under the prescribed coordinate transformation and $R$-matrix action.
	
	Finally, the equality is confirmed by the consistency of the initial conditions, which are trivial in this setting. 
	
	Given these three conditions (the reduced KP integrability, the string equation, and the initial data), the identity follows from the uniqueness of the tau function. We omit the computational details here.
\end{proof}

\begin{remark}
	In~\cite{Ale21a}, Alexandrov proposed a conjecture stating that the deformed GKM with potential $V_r^{\epsilon}= \int^{z}\int^{z}\frac{z^{r-1}}{1-\epsilon z}dzdz$, after a shift of variables,
	describes interesting enumerative geometry invariants in the $r$-spin case.
	Our results confirm Alexandrov’s conjecture at the level of CohFTs.
	From a geometric perspective, the $R$-matrix $\rR^{r,\epsilon}$ plays the role of the operator $\hat{\Delta}$ in~\cite[Theorem 1]{CG07} within twisted Gromov–Witten theory. 
	We expect such a geometric construction to exist in $r$-spin theory that yields the $R$-matrix $\rR^{r,\epsilon}$.
\end{remark}


\end{document}